\newcommand{\myparagraph}[1]{\paragraph{#1.}}
  \theoremstyle{plain}
  \newtheorem{theorem}{Theorem}[section]
  \newtheorem{lemma}[theorem]{Lemma}  
  \newtheorem{corollary}[theorem]{Corollary}  
  \newtheorem{fact}[theorem]{Fact}
  \newtheorem{observation}[theorem]{Observation}
  \newtheorem{lem}[theorem]{Lemma}
  \renewenvironment{lemma}{\begin{lem}}{\end{lem}}
  \crefname{lem}{Lemma}{Lemmas}
  \newtheorem{cor}[theorem]{Corollary}
  \renewenvironment{corollary}{\begin{cor}}{\end{cor}}
  \crefname{cor}{Corollary}{Corollaries}
  \theoremstyle{definition}
  \newtheorem{definition}[theorem]{Definition}
   \newtheorem{defi}[theorem]{Definition}
  \renewenvironment{definition}{\begin{defi}}{\end{defi}}
  \crefname{defi}{Definition}{Definitions}
  \newtheorem*{claim}{Claim}
   \title{Minimal Suffix and Rotation of a Substring in Optimal Time\footnote{This work is supported by Polish budget funds for science in 2013-2017 as a research project under the `Diamond Grant' program.}}
\author{Tomasz Kociumaka}
\affil{Institute of Informatics, University of Warsaw, Poland\\
    %ul. Stefana Banacha 2, 02-097 Warsaw, Poland\\
    \texttt{kociumaka@mimuw.edu.pl}}
\date{\vspace{-5ex}}
\newsavebox{\mybox}
\newenvironment{problem}[1]
{\begin{center}\begin{lrbox}{\mybox}\begin{minipage}{0.96\columnwidth}#1\\}
{\end{minipage}\end{lrbox}\fbox{\usebox{\mybox}}\end{center}}
\newcommand{\AMSQ}{\textsc{Auxiliary Minimal Suffix Queries}\xspace}
\newcommand{\MSQ}{\textsc{Minimal Suffix Queries}\xspace}
\newcommand{\GMSQ}{\textsc{Generalized Minimal Suffix Queries}\xspace}
\newcommand{\MRQ}{\textsc{Minimal Rotation Queries}\xspace}
\newcommand{\dol}{\$}
\newcommand{\eps}{\varepsilon}
\newcommand{\Oh}{\mathcal{O}}
\newcommand{\pred}{\mathrm{pred}}
\newcommand{\suc}{\mathrm{succ}}
\newcommand{\rank}{\mathrm{rank}}
\newcommand{\select}{\mathrm{select}}
\newcommand{\sub}{\subseteq}
\newcommand{\T}{\mathcal{T}}
\newcommand{\floor}[1]{\left\lfloor#1\right\rfloor}
\newcommand{\ceil}[1]{\left\lceil#1\right\rceil}
\newcommand{\Sig}{\Lambda}
\newcommand{\MinSuf}{\mathrm{MinSuf}}
\newcommand{\MaxSuf}{\mathrm{MaxSuf}}
\newcommand{\RMaxSuf}{\mathrm{MaxSuf}^R}
\newcommand{\RMinSuf}{\mathrm{MinSuf}^R}
\newcommand{\rev}{\prec^R}
\newcommand{\lcp}{\mathrm{lcp}}
\newcommand{\lcs}{\mathrm{lcs}}
\newcommand{\A}{\mathcal{A}}
\newcommand{\oid}{\mathrm{oid}}
\newcommand{\esigma}{\bar{\Sigma}}
\begin{document}

\maketitle

\begin{abstract}
For a text given in advance, the substring minimal suffix queries ask to determine the lexicographically minimal non-empty suffix of a substring specified by the location of its occurrence in the text.
We develop a data structure answering such queries optimally: in constant time after linear-time preprocessing.
This improves upon the results of Babenko et al.~(CPM 2014), whose trade-off solution is characterized by $\Theta(n\log n)$ product of these time complexities.
Next, we extend our queries to support concatenations of $\Oh(1)$ substrings, for which the construction and query time is preserved.
We apply these generalized queries to compute lexicographically minimal and maximal rotations of a given substring in constant time after linear-time preprocessing.

Our data structures mainly rely on properties of Lyndon words and Lyndon factorizations.
We combine them with further algorithmic and combinatorial tools, such as fusion trees and the notion of order isomorphism of strings. 
\end{abstract}

\section{Introduction}

Lyndon words, as well as the inherently linked concepts of the lexicographically minimal suffix 
and the lexicographically minimal rotation of a string, are one of the most successful concepts of combinatorics of words.
Introduced by Lyndon~\cite{Lyndon1954} in the context of Lie algebras, they are widely used in algebra and combinatorics.
They also have surprising algorithmic applications, including ones related to constant-space pattern matching~\cite{ConstantSpacePM}, maximal repetitions~\cite{B14bis}, and the shortest common superstring problem~\cite{DBLP:conf/soda/Mucha13}.

The central combinatorial property of Lyndon words, proved by Chen et al.~\cite{chen1958free}, states
that every string can be uniquely decomposed into a non-increasing sequence of Lyndon words. 
Duval~\cite{Duval} devised a simple algorithm computing the Lyndon factorization in linear time and constant space.  
He also observed that the same algorithm can be used to determine the lexicographically minimal and maximal suffix,
as well as the lexicographically minimal and maximal rotation of a given string.

The first two algorithms are actually on-line procedures: in linear time they allow computing the minimal and maximal suffix of every prefix of a given string. 
For rotations such a procedure was later introduced by Apostolico and Crochemore~\cite{DBLP:journals/iandc/ApostolicoC91}. 
Both these solutions lead to the optimal, quadratic-time algorithms computing the minimal and maximal suffixes and rotations for all substring of a given string. 
Our main results are the data-structure versions of these problems: we preprocess a given text $T$ to answer the following queries:

\begin{problem}{\MSQ}
Given a substring $v=T[\ell..r]$ of $T$, report the lexicographically smallest non-empty suffix of $v$ (represented by its length).
\end{problem}
\begin{problem}{\MRQ}
Given a substring $v=T[\ell..r]$ of $T$, report the lexicographically smallest rotation of $v$ (represented by the number of positions to shift).
\end{problem}

\noindent
For both problems we obtain optimal solutions with linear construction time and constant query time.
For \MSQ this improves upon the results of Babenko et al.~\cite{Babenko2016}, who developed a trade-off solution,
which for a text of length $n$ has $\Theta(n\log n)$ product of preprocessing and query time. 
We are not aware of any results for \MRQ except for a data structure only testing cyclic equivalence of two subwords~\cite{InternalPM}.
It allows constant-time queries after randomized preprocessing running in expected linear time.

An optimal solution for the \textsc{Maximal Suffix Queries} was already obtained in~\cite{Babenko2016},
while the \textsc{Maximal Rotation Queries} are equivalent to \MRQ subject to alphabet reversal. 
Hence, we do not focus on the maximization variants of our problems.

Using an auxiliary result devised to handle \MRQ, we also develop a data structure answering in $\Oh(k^2)$ time the following generalized queries:

\begin{problem}{\GMSQ}
Given a sequence of substrings $v_1,\ldots,v_k$ ($v_i=T[\ell_i..r_i]$), report the lexicographically smallest
non-empty suffix of their concatenation $v_1v_2\ldots v_k$ (represented by its length).\end{problem}

All our algorithms are deterministic procedures for the standard word RAM model with machine words of size $W=\Omega(\log n)$~\cite{DBLP:conf/stacs/Hagerup98}. 
The alphabet is assumed to be $\Sigma=\{0,\ldots,\sigma-1\}$ where $\sigma=n^{\Oh(1)}$, so that all letters of the input text $T$ can be sorted in linear time. 
 
\myparagraph{Applications} The last factor of the Lyndon factorization of a string is its minimal suffix.
As noted in~\cite{Babenko2016}, this can be used to reduce computing the factorization $v=v_1^{p_1}\cdots v_m^{p_m}$
of a substring $v=T[\ell..r]$ to $\Oh(m)$ \MSQ in $T$. Hence, our data structure determines the factorization in the optimal $\Oh(m)$ time.
If $v$ is a concatenation of $k$ substrings, this increases to $\Oh(k^2 m)$ time (which we did not attempt to optimize in this paper).

The primary use of \MRQ is \emph{canonization} of substrings, i.e., classifying them according to cyclic equivalence (conjugacy); see~\cite{DBLP:journals/iandc/ApostolicoC91}. As a proof-of-concept application of this natural tool, we propose counting distinct substring with a given exponent.

\myparagraph{Related work}
Our work falls in a class of substring queries: data structure problems solving basic stringology problems for substrings of a preprocessed text. This line of research, implicitly initiated by substring equality and longest common prefix queries 
(using suffix trees and suffix arrays; see~\cite{AlgorithmsOnStrings}),
now includes several problems related to compression~\cite{SubstringCompression,GeneralizedSubstringCompression,InternalPM,WaveletSuffixTree},
pattern matching~\cite{InternalPM}, and the range longest common prefix problem~\cite{RangeLCP,FasterRangeLCP,DBLP:conf/spire/AmirLT15}.  
Closest to ours is a result by Babenko et al.~\cite{WaveletSuffixTree}, which after $\Oh(n\sqrt{\log n})$-expected-time preprocessing allows determining the $k$-th smallest suffix of a given  substring, as well as finding the lexicographic rank of one substring
among suffixes of another substring, both in logarithmic time

\myparagraph{Outline of the paper}
In \cref{sec:prelim} we recall standard definitions and two well-known data structures.
Next, in \cref{sec:comb}, we study combinatorics of minimal suffixes, using in particular a notion of \emph{significant suffixes}, introduced by I et al.~\cite{DBLP:conf/cpm/INIBT13,DBLP:conf/spire/INIBT13} to compute Lyndon factorizations of grammar-compressed strings. 
\cref{sec:msq} is devoted to answering \MSQ. 
We use \emph{fusion trees} by Pătraşcu and Thorup~\cite{DBLP:conf/focs/PatrascuT14} to improve the query time from logarithmic to $\Oh(\log^* |v|)$, and then, by preprocessing shorts strings, we achieve constant query time.
That final step uses a notion of \emph{order-isomorphism}~\cite{DBLP:journals/ipl/KubicaKRRW13,DBLP:journals/tcs/KimEFHIPPT14} to reduce
the number of precomputed values. In \cref{sec:gmsq} we repeat the same steps for \GMSQ.
We conclude with \cref{sec:app}, where we briefly discuss the~applications.

\section{Preliminaries}\label{sec:prelim}
We consider strings over an alphabet $\Sigma=\{0,\ldots,\sigma-1\}$ with the natural order $\prec$.
The empty string is denoted as $\eps$. 
By $\Sigma^*$ ($\Sigma^+$) we denote the set of all (resp. non-empty) finite strings over $\Sigma$.
We also define $\Sigma^\infty$ as the set of infinite strings over $\Sigma$.
We extend the order $\prec$ on $\Sigma$ in the standard way to the \emph{lexicographic} order on $\Sigma^{*}\cup \Sigma^\infty$.

Let $w=w[1]\ldots w[n]$ be a string in $\Sigma^*$.
We call $n$ the \emph{length} of $w$ and denote it by $|w|$.
For $1\le i \le j \le n$, a string $u=w[i] \ldots w[j]$ is called a \emph{substring} of $w$.
By $w[i..j]$ we denote the occurrence of $u$ at position $i$, called a \emph{fragment} of~$w$.
A fragment of $w$ other than the whole $w$ is called a \emph{proper} fragment of $w$.
A fragment starting at position $1$ is called a \emph{prefix} of $w$
and a fragment ending at position $n$ is called a \emph{suffix} of $w$.
We use abbreviated notation $w[..j]$ and $w[i..]$ for a prefix $w[1..j]$ and a suffix $w[i..n]$ of $w$,
respectively. A \emph{border} of $w$ is a substring of $w$ which occurs both as a prefix and as a suffix of $w$.
An integer $p$, $1\le p \le |w|$, is a \emph{period} of $w$ if $w[i]=w[i+p]$ for $1\le i \le n-p$.
If $w$ has period $p$, we also say that is has \emph{exponent} $\frac{|w|}{p}$.
Note that $p$ is a period of $w$ if and only if $w$ has a border of length $|w|-p$.

We say that a string $w'$ is a \emph{rotation} (cyclic shift, conjugate) of a string $w$ if there exists a decomposition $w=uv$ such that $w'=vu$. 
Here, $w'$ is the left rotation of $w$ by $|u|$ characters and the right rotation of $w$ by $|v|$ characters.

\myparagraph{Enhanced suffix array}
The \emph{suffix array}~\cite{DBLP:journals/siamcomp/ManberM93} of a text $T$ of length $n$ is a permutation $SA$ of $\{1,\ldots,n\}$
defining the lexicographic order on suffixes $T[i..n]$: $SA[i]<SA[j]$ if and only if $T[i..n]\prec T[j..n]$. 
For a string $T$, both $SA$ and its inverse permutation $ISA$ take $\Oh(n)$ space and can be computed in $\Oh(n)$ time; see e.g.~\cite{AlgorithmsOnStrings}.
Typically, one also builds the $LCP$ table and extends it with a data structure for range minimum queries~\cite{DBLP:journals/siamcomp/HarelT84,LCA},
so that the longest common prefix of any two suffixes of $T$ can be determined efficiently.

Similarly to~\cite{Babenko2016}, we also construct these components for the reversed text $T^R$.
Additionally, we preprocess the $ISA$ table to answer range minimum and maximum queries.
The resulting data structure, which we call the \emph{enhanced suffix array} of $T$,
lets us perform many queries.

\begin{theorem}[Enhanced suffix array; see Fact~3 and Lemma~4 in~\cite{Babenko2016}]\label{thm:esa}
The enhanced suffix array of a text $T$ of length~$n$ takes $\Oh(n)$ space,
can be constructed in $\Oh(n)$ time, and allows answering the following queries in $\Oh(1)$ time
given fragments $x$, $y$ of $T$:
\begin{enumerate}[(a)]
  \item\label{it:cmp} determine if $x\prec y$, $x=y$, or $x\succ y$,
  \item\label{it:lcps} compute the the longest common prefix $\lcp(x,y)$ and the longest common suffix $\lcs(x,y)$,
  \item\label{it:lcpex} compute $\lcp(x^\infty,y)$ and determine if $x^{\infty} \prec y$, $x^\infty=y$, or $x^\infty\succ y$.
 \end{enumerate}
 Moreover, given indices $i,j$, it can compute in $\Oh(1)$ time the \emph{minimal} and the \emph{maximal} suffix among $\{T[k..n] :i\le k \le j\}$.
\end{theorem}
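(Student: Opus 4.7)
\medskip
\noindent\textbf{Proof proposal.}
The plan is to assemble the standard suffix-array components for $T$ and its reverse, augment them with constant-time range min/max structures, and then reduce each query to a constant number of calls to these primitives.

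\medskip
\noindent\textbf{Construction.} First I would build $\SA$, $\ISA$, and $\LCP$ for $T$ in $\Oh(n)$ time by any linear-time suffix-array algorithm, and preprocess $\LCP$ for constant-time range minimum queries, so that $\lcp(T[i..n],T[j..n])$ reduces to an RMQ between $\ISA[i]$ and $\ISA[j]$. I would repeat the whole construction for $T^R$, so that longest common suffixes in $T$ translate to longest common prefixes in $T^R$. Finally, I would preprocess $\ISA$ itself for constant-time range minimum and maximum queries. All components fit in $\Oh(n)$ space and are built in $\Oh(n)$ time.

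\medskip
\noindent\textbf{Queries (a) and (b).} For fragments $x=T[a..b]$ and $y=T[c..d]$, let $\ell = \lcp(T[a..n],T[c..n])$, obtained by an RMQ on $\LCP$. Then $\lcp(x,y)=\min(|x|,|y|,\ell)$, and the sign of the comparison in (a) is determined by whether $\lcp(x,y)$ equals $|x|$, equals $|y|$, or points to a mismatch position whose characters in $T$ can be read in $\Oh(1)$. The longest common suffix in (b) is symmetric: apply the same routine in the reverse-text structure.

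\medskip
\noindent\textbf{Query (c).} This is the delicate case. Let $m=|x|$. If $\ell<m$, the mismatch between $T[a..]$ and $T[c..]$ already lies within $x$, so $\lcp(x^\infty,y)=\min(|y|,\ell)$ and the order is decided by the corresponding character (or by $|y|\le\ell$, in which case $y$ is a proper prefix of $x^\infty$ and $y\prec x^\infty$). Otherwise $y$ starts with $x$, and I would argue using periodicity: setting $P=\lcp(T[c..n],T[c+m..n])$, the prefix $y[1..m+P]$ has period $m$, and since $y[1..m]=x$ this prefix equals $(x^\infty)[1..m+P]$. Moreover, one checks that at position $m+P+1$ (if it lies within $y$) the characters $y[m+P+1]$ and $(x^\infty)[m+P+1]=y[P+1]$ necessarily differ, by definition of $P$. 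Therefore $\lcp(x^\infty,y)=\min(|y|,m+P)$, computed by one additional $\LCP$ RMQ, and the order is determined by one character comparison (or by $y$ being a prefix of $x^\infty$, in which case $y\prec x^\infty$). Verifying this little periodicity identity is the main obstacle; everything else is bookkeeping around standard tools.

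\medskip
\noindent\textbf{Minimal/maximal suffix in a range of starting positions.} Since $T[k..n]\prec T[k'..n]$ iff $\ISA[k]<\ISA[k']$, the minimum (respectively maximum) of $\{T[k..n]: i\le k\le j\}$ is the suffix whose starting position is the argmin (respectively argmax) of $\ISA$ over $[i,j]$. This is one range minimum or maximum query on $\ISA$, answered in $\Oh(1)$.
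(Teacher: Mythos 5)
Your proposal is correct and takes essentially the same route as the paper, which itself just defers to Fact~3 and Lemma~4 of Babenko et al.: the construction (SA/ISA/LCP with RMQ for both $T$ and $T^R$, plus range min/max on $\ISA$) is exactly the one the paper describes, and the one nontrivial reduction—computing $\lcp(x^\infty,y)$ via $P=\lcp(T[c..],T[c+m..])$ and the period-$m$ argument for the prefix of length $m+P$—is argued correctly, including the mismatch at position $m+P+1$.
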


\myparagraph{Fusion trees} Consider a set $\A$ of $W$-bit integers (recall that $W$ is the machine word size).
\emph{Rank} queries given a $W$-bit integer $x$ return $\rank_{\A}(x)$ defined
as $|\{y\in \A : y < x\}|$. Similarly, \emph{select} queries given an integer $r$, $0\le r < |\A|$,
return $\select_{\A}(r)$, the $r$-th smallest element in $\A$, i.e., $x\in \A$ such that $\rank_{\A}(x)=r$.
These queries can be used to determine the \emph{predecessor} and the \emph{successor} of a $W$-bit integer $x$,
i.e., $\pred_{\A}(x)=\max\{y \in \A : y < x\}$ and $\suc_{\A}(x)=\min\{y\in \A : y \ge x\}$.
We answer these queries with dynamic fusion trees by Pătraşcu and Thorup~\cite{DBLP:conf/focs/PatrascuT14}.
We only use these trees in a static setting, but the original static fusion trees by Fredman and Willard~\cite{DBLP:journals/jcss/FredmanW93}
do not have an efficient construction procedure.

\begin{theorem}[Fusion trees~\cite{DBLP:conf/focs/PatrascuT14,DBLP:journals/jcss/FredmanW93}]\label{thm:fus}
There exists a data structure of size $\Oh(|\A|)$ which answers $\rank_\A$, $\select_{\A}$, $\pred_{\A}$, and $\suc_{\A}$ queries
in $\Oh(1+\log_W |\A|)$ time. Moreover, it can be constructed in $\Oh(|\A|+|\A|\log_W|\A|)$ time.
\end{theorem}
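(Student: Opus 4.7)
The plan is to build a B-tree with branching factor $B=\Theta(W^{1/5})$, so that the depth is $\Oh(\log_B|\A|)=\Oh(\log_W|\A|)$. The bulk of the proof then reduces to showing that, at an internal node storing $B$ full $W$-bit keys, a rank query on a query word $x$ can be answered in $\Oh(1)$ time using word-level parallelism; the root-to-leaf descent multiplies this cost by the depth to give the claimed query bound, and all four operations ($\rank$, $\select$, $\pred$, $\suc$) reduce to rank in the standard way.

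First I would describe the \emph{sketching} step. For any set of $B$ keys, there are at most $B-1$ bit positions at which some pair of stored keys differ; call these \emph{distinguishing positions}. Projecting a key onto these positions yields a compressed word that preserves equality but not order, and the Fredman--Willard trick inflates it with padding bits so that the resulting \emph{order-preserving} sketch has $\Oh(B^3)$ bits. For $B=\Theta(W^{1/5})$ this is small enough that all $B$ sketches fit in a single machine word. Then I would explain the local rank procedure: pack all sketches into one word separated by guard bits, replicate $\mathrm{sketch}(x)$ $B$ times in another word, perform a single subtraction, and extract the first position whose guard flips via a most-significant-bit operation (implementable on the word RAM in $\Oh(1)$ via a multiplication-based trick). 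Because sketching discards information about bits outside distinguishing positions, one then computes the longest common prefix of $x$ with a returned candidate key (again via an MSB-of-XOR query), uses that prefix to form a corrected query that agrees with the key precisely up to the divergence bit, and reruns the sketched search once. This still uses $\Oh(1)$ word operations per node.

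The main obstacle I expect is the $\Oh(|\A|+|\A|\log_W|\A|)$ construction bound. The classical static fusion trees of Fredman and Willard compute the multiplier used to realize order-preserving sketching through a nontrivial search that does not fit within the stated preprocessing. I would therefore follow the Pătraşcu--Thorup construction, which maintains the structure dynamically: each insertion costs $\Oh(\log_W|\A|)$ amortized time, local node rebuilds during splits avoid the expensive offline multiplier search, and inserting all $|\A|$ keys one by one yields the claimed preprocessing bound while keeping the total space at $\Oh(|\A|)$.
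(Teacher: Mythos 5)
The paper does not prove this statement---it is stated as a cited black-box result (the bracketed citations to Pătraşcu--Thorup and Fredman--Willard are the ``proof''), and the surrounding text even explains \emph{why} the newer dynamic fusion trees are invoked: the original Fredman--Willard structure lacks an efficient construction. So there is no in-paper argument to compare against; your proposal instead reconstructs the content of the cited works.

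As a reconstruction, the high-level plan (a B-tree of branching factor $B=W^{\Theta(1)}$, constant-time rank at each node via word-parallel comparison of sketches, descent depth $\Oh(\log_W|\A|)$, and reduction of the other three operations to rank) is the right picture, and the explanation of the desketch/correction step is accurate. Two things are off, though. First, a small numeric slip: the Fredman--Willard order-preserving sketch occupies $\Oh(r^4)$ bits for $r$ distinguishing positions, not $\Oh(r^3)$; your choice $B=\Theta(W^{1/5})$ is in fact the one that matches $B\cdot B^4\le W$, so the parameters you chose are consistent with the correct bound and nothing breaks. Second, and more substantively, the construction paragraph mischaracterizes what Pătraşcu--Thorup do. Their dynamic fusion node does not retain the Fredman--Willard multiplication-based order-preserving sketch and simply amortize the multiplier search over rebuilds; they replace the multiplier trick entirely with a different node encoding (fixed-position compression combined with a Patricia-style ``don't care'' matching scheme) that is designed from the start to admit $\Oh(1)$-time local updates without any offline search. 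Phrasing their contribution as ``avoiding the expensive multiplier search during splits'' suggests a patch to Fredman--Willard rather than the genuinely different node design they use. The conclusion you draw---insert the $|\A|$ keys one by one at $\Oh(1+\log_W i)$ each, giving $\Oh(|\A|+|\A|\log_W|\A|)$ total and $\Oh(|\A|)$ space---is correct, but the mechanism you credit it to is not quite the one in the cited paper.
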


% Note that we may associate each $x\in \A$ with a value $v(x)$ and store these values in an array sorted according to $x$.
% This way using a $\rank_{\A}$ query we may retrieve the value associated to any given element.

\section{Combinatorics of minimal suffixes and Lyndon words}\label{sec:comb}
For a non-empty string $v$ the \emph{minimal suffix} $\MinSuf(v)$ is the lexicographically smallest non-empty suffix $s$ of $v$.
Similarly, for an arbitrary string $v$ the \emph{maximal suffix} $\MaxSuf(v)$ is the lexicographically largest suffix $s$ of $v$.
We extend these notions as follows: for a pair of strings $v,w$ we define $\MinSuf(v,w)$ and $\MaxSuf(v,w)$
as the lexicographically smallest (resp. largest) string $sw$ such that $s$ is a (possibly empty) suffix of $v$.

In order to relate minimal and maximal suffixes, we introduce the \emph{reverse} order $\rev$ on $\Sigma$ and extend it to the \emph{reverse lexicographic} order,
and an auxiliary symbol  $ \$ \notin \Sigma$. We extend the order $\prec$ on $\Sigma$
so that $\mathtt{c} \prec \$$ (and thus $\dol \prec^R \mathtt{c}$) for every $\mathtt{c}\in \Sigma$.
We define $\esigma=\Sigma\cup \{\$\}$, but unless otherwise stated, we still assume that the strings considered belong to $\Sigma^*$.
\begin{observation}
If $u,v\in\Sigma^*$, then $u\$ \prec  v$ if and only if $v \prec^R u$.
\end{observation}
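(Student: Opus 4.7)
The plan is to prove this observation by reducing both sides to a statement about the first position where $u$ and $v$ disagree. Since $u\$$ extends $u$ by the single symbol $\$$ that is $\prec$-maximal in $\esigma$ (equivalently $\prec^R$-minimal), a direct case analysis based on where $u$ and $v$ first differ should suffice; I would split into a ``generic'' case (first mismatch inside $\min(|u|,|v|)$) and the prefix cases (one of $u,v$ is a prefix of the other).

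First I would handle the generic case: some index $i\le\min(|u|,|v|)$ satisfies $u[i]\ne v[i]$, and I take $i$ to be minimal. Then $u\$$ and $v$ coincide on positions $1,\ldots,i-1$ and differ at position $i$, where $u\$[i]=u[i]$ (because $i\le|u|$, so the $\$$ symbol has not yet been written). Consequently $u\$\prec v$ is equivalent to $u[i]\prec v[i]$. On the reverse-order side, the first position at which $v$ and $u$ differ is the same $i$, and $v\prec^R u$ reduces to $v[i]\prec^R u[i]$, which by the definition of $\prec^R$ on $\Sigma$ is the same as $u[i]\prec v[i]$. Thus in the generic case both sides are equivalent to $u[i]\prec v[i]$, so the biconditional holds.

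Second I would verify by direct unfolding the subcases in which one of $u,v$ is a prefix of the other. If $u=v$, then $v$ is a proper prefix of $u\$$, so $v\prec u\$$ and the left-hand side fails, while $v=u$ trivially gives $v\not\prec^R u$. If $u$ is a proper prefix of $v$, then $u\$$ and $v$ first differ at position $|u|+1$, where $u\$$ carries $\$$ and $v$ carries a letter of $\Sigma$; since $\$$ is $\prec$-maximal, $v\prec u\$$ and the left-hand side fails, and a parallel argument using that $\$$ is $\prec^R$-minimal shows that the right-hand side fails as well. The symmetric subcase, with $v$ a proper prefix of $u$, is handled identically.

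The only point needing mild care, and really the only obstacle, is pinning down the convention used to extend $\prec^R$ from $\Sigma$ to $\Sigma^*$ so that in every prefix subcase both sides evaluate consistently; this is forced by the role of $\$$ as the $\prec^R$-minimum symbol in $\esigma$. Once this bookkeeping is in place, the proof is a short direct verification by the case analysis above.
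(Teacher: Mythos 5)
The paper states this observation without proof, so there is nothing to compare against directly; but your argument has a concrete gap exactly where you wave it away. The generic mismatch case and the $u=v$ subcase are fine. The trouble is the claim that the subcase ``$v$ is a proper prefix of $u$'' is \emph{handled identically} to ``$u$ is a proper prefix of $v$.'' The two are not symmetric: the $\$$ is appended only to $u$, so swapping the prefix relation changes the structure of the comparison, and the case must be checked on its own. Concretely, suppose $v$ is a proper prefix of $u$, say $u=vz$ with $z\ne\eps$. Then $v$ is a proper prefix of $u\$=vz\$$, so $v\prec u\$$ and the left-hand side $u\$\prec v$ is false. For the right-hand side, the lexicographic extension of $\prec^R$ to $\Sigma^*$ that is compatible with the paper's use of $\RMaxSuf$ (the identity $\MinSuf(v\$)=\RMaxSuf(v)\$$ gives, e.g., $\RMaxSuf(aaa)=aaa$, forcing the longest suffix to be $\prec^R$-largest, hence proper prefixes to be $\prec^R$-smaller) yields $v\prec^R u$, so the right-hand side is true and the biconditional fails.

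Your closing remark --- that one can pin down the convention for $\prec^R$ ``so that in every prefix subcase both sides evaluate consistently,'' and that this is ``forced by the role of $\$$ as the $\prec^R$-minimum symbol'' --- overclaims. The ``prefix smaller'' convention breaks the subcase where $v$ is a prefix of $u$, while ``prefix larger'' breaks the subcase where $u$ is a prefix of $v$; no single convention repairs both, precisely because $\$$ is appended asymmetrically. (Also, in your treatment of the case ``$u$ a proper prefix of $v$,'' the appeal to a ``parallel argument using that $\$$ is $\prec^R$-minimal'' is not really a parallel argument: neither $u$ nor $v$ contains $\$$, so that claim needs to be spelled out in terms of whatever convention is being adopted.) The statement as literally printed appears to be missing a $\$$: the version $u\$\prec v\$$ if and only if $v\prec^R u$ does survive all four subcases under the ``prefix smaller'' convention, and your generic-case reasoning adapts to it verbatim. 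You should either prove that corrected version, or explicitly restrict the claim to the case where neither of $u,v$ is a prefix of the other, rather than asserting that the prefix subcases ``evaluate consistently.''
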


We use $\RMinSuf$ and $\RMaxSuf$ to denote the minimal (resp. maximal) suffix with respect to $\rev$.
The following observation relates the notions we introduced:
\begin{observation}\label{obs:minmax}
\begin{enumerate}[(a)]
  \item $\MaxSuf(v,\eps)=\MaxSuf(v)$ for every $v\in \esigma^*$,
    \item\label{it:minmax:c} $\MinSuf(vw) = \min(\MinSuf(v,w),\MinSuf(w))$ for every $v\in \esigma^*$ and $w\in \esigma^+$,
  \item\label{it:minmax:b} $\MinSuf(vc) = \MinSuf(v,c)$ for every $v\in \esigma^*$ and $c\in \esigma$,
    \item\label{it:minmax:d} $\MinSuf(v,w\$)=\RMaxSuf(v,w)\$$ for every $v,w\in \Sigma^*$,
    \item\label{it:minmax:e} $\MinSuf(v\dol)=\RMaxSuf(v)\dol$ for every $v\in \Sigma^*$.
  \end{enumerate}
\end{observation}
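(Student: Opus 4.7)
The plan is to treat the five items in order, letting each later part leverage its predecessors. Parts (a)--(c) should be immediate from the definitions. For (a), instantiating $\MaxSuf(v,w)$ at $w=\eps$ gives the lex-largest $s\eps=s$ over suffixes $s$ of $v$, which is $\MaxSuf(v)$. For (c), every non-empty suffix of $vc$ ends with $c$ and hence has the form $sc$ for some (possibly empty) suffix $s$ of $v$; minimizing over such strings is exactly $\MinSuf(v,c)$ by definition. For (b), I would partition the non-empty suffixes of $vw$ into those of length at most $|w|$ (the non-empty suffixes of $w$) and those of length greater than $|w|$ (strings of the form $sw$ with $s$ a non-empty suffix of $v$), noting that $w$ itself may be placed in either class; minimizing within each class and then taking the smaller of the two values yields the right-hand side.

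Part (d) is the crux and invokes Observation~2.1. The key claim to isolate is that for any two distinct suffixes $s,s'$ of $v$, one has $sw\$ \prec s'w\$$ iff $s'w \rev sw$. This will follow from Observation~2.1 together with the fact that $\$$ is the maximum character in $\prec$ and the minimum in $\rev$: appending $\$$ to two strings in $\Sigma^*$ and comparing them under $\prec$ reverses the direction of their $\rev$-comparison. Consequently, the suffix $s$ of $v$ that minimizes $sw\$$ under $\prec$ is precisely the suffix that maximizes $sw$ under $\rev$, giving $\MinSuf(v,w\$) = \RMaxSuf(v,w)\$$.

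Part (e) then reduces to a short chain of the previous items: $\MinSuf(v\$) = \MinSuf(v,\$)$ by (c), next $\MinSuf(v,\eps\$) = \RMaxSuf(v,\eps)\$$ by (d), and finally $\RMaxSuf(v,\eps) = \RMaxSuf(v)$ by the $\rev$-analog of (a).

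The main obstacle I anticipate is making the equivalence used in (d) fully rigorous, in particular the boundary situation in which one of $sw,s'w$ is a prefix of the other (which can indeed occur when $v$ has a periodic structure). A brief case analysis---handling the generic first-mismatch case and the prefix case separately---should confirm that appending $\$$ converts $\prec$-comparison into the reverse of $\rev$-comparison, after which the chain of implications closes mechanically and all five items fall out.
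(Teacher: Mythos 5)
Your proposal is correct. The paper gives no proof of this statement (it is labeled an Observation and left to the reader), so there is no official argument to compare against; your plan --- definitional unwinding for (a)--(c), a key comparison lemma plus short case analysis for (d), and chaining (c), (d), and the $\rev$-analogue of (a) for (e) --- is exactly the intended level of argument, and the details check out.

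One point worth flagging: the key claim you isolate for (d), namely that for $u,v\in\Sigma^*$ one has $u\$\prec v\$$ if and only if $v\prec^R u$, does \emph{not} follow as a black-box consequence of the preceding unnamed observation (which asserts $u\$\prec v \iff v\prec^R u$, with $\$$ appended only on the left), and the two statements genuinely part ways in the prefix case: for $u=\mathtt{ab}$, $v=\mathtt{a}$ one has $v\prec^R u$ and $u\$\prec v\$$, yet $u\$\succ v$. So the ``brief case analysis'' you defer to at the end is not a tidy-up step but the actual proof of the equivalence you need. Fortunately you anticipate precisely this, and it closes cleanly: if $u=v$ both sides fail; if $v$ is a proper prefix of $u$ then at position $|v|+1$ the string $u\$$ carries a $\Sigma$-letter against the $\$$ of $v\$$, so $u\$\prec v\$$, while $v\prec^R u$ because a proper prefix is $\prec^R$-smaller (and the case of $u$ a proper prefix of $v$ is symmetric, with both sides false); and if the first mismatch is at some $d\le\min(|u|,|v|)$ then $u\$\prec v\$\iff u[d]\prec v[d]\iff v[d]\prec^R u[d]\iff v\prec^R u$. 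With that equivalence in hand, the minimizer of $sw\$$ under $\prec$ over suffixes $s$ of $v$ coincides with the maximizer of $sw$ under $\prec^R$, which is (d), and (e) follows from (c), (d) at $w=\eps$, and the order-agnostic identity (a) applied to $\prec^R$.
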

A property seemingly similar to~(\ref{it:minmax:e}) is false for every $v\in \Sigma^+$: $\dol = \RMinSuf(v\$)\ne \MaxSuf(v)\$$.

A notion deeply related to minimal and maximal suffixes is that of a Lyndon word~\cite{Lyndon1954,chen1958free}.
A string $w\in \Sigma^+$ is called a \emph{Lyndon word} if $\MinSuf(w)=w$.
Note that such $w$ does not have proper borders, since a border would be a non-empty suffix smaller than $w$.
A \emph{Lyndon factorization}  of a string $u\in \esigma^*$ is a representation $u=u_1^{p_1}\ldots u_m^{p_m}$, where $u_i$ are Lyndon words
such that $u_1 \succ \ldots \succ u_m$.
Every non-empty word has a unique Lyndon factorization~\cite{chen1958free}, which can be computed in linear time and constant space~\cite{Duval}.
The following result provides a characterization of the Lyndon factorization of a concatenation of two strings:

\begin{lemma}[\cite{ParallelLyndon,DBLP:journals/tcs/DaykinIS94}]\label{lem:lyndon_conc}
Let $u=u_1^{p_1}\cdots u_m^{p_m}$ and $v=v_1^{q_1} \cdots v_\ell^{q_\ell}$ be Lyndon factorization.
Then the Lyndon factorization of $uv$ is $uv=u_1^{p_1}\cdots u_{c}^{p_c}z^k v_{d+1}^{q_{d+1}}\cdots v_\ell^{q_\ell}$
for integers $c,d,k$ and a Lyndon word $z$ such that $0\le c < m$, $0\le d\le \ell$, and $z^k = u_{c+1}^{p_{c+1}}\cdots u_m^{p_m}v_1^{q_1}\cdots v_{d}^{q_{d}}$.
\end{lemma}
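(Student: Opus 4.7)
The plan is to exploit the uniqueness of the Lyndon factorization. Let $uv = w_1^{r_1}\cdots w_s^{r_s}$ (with $w_1\succ\cdots\succ w_s$) denote this unique factorization, and aim to identify the three pieces of the claimed decomposition inside it.

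Define $c$ as the largest index such that $u_1^{p_1}\cdots u_c^{p_c}$ is a prefix of $w_1^{r_1}\cdots w_s^{r_s}$ (so $c=0$ is always admissible), and $d$ as the smallest index such that $v_{d+1}^{q_{d+1}}\cdots v_\ell^{q_\ell}$ is a suffix (so $d=\ell$ is always admissible). Since the prefix part lies in $u$ and the suffix part lies in $v$, they do not overlap, and the remaining middle $x := u_{c+1}^{p_{c+1}}\cdots u_m^{p_m} v_1^{q_1}\cdots v_d^{q_d}$ corresponds to a contiguous slice $z_1^{k_1}\cdots z_t^{k_t}$ of the factorization of $uv$, with $z_1\succ\cdots\succ z_t$.

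The main step, and the main obstacle, is to show $t=1$. Suppose for contradiction that $t\ge 2$. Note that $u_{c+1}^{p_{c+1}}\cdots u_m^{p_m}$ is itself a valid Lyndon factorization of the $u$-portion of $x$ (as an inherited decreasing chain), and similarly for $v_1^{q_1}\cdots v_d^{q_d}$. Let the boundary between $u$ and $v$ lie inside the factor $z_i^{k_i}$ of $x$'s factorization. In the aligned subcase (the boundary coincides with a copy-boundary of $z_i$), the $u$-portion of $x$ equals $z_1^{k_1}\cdots z_{i-1}^{k_{i-1}} z_i^a$ for some $a\ge 0$, itself a valid Lyndon factorization; uniqueness forces $z_j=u_{c+j}$ and $k_j=p_{c+j}$ for $j<i$, so the prefix match extends to $u_1^{p_1}\cdots u_{c+1}^{p_{c+1}}$, contradicting the maximality of $c$ (the degenerate case $i=1$ collapses into a symmetric contradiction on the $v$-side, violating the minimality of $d$). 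In the non-aligned subcase (the boundary falls strictly inside a copy of $z_i$, splitting it as $yy'$ with $y,y'$ non-empty), I track the Lyndon factorization of each side of $x$ and use the characteristic property of Lyndon words, namely that $z_i\prec s$ for every non-empty proper suffix $s$ of $z_i$, to derive a corresponding extension of either $c$ or $d$; this case is the delicate part of the argument.

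Once $t=1$ is established, we have $x=z^k$ for a Lyndon word $z$, and the claimed decomposition follows. The bound $c<m$ is enforced by observing that if the above construction yields $c=m$ (which happens only in the trivial case of no merging across the boundary, equivalently when $u_m\succ v_1$), we reassign $c:=m-1$, $z:=u_m$, $k:=p_m$, $d:=0$, preserving the Lyndon-power form $x=u_m^{p_m}$ and the decreasing-chain requirements among $u_1,\ldots,u_{m-1},z,v_1,\ldots,v_\ell$.
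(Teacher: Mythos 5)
The paper does not prove this lemma; it cites it as known (from Apostolico--Crochemore and Daykin--Iliopoulos--Smyth), so there is no in-text proof to compare against. Evaluating your sketch on its own terms: the overall plan (pin down the shared prefix blocks and shared suffix blocks of the factorization of $uv$, then argue the middle collapses to a single Lyndon power) is viable, but two things need attention. First, the definitions of $c$ and $d$ are ambiguous as written: ``$u_1^{p_1}\cdots u_c^{p_c}$ is a prefix of $w_1^{r_1}\cdots w_s^{r_s}$'' holds as a string inclusion for \emph{every} $c\le m$, since any prefix of $u$ is a prefix of $uv$. What you mean, and what the rest of the argument needs, is that the two Lyndon factorizations agree block-by-block: $w_j=u_j$ and $r_j=p_j$ for $1\le j\le c$, and dually for $d$.

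The genuine gap is the non-aligned subcase, and it is the crux, not a corner case. The single fact that closes it (and that you should state and prove separately) is: \emph{for a Lyndon word $z$, every non-empty prefix $\pi$ of $z^\infty$ has first Lyndon factor $\preceq z$} (indeed of length at most $|z|$, since a longer prefix of $z^\infty$ has period $|z|$, hence a proper border, hence is not Lyndon), \emph{and every non-empty suffix $\sigma$ of $z^r$ has last Lyndon factor $\succeq z$} (it is the minimal suffix of $\sigma$, itself a suffix of $z^r$, and such suffixes are $\succeq z$ because $z$ is border-free). With this in hand, when the $u$/$v$ boundary cuts a copy of $z_i$ as $\alpha\beta$ with $\alpha,\beta\ne\eps$, the $u$-portion of $x$ is $z_1^{k_1}\cdots z_{i-1}^{k_{i-1}}z_i^{a}\alpha$; the first fact gives that its first Lyndon factor after peeling off $z_1^{k_1}\cdots z_{i-1}^{k_{i-1}}$ is $\preceq z_i\prec z_{i-1}$, so that concatenated decomposition \emph{is} the Lyndon factorization, and by uniqueness it equals $u_{c+1}^{p_{c+1}}\cdots u_m^{p_m}$. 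If $i\ge 2$ this forces $u_{c+1}=z_1=w_{c+1}$, $p_{c+1}=k_1=r_{c+1}$, contradicting maximality of $c$; if $i=1$, the dual fact applied to the $v$-portion $\beta z_1^{k_1-a-1}z_2^{k_2}\cdots z_t^{k_t}$ forces the suffix match to extend to $z_t$, contradicting minimality of $d$ when $t\ge 2$. In fact, once you have the two boxed facts, the contradiction scaffolding becomes unnecessary: you can directly take $w_i^{r_i}$ to be the block of $uv$'s factorization straddling position $|u|$, show that $u$'s factorization is $w_1^{r_1}\cdots w_{i-1}^{r_{i-1}}$ followed by the factorization of the $u$-part of $w_i^{r_i}$ (and dually for $v$), and read off $c=i-1$, $z=w_i$, $k=r_i$, with $d$ the number of distinct factors of the $v$-part of $w_i^{r_i}$; your final reassignment handles the degenerate case where the boundary lies exactly between two blocks.
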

Next, we prove another simple yet useful property of Lyndon words:
\begin{fact}\label{fct:3}
Let $v,w\in \Sigma^+$ be strings such that $w$ is a Lyndon word.
If $v \prec w$, then $v^\infty \prec w$.
\end{fact}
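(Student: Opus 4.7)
The plan is to split on whether $v$ is a prefix of $w$. If not, then $v \prec w$ is witnessed by a mismatch at some position $p+1 \le |v|$ with $v[p+1] \prec w[p+1]$, and the same mismatch witnesses $v^\infty \prec w$ since $v^\infty$ begins with $v$. The interesting case is when $v$ is a proper prefix of $w$, which I would handle by decomposing $w$ according to the largest power of $v$ that it begins with.

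Write $w = v^k r$ with $k \ge 1$ chosen maximally. It is straightforward to exclude $r = \eps$: then $w = v^k$, and the Lyndon property forbids $v$ from being a proper border, forcing $k = 1$ and hence $w = v$, contradicting $v \prec w$. By maximality of $k$, $v$ is not a prefix of $r$. The main obstacle is to rule out that $r$ is a prefix of $v$ (which is only possible when $|r| < |v|$): otherwise $r$ would be a proper non-empty border of $w$ in the case $k = 1$, while for $k \ge 2$ the proper non-empty suffix $s = v^{k-1}r$ of $w$ would satisfy both $w \prec s$ (by the Lyndon property) and $s \prec w$ (after stripping the common prefix $v^{k-1}$ and using that $r$ is a proper prefix of $vr$), a contradiction.

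Thus $p := \lcp(v, r)$ satisfies $p < \min(|v|, |r|)$, so $v[p+1] \ne r[p+1]$. Since $r$ is a proper non-empty suffix of $w$, the Lyndon property gives $w \prec r$; as $w$ and $r$ share their first $p$ characters, the mismatch at position $p+1$ must satisfy $v[p+1] = w[p+1] \prec r[p+1]$. Finally, $v^\infty$ and $w = v^k r$ agree on their first $k|v|$ characters, and at position $k|v| + p + 1 \le |w|$ we find $v^\infty[k|v|+p+1] = v[p+1] \prec r[p+1] = w[k|v|+p+1]$, which yields $v^\infty \prec w$ as claimed.
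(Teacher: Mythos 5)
Your proof is correct. It uses the same core decomposition as the paper ($w = v^k r$ with $k\ge 1$ maximal) and the same key tool (the Lyndon property applied to a proper suffix of $w$), but you argue directly by exhibiting the mismatch position $k|v|+p+1$ that witnesses $v^\infty \prec w$, whereas the paper argues by contradiction: it assumes $v \prec w \prec v^\infty$, derives the chain $v^k s \prec s \prec v^\infty$, and concludes that $v$ would have to be a prefix of $s$, contradicting the maximality of $k$. Your version is somewhat longer but has the advantage of explicitly disposing of the boundary cases ($r=\eps$, and $r$ a proper prefix of $v$) that the paper's one-line "Since $k\ge 1$, $v$ must be a prefix of $s$" glosses over; the paper's version is more compact because the assumed bound $w\prec v^\infty$ does part of the case analysis for free.
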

\begin{proof}
For a proof by contradiction suppose that $v \prec w \prec v^\infty$.
Let $w=v^ks$, where $v$ is not a prefix of $s$. 
Note that $k\ge 1$ as $v$ must be a prefix of $w$. 
Because $w=v^k s\prec v^\infty$, we have $s \prec v^\infty$.
On the other hand, $w$ is a Lyndon word, so $v^k s = w \prec s$.
Consequently, $v^k s \prec s \prec v^\infty$. 
Since $k\ge 1$, $v$ must be a prefix of $s$, which contradicts the definition of $k$.
\end{proof} 

\subsection{Significant suffixes}
Below we recall a notion of \emph{significant suffixes}, introduced by I et al.~\cite{DBLP:conf/cpm/INIBT13,DBLP:conf/spire/INIBT13} in order to compute Lyndon factorizations of grammar-compressed strings. 
Then, we state combinatorial properties of significant suffixes; some of them are novel
and some were proved in \cite{DBLP:conf/spire/INIBT13}.

\begin{definition}[see \cite{DBLP:conf/cpm/INIBT13,DBLP:conf/spire/INIBT13}]\label{def:sig}
A suffix $s$ of a string $v\in \Sigma^*$ is a \emph{significant suffix} of $v$ if $sw = \MinSuf(v,w)$ for some $w\in \esigma^*$.
\end{definition}
Let $v=v_1^{p_1}\ldots v_{m}^{p_{m}}$ be the Lyndon factorization of a string $v\in \Sigma^+$.
For $1\le j \le m$ we denote  $s_j=v_j^{p_j}\cdots v_{m}^{p_{m}}$;
moreover, we assume $s_{m+1}=\eps$.
Let $\lambda$ be the smallest index such that $s_{i+1}$ is a prefix of $v_i$ for $\lambda \le i \le m$.
Observe that $s_{\lambda} \succ \ldots \succ s_m \succ s_{m+1}=\eps$, since $v_i$ is a prefix of $s_i$.
We define $y_i$ so that $v_{i} = s_{i+1}y_i$, and we set $x_i = y_is_{i+1}$. Note that $s_i = v_i^{p_i} s_{i+1}=(s_{i+1}y_{i})^{p_i}s_{i+1}=s_{i+1}(y_is_{i+1})^{p_i}=s_{i+1}x_i^{p_i}$.
We also denote $\Sig(w)=\{s_\lambda,\ldots,s_m,s_{m+1}\}$, $X(w)=\{x_\lambda^{\infty},\ldots,x_m^\infty\}$, and $X'(w)=\{x_\lambda^{p_\lambda},\ldots,x_m^{p_m}\}$. The observation below lists several immediate properties of the introduced strings:

\begin{observation}\label{obs:simple}
For each $i$, $\lambda \le i \le m$: (a) $x_i^\infty \succ x_i^{p_i}\succeq x_i \succeq y_i$, (b) $x_i^{p_i}$ is a suffix of $v$ of length $|s_{i}|-|s_{i+1}|$,
and (c) $|s_{i}| > 2|s_{i+1}|$. In particular, $|\Sig(v)|= \Oh(\log |v|)$.
\end{observation}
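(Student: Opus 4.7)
The plan is to dispatch the three parts in the order (b), (a), (c), since part (b) establishes the algebraic identity used by the other two, and (a) and (c) share the same nonemptiness argument.

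For part (b), I would just expand directly, substituting $v_i=s_{i+1}y_i$ into $s_i = v_i^{p_i}s_{i+1}$:
\[s_i = (s_{i+1}y_i)^{p_i}s_{i+1} = s_{i+1}(y_is_{i+1})^{p_i} = s_{i+1}x_i^{p_i}.\]
Since $s_i$ is itself a suffix of $v$, this exhibits $x_i^{p_i}$ as a suffix of $v$, and the length formula $|x_i^{p_i}|=|s_i|-|s_{i+1}|$ is immediate.

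For part (a), every one of the three inequalities is a prefix comparison. Because $y_i$ is a prefix of $x_i=y_is_{i+1}$ we get $y_i\preceq x_i$; because $x_i$ is a prefix of $x_i^{p_i}$ (with $p_i\ge 1$) we get $x_i\preceq x_i^{p_i}$; and because $x_i^{p_i}$ is a finite prefix of the infinite word $x_i^\infty$, we get $x_i^{p_i}\prec x_i^\infty$ \emph{provided} $x_i\neq\eps$. This last point is the only nontrivial thing to verify. I would show $y_i\neq\eps$, equivalently $s_{i+1}\neq v_i$: otherwise the Lyndon word $v_i$ would coincide with the string $s_{i+1}=v_{i+1}^{p_{i+1}}\cdots v_m^{p_m}$, and by uniqueness of the Lyndon factorization this collapses to $i+1=m$, $p_{i+1}=1$, and $v_{i+1}=v_i$, contradicting $v_i\succ v_{i+1}$.

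For part (c), the same argument shows that $s_{i+1}$ is a \emph{strictly shorter} prefix of $v_i$ (the case $i=m$ being trivial since $s_{m+1}=\eps$ and $v_m\neq\eps$), i.e., $|s_{i+1}|<|v_i|$. Combined with $p_i\ge 1$, this gives
\[|s_i| \;=\; p_i|v_i|+|s_{i+1}| \;\ge\; |v_i|+|s_{i+1}| \;>\; 2|s_{i+1}|.\]
Iterating this doubling from $s_m$ (of length at least $1$) up to $s_\lambda$, we obtain $|s_\lambda|\ge 2^{m-\lambda}$, and since $|s_\lambda|\le|v|$ this forces $m-\lambda=\Oh(\log|v|)$, hence $|\Sig(v)|=m-\lambda+2=\Oh(\log|v|)$.

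The entire argument is mechanical; the only non-routine step is invoking uniqueness of the Lyndon factorization to rule out $y_i=\eps$, and once that is in place, parts (a) and (c) follow without further work.
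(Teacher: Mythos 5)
The paper states this as an unproved ``observation,'' so your task is to supply a proof; what you wrote is correct and complete, and the order (b), then (a), then (c) is natural. One small remark on presentation: you frame $x_i\neq\eps$ as ``the only nontrivial thing to verify'' in part (a), and then prove the stronger fact $y_i\neq\eps$. In fact $x_i\neq\eps$ is immediate with no Lyndon-factorization argument at all, since $|x_i|=|y_i|+|s_{i+1}|=|v_i|\ge 1$ because $v_i$ is a (nonempty) Lyndon word. The uniqueness argument you give is genuinely needed, but only for part (c), where $|v_i|>|s_{i+1}|$ (equivalently $y_i\neq\eps$) is what makes $|s_i|=p_i|v_i|+|s_{i+1}|>2|s_{i+1}|$ go through. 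You prove exactly the right thing; you just attribute the need for it to (a) rather than (c). The uniqueness argument itself is sound: if $v_i=s_{i+1}$ with $i<m$, comparing the one-factor Lyndon factorization of the left side with $v_{i+1}^{p_{i+1}}\cdots v_m^{p_m}$ forces $i+1=m$, $p_m=1$, and $v_m=v_i$, contradicting $v_i\succ v_{i+1}$; the case $i=m$ is trivial since $s_{m+1}=\eps\neq v_m$. The doubling bound and the final count $|\Sig(v)|=m-\lambda+2=\Oh(\log|v|)$ are also correct.
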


The following lemma shows that $\Sig(v)$ is equal to the set of significant suffixes of $v$.
(Significant suffixes are actually defined in~\cite{DBLP:conf/spire/INIBT13} as $\Sig(v)$
and only later proved to satisfy our \cref{def:sig}.)
In fact, the lemma is much deeper; in particular, the formula for $\MaxSuf(v,w)$ is one of the key ingredients of our efficient algorithms answering \MSQ.

\begin{lemma}[I et al.~\cite{DBLP:conf/spire/INIBT13}, Lemmas~12--14]\label{lem:char}
For a string $v\in \Sigma^+$ let $s_i$, $\lambda$, $x_i$, and $y_i$, be defined as above. Then
$x_{\lambda}^\infty \succ x_\lambda^{p_\lambda} \succeq y_\lambda \succ x_{\lambda+1}^\infty \succ x_{\lambda+1}^{p_{\lambda+1}}  \succeq y_{\lambda+1}\succ \ldots \succ x_m^\infty \succ x_m^{p_m} \succeq y_m.$
Moreover, for every string $w\in \esigma^*$ we have
  $$\MinSuf(v,w)=\begin{cases}
  s_\lambda w &\text{if }w\succ x_{\lambda}^\infty,\\
  s_i w & \text{if }x_{i-1}^\infty\succ w \succ x_i^\infty\text{ for }\lambda < i \le m,\\
  s_{m+1}w & \text{if }x_{m}^\infty\succ w.\\
  \end{cases}$$
 In other words, $\MinSuf(v,w)=s_{m+1-r}w$ where $r=\rank_{X(v)}(w)$.
\end{lemma}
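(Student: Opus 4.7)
The plan is to prove the statement in three stages: first the chain of strict inequalities, then the formula restricted to the candidates $\{s_j w : \lambda \le j \le m+1\}$, and finally a domination argument for all other suffixes of $v$. For the chain, the within-block inequalities $x_i^\infty \succ x_i^{p_i} \succeq y_i$ are exactly \cref{obs:simple}(a). The cross-block step $y_i \succ x_{i+1}^\infty$ for $\lambda \le i < m$ is the main combinatorial obstacle: since $v_{i+1} \prec v_i$ and $v_i$ is Lyndon, \cref{fct:3} gives $v_{i+1}^\infty \prec v_i$. Because $\lambda \le i+1 \le m$, the prefix $s_{i+2}$ is common to $v_{i+1}$ and $v_i$, and the identities $v_{i+1}=s_{i+2}y_{i+1}$ and $v_i=s_{i+1}y_i=s_{i+2}\,x_{i+1}^{p_{i+1}}\,y_i$ yield $v_{i+1}^\infty=s_{i+2}\,x_{i+1}^\infty$; peeling off the common prefix $s_{i+2}\,x_{i+1}^{p_{i+1}}$ then converts $v_{i+1}^\infty \prec v_i$ directly into the desired $x_{i+1}^\infty \prec y_i$.

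With the chain in hand, I would derive the formula on the significant candidates. The identity $s_j = s_{j+1}\,x_j^{p_j}$ (from $s_j = v_j^{p_j}s_{j+1}$ and $v_j = s_{j+1}y_j$) reduces the comparison of $s_j w$ with $s_{j+1} w$ to comparing $x_j^{p_j} w$ with $w$, which a routine periodicity argument shows is equivalent to $w \succ x_j^\infty$. Combined with the strictly decreasing chain $x_\lambda^\infty \succ \cdots \succ x_m^\infty$, this shows that the sequence $j \mapsto s_j w$ on $\{\lambda,\ldots,m+1\}$ attains its unique minimum at $j = i^* := m+1-\rank_{X(v)}(w)$, which matches the claimed piecewise formula.

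Finally, I would rule out every suffix $s$ of $v$ not in $\{s_\lambda,\ldots,s_{m+1}\}$. Suffixes starting at a copy boundary inside $v_j^{p_j}$ with $j \ge \lambda$ have the form $s_{j+1}\,x_j^{p_j-t}$ and, by the same periodicity argument, satisfy $sw \succeq \min(s_j w, s_{j+1} w)$. Suffixes starting strictly inside a copy of $v_j$ have the form $v_j[k..]\,v_j^{p_j-t-1}s_{j+1}$ with $k>1$; the Lyndon property of $v_j$ forces $v_j[k..] \succ v_j$, and the first differing character already gives $sw \succ s_j w$. Candidates $s_j$ with $j < \lambda$ are killed by the minimality of $\lambda$: the inequality $v_j \succ v_\lambda$ between Lyndon factors, together with the decomposition $s_j = v_j^{p_j}\cdots v_{\lambda-1}^{p_{\lambda-1}}s_\lambda$, forces $s_j w \succ s_\lambda w$ for every $w$. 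In every case one reduces to a candidate handled above. The cross-block inequality is the delicate technical heart; the rest amounts to careful case analysis.
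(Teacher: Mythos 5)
The paper does not prove this lemma at all: it is cited from I et al.\ (Lemmas~12--14 of \cite{DBLP:conf/spire/INIBT13}), so there is no in-paper proof to compare against; what follows is an assessment of your argument on its own merits. Your structure is sound, and the two hardest steps are handled correctly. The cross-block inequality $y_i\succ x_{i+1}^\infty$ does follow from \cref{fct:3} by peeling the common prefix $s_{i+2}x_{i+1}^{p_{i+1}}=s_{i+1}$ out of $v_{i+1}^\infty\prec v_i$ (one should note $|y_i|\ge 1$ so $v_i$ strictly extends that prefix), and the rank formula does follow from the strictly decreasing chain $x_\lambda^\infty\succ\cdots\succ x_m^\infty$ together with the equivalence $s_jw\prec s_{j+1}w\iff w\succ x_j^\infty$, which is the standard ``$uw'\prec w'\iff u^\infty\prec w'$'' comparison. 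Your domination arguments for suffixes starting strictly inside a copy of $v_j$, and at inner boundaries of $v_j^{p_j}$, are also fine in spirit, though the boundary case should be phrased with $v_j$ rather than $x_j$ when $j<\lambda$, since $x_j$ is undefined there.

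The genuine gap is your third case, the boundary suffixes $s_j$ with $j<\lambda$. The stated justification, that $v_j\succ v_\lambda$ together with $s_j=v_j^{p_j}\cdots v_{\lambda-1}^{p_{\lambda-1}}s_\lambda$ forces $s_jw\succ s_\lambda w$, does not follow as written: $v_\lambda$ may be a proper prefix of $v_j$ (take $v_j=\mathtt{ab}$, $v_\lambda=\mathtt{a}$), in which case $s_jw$ and $s_\lambda w$ agree on their first characters and $v_j\succ v_\lambda$ by itself says nothing about the comparison once $w$ is appended. What you actually need is the uniform bound $s_\lambda w\prec v_{\lambda-1}^\infty$ for all $w$, and that is exactly where minimality of $\lambda$ must be used concretely: $s_\lambda$ is a prefix of $v_\lambda^\infty$ (since $s_{\lambda+1}$ is a prefix of $v_\lambda$); by \cref{fct:3}, $v_\lambda^\infty\prec v_{\lambda-1}$ with first mismatch at some $q\le|v_{\lambda-1}|$; and since $s_\lambda$ is \emph{not} a prefix of $v_{\lambda-1}$ (the content of minimality of $\lambda$), necessarily $q\le|s_\lambda|$, so the mismatch is already witnessed inside $s_\lambda$ and survives appending any $w$. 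From $s_\lambda w\prec v_{\lambda-1}^\infty\prec v_{\lambda-2}^\infty\prec\cdots$ the same periodicity equivalence gives $s_jw\succ s_{j+1}w$ for all $j<\lambda$ by an easy induction. Your one-line argument skips this step and would not stand without it.
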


We apply \cref{lem:char} to deduce several properties of the set $\Sig(v)$ of significant suffixes.

\begin{corollary}\label{cor:char}
For every string $v\in \Sigma^+$:
\begin{enumerate}[(a)]
  \item\label{it:rmax} the largest suffix in $\Sig(v)$ is $\RMaxSuf(v)$ and $\Sig(v)=\Sig(\RMaxSuf(v))$,
  \item\label{it:suf} if $s$ is a suffix of $v$ such that $|v|\le 2|s|$, then $\Sig(v)\subseteq \Sig(s)\cup \{\RMaxSuf(v)\}$.
\end{enumerate} 
\end{corollary}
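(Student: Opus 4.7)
The plan is to reduce both parts to the explicit formula $\MinSuf(v,w)=s_{m+1-r}w$ from \cref{lem:char}, where $r=\rank_{X(v)}(w)$.

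For part~(a), I would first identify $\RMaxSuf(v)$ with $s_\lambda$. Combining \cref{obs:minmax}(\ref{it:minmax:b}) and (\ref{it:minmax:e}) yields $\RMaxSuf(v)\$ = \MinSuf(v\$) = \MinSuf(v,\$)$. Because $\$$ exceeds every letter of $\Sigma$, each infinite word $x_i^\infty\in\Sigma^\infty$ satisfies $x_i^\infty\prec\$$; hence $\rank_{X(v)}(\$)=|X(v)|=m-\lambda+1$, and \cref{lem:char} gives $\MinSuf(v,\$)=s_\lambda\$$, so $\RMaxSuf(v)=s_\lambda$. Since $\Sig(v)=\{s_\lambda,\ldots,s_{m+1}\}$ is strictly decreasing both in length and in lexicographic order (as noted right before the definition of $X(v)$), this identifies $s_\lambda$ as the largest element of $\Sig(v)$. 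The equality $\Sig(v)=\Sig(s_\lambda)$ is then a definitional check: the Lyndon factorization of $s_\lambda$ is $v_\lambda^{p_\lambda}\cdots v_m^{p_m}$, producing exactly the tail $s_\lambda,\ldots,s_{m+1}$ of the sequence for $v$; moreover, the parameter for $s_\lambda$ is again $\lambda$, because the defining prefix condition already holds for every $i\ge\lambda$ in $v$.

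For part~(b), I would rely on the geometric length decrease from \cref{obs:simple}: $|s_{\lambda+1}|<|s_\lambda|/2\le|v|/2\le|s|$. Every $s_i$ with $i>\lambda$ is therefore a suffix of $v$ of length at most $|s|$, and hence a suffix of $s$. I would then show that any such $s_i$ belongs to $\Sig(s)$: given $w\in\esigma^*$ witnessing $s_iw=\MinSuf(v,w)$, every suffix of $s$ is a suffix of $v$, so $\MinSuf(v,w)\preceq\MinSuf(s,w)$; since $s_i$ is itself a suffix of $s$, $\MinSuf(s,w)\preceq s_iw$. Together these force $\MinSuf(s,w)=s_iw$, so $s_i\in\Sig(s)$. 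The only element of $\Sig(v)$ that may fail the length bound is $s_\lambda$ (when $|s_\lambda|>|s|$), but $s_\lambda=\RMaxSuf(v)$ by part~(a), yielding $\Sig(v)\subseteq\Sig(s)\cup\{\RMaxSuf(v)\}$.

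The main obstacle I anticipate is the bookkeeping in part~(a): cleanly verifying that the parameter $\lambda$ defined for $s_\lambda$ matches the $\lambda$ defined for $v$, which is what makes the significant-suffix set stable under passing to $\RMaxSuf(v)$. Part~(b) is then largely a length argument isolating $s_\lambda$ as the unique significant suffix that can exceed $|s|$ in length.
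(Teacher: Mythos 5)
Your proposal is correct and follows essentially the same route as the paper: part~(a) identifies $s_\lambda=\RMaxSuf(v)$ via $\MinSuf(v,\$)$, $x_\lambda^\infty\prec\$$, and \cref{lem:char}, then transfers $\Sig(\cdot)$ through the Lyndon factorization of $s_\lambda$; part~(b) uses $|s_i|>2|s_{i+1}|$ to show each $s_i$ with $i>\lambda$ is a suffix of $s$ and re-uses the witness $w$ from \cref{lem:char} to conclude $s_i\in\Sig(s)$. The only cosmetic differences are that you spell out the rank computation $\rank_{X(v)}(\$)=|X(v)|$ and the two-sided inequality $\MinSuf(v,w)\preceq\MinSuf(s,w)\preceq s_iw$, whereas the paper states these steps more tersely.
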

\begin{proof}
To prove (\ref{it:rmax}), observe that $x_{\lambda} \in \Sigma^+$, so $ x_{\lambda}^\infty\prec \dol$.
Consequently, \cref{lem:char} states that $s_{\lambda}\$ = \MinSuf(v,\$)$.
However, we have $\MinSuf(v,\$)=\RMaxSuf(v)\$$ by \cref{obs:minmax}(\ref{it:minmax:e}),
and thus $s_{\lambda}=\RMaxSuf(v)$. 
Uniqueness of the Lyndon factorization implies that $u_{\lambda}^{p_{\lambda}}\cdots u_m^{p_m}$ is the Lyndon factorization of $s_{\lambda}$, 
and hence by definition of $\Sig(\cdot)$ we have $\Sig(v)=\Sig(s_{\lambda})$.

For a proof of (\ref{it:suf}), we shall show that for $i\ge \lambda+1$ the string $s_{i}$ is a significant suffix of $s$.
Note that, by \cref{obs:simple}, $s_i$ is a suffix of $s$, since $2|s_{i}|< |s_{i-1}|\le |s_{\lambda}|\le |v|\le 2|s|$.
The suffix $s_{m+1}=\eps$ is clearly a significant suffix of $s$, so we assume $\lambda < i\le m$.
By \cref{lem:char}, one can choose $w\in \esigma^*$ (setting $x_{i-1}^\infty \succ w \succ x_i^\infty$) so that $s_{i}w = \MinSuf(v,w)$.
However, this also implies $s_{i}w = \MinSuf(s,w)$ because all suffixes of $s$ are suffixes of $v$.
Consequently, $s_{i}$ is a significant suffix of $s$, as claimed.
\end{proof}

Below we provide a precise characterization of $\Sig(uv)$ for $|u|\le |v|$ in terms of $\Sig(v)$ and $\RMaxSuf(u,v)$. 
This is another key ingredient of our data structure, in particular letting us efficiently compute
significant suffixes of a given fragment of $T$.

\begin{lemma}\label{lem:extend}
Let $u,v\in \Sigma^+$ be strings such that $|u|\le |v|$. 
Also, let $\Sig(v)=\{s_{\lambda},\ldots,s_{m+1}\}$, $s'=\RMaxSuf(u,v)$,
and let $s_i$ be the longest suffix in $\Sig(v)$ which is a prefix of $s'$.
Then
$$\Sig(uv)=\begin{cases}
\{s_{\lambda},\ldots,s_{m+1}\} &\text{if }s'\preceq^R s_{\lambda}\text{ }(\text{i.e., if }s_\lambda \preceq s'\text{ and }i \ne \lambda), \\
\{s',s_{i+1},\ldots,s_{m+1}\} &\text{if }s'\succ^R s_{\lambda}\text{, }i\le m\text{, and }|s_{i}|-|s_{i+1}|\text{ is a period of }s',\\
\{s',s_i, s_{i+1},\ldots,s_{m+1}\} & \text{otherwise}.
\end{cases}$$
Consequently, for every $w\in \esigma^*$, we have $\MinSuf(uv,w)\in \{\MaxSuf^R(u,v)w,\MinSuf(v,w)\}$.
\end{lemma}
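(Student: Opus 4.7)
The proof plan begins by noting that $|u|\le |v|$ implies $|uv|\le 2|v|$, so \cref{cor:char}(\ref{it:suf}) applied with the suffix $v$ of $uv$ yields the key containment $\Sig(uv)\subseteq \Sig(v)\cup\{\RMaxSuf(uv)\}$. To identify $\RMaxSuf(uv)$, I would partition the nonempty suffixes of $uv$ into (a) nonempty suffixes of $v$, with $\rev$-max equal to $s_\lambda=\RMaxSuf(v)$ by \cref{cor:char}(\ref{it:rmax}), and (b) suffixes of the form $tv$ for a nonempty suffix $t$ of $u$. Since $s'=\RMaxSuf(u,v)$ is the $\rev$-max over all $tv$ (including $t=\eps$, which gives $v\preceq^R s_\lambda$), a short case analysis shows $\RMaxSuf(uv)=\max_\rev(s_\lambda,s')$. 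This splits the proof cleanly: in Case~1 ($s'\preceq^R s_\lambda$) we have $\RMaxSuf(uv)=s_\lambda\in\Sig(v)$, hence $\Sig(uv)\subseteq\Sig(v)$; in Cases~2 and~3 ($s'\succ^R s_\lambda$), $s'=tv$ with nonempty $t$, so $|s'|>|v|\ge|s_j|$ for every $s_j\in\Sig(v)$, making $s'$ a new element not contained in $\Sig(v)$.

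For \textbf{Case 1}, the reverse inclusion $\Sig(v)\subseteq\Sig(uv)$ is what must be proved. I would invoke \cref{lem:lyndon_conc} to write the Lyndon factorization of $uv$ as $u_1^{a_1}\cdots u_c^{a_c}z^k v_{d+1}^{p_{d+1}}\cdots v_m^{p_m}$, and argue that $s'\preceq^R s_\lambda$ forces $d+1\le\lambda$: otherwise the merged Lyndon word $z$ would extend past index $\lambda$ in $v$'s factorization, producing a suffix of $uv$ strictly larger under $\rev$ than $s_\lambda$ and starting inside $u$, which would give $s'\succ^R s_\lambda$. With $d+1\le\lambda$, the factors $v_\lambda,\ldots,v_m$ are preserved in $uv$'s Lyndon factorization, so each $s_j$ for $\lambda\le j\le m+1$ is a significant suffix of $uv$.

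For \textbf{Cases 2 and 3}, the nested-prefix property of significant suffixes (every element of $\Sig(uv)$ is a prefix of $\RMaxSuf(uv)=s'$, as follows from the construction in \cref{obs:simple}) forces $\Sig(uv)\cap\Sig(v)\subseteq\{s_j:s_j\text{ is a prefix of }s'\}=\{s_j:j\ge i\}$ by the very definition of $i$. For $j>i$, the suffix $s_j$ is a prefix of $s_i$ and thus of $s'$, and using \cref{lem:char} applied to $v$ I would exhibit a witness $w$ for which $s_j w=\MinSuf(v,w)$ and simultaneously $s_j w \prec s'w$, hence $s_j\in\Sig(uv)$. The delicate case is $s_i$: whether it survives is governed by the merged Lyndon word $z$ at the boundary of $uv$'s factorization. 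Combining \cref{lem:lyndon_conc} with \cref{fct:3} (which prevents $s'$ from being "too periodic" with respect to a Lyndon word), one can show that the merged factor absorbs the $v_i$-structure exactly when the period $p_i|v_i|=|s_i|-|s_{i+1}|$ of $s_i$ is also a period of $s'$; this distinguishes Case~2 (absorption, $s_i\notin\Sig(uv)$) from Case~3 (no absorption, $s_i\in\Sig(uv)$).

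Finally, the concluding statement $\MinSuf(uv,w)\in\{\RMaxSuf(u,v)w,\MinSuf(v,w)\}$ follows from \cref{lem:char} applied to $uv$: any minimal suffix of $uv$ followed by $w$ is of the form $s^{uv}w$ for some $s^{uv}\in\Sig(uv)$, and the case analysis above identifies every $s^{uv}$ as either $s'=\RMaxSuf(u,v)$ or an element $s_j\in\Sig(v)$; in the latter case the same witness verifying $s_j\in\Sig(uv)$ also certifies $s_j w=\MinSuf(v,w)$ because no extension into $u$ beats $s_j w$ for that~$w$. The hard step will be the periodicity dichotomy for $s_i$ in Cases~2 versus~3, which requires careful bookkeeping of how the Lyndon merger of \cref{lem:lyndon_conc} interacts with the period structure of $s'$ at the $u$-$v$ boundary; the remaining parts of the argument should reduce to routine applications of \cref{lem:char,lem:lyndon_conc} and \cref{cor:char}.
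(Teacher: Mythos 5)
Your high-level scaffolding matches the paper's: both start from \cref{cor:char}(\ref{it:suf}) applied with the suffix $v$ of $uv$ (using $|uv|\le 2|v|$) to get $\Sig(uv)\subseteq\Sig(v)\cup\{\RMaxSuf(uv)\}$, identify $\RMaxSuf(uv)\in\{s_\lambda,s'\}$, and split into cases accordingly. The treatment of the final ``consequently'' claim is also correct and follows the same route as the paper (it drops out of the inclusion $\Sig(uv)\subseteq\Sig(v)\cup\{s'\}$). However, there are substantive differences and gaps in the body of the argument.

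For Case~1 you reach for \cref{lem:lyndon_conc} and assert, without proof, that $s'\preceq^R s_\lambda$ forces the merge index $d+1$ to satisfy $d+1\le\lambda$. The paper avoids this entirely: since $\RMaxSuf(uv)=s_\lambda$, \cref{cor:char}(\ref{it:rmax}) applied to $uv$ gives $\Sig(uv)=\Sig(\RMaxSuf(uv))=\Sig(s_\lambda)$, which equals $\Sig(v)$ by the same corollary applied to $v$. Your route would need a genuine argument for the claimed bound on $d$, and as written it is not established. For Cases~2 and~3, the witness argument has a real gap: to certify $s_j\in\Sig(uv)$ for $j>i$ you must exhibit $w$ with $s_jw\prec tvw$ for \emph{every} nonempty suffix $t$ of $u$, but you only check $s_jw\prec s'w$. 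Since $s'=\RMaxSuf(u,v)$ is the $\rev$-maximum (equivalently, the minimizer of $tv\$$), it need not minimize $tvw$ for your chosen $w$; another $tv$ that is a prefix of $s'$ could beat $s_jw$ once extended by $w$. The paper instead derives $\{s_j,\ldots,s_{m+1}\}\subseteq\Sig(uv)$ structurally: by \cref{lem:lyndon_conc} the Lyndon factorization of $uv$ ends with a common tail $v_{d+1}^{q_{d+1}}\cdots v_m^{p_m}$ of $v$'s factorization, and $\Sig(uv)$ is defined directly from that factorization, so the shared tail forces the containment. Finally, the periodicity dichotomy distinguishing Case~2 from Case~3 — where the paper does the actual Lyndon-word/period bookkeeping (borders of the merged factor $z^k$, the argument that $u'$ must be a power of $v_{j-1}$, and the contradiction via an induced proper border) — is acknowledged by you as the hard step but is not carried out, so the core of the proof is missing.
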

\begin{proof}
\cref{obs:minmax} yields $\RMaxSuf(uv)\in \{\RMaxSuf(u,v),\RMaxSuf(v)\}$.
By \cref{cor:char}(\ref{it:rmax}) this is equivalent to $\RMaxSuf(uv)\in \{s', s_\lambda\}$.
Consequently, if $s' \preceq^R s_\lambda$, then $\RMaxSuf(uv)=s_\lambda$
and \cref{cor:char}(\ref{it:rmax}) implies $\Sig(uv)=\Sig(s_\lambda)=\Sig(v)$, as claimed.

Thus, we may assume that $s'\succ^R s_{\lambda}$, and in particular that $s'=\RMaxSuf(uv)$.
Let $s_j\in \Sig(w)$ be the longest suffix in $\Sig(uv)\cap \Sig(v)$ ($\lambda \le j \le m+1$).
By~\cref{cor:char}(\ref{it:suf}), $\Sig(uv)\sub\{s'\}\cup\{s_{j},s_{j+1},\ldots,s_{m+1}\}$.
\cref{lem:lyndon_conc} and the definition the $\Sig(\cdot)$ set in terms of the Lyndon factorization yield that 
the inclusion above is actually an equality. 
Moreover, the definition also implies that $s_j$ is a prefix of $s'$, and thus $j\ge i$. 
If $i = m+1$, this already proves our statement, so in the remainder of the proof we assume $i\le m$.

First, let us suppose that $j\ge i+1$. We shall prove that $j=i+1$ and $|s_{i}|-|s_{i+1}|$ is a period of $s'$.
Let $u'$ be a string such that $s'=u's_j$. 
Note that $v_{i}^{p_i}\ldots v_{j-1}^{p_{j-1}}$ is a border of~$u'$ (as $s_i$ is a border of $s'$),
so $v_{j-1}$ is also a border of $u'$ (because $v_{j-1}$ is a prefix of $s_{j-1}$, which is a prefix of $v_i$).
Moreover, by definition of the $\Sig(uv)$  set, $u'$ must be a power of a Lyndon word.
Lyndon words do not proper borders, so any border of $u'$ must be a power of the same Lyndon word.
Thus, $u'$ is a power of $v_{j-1}$. As $v_i$ is a Lyndon word and a prefix of $u'$, this means that $|v_i|\le |v_{j-1}|$. 
Consequently, $i =j+1$ since $|v_i|>|v_{i+1}|>\ldots > |v_{m}|$.
What is more, as $s_{i+1}$ is a prefix of $v_i$, we conclude that $|v_i|$ is a period of $s'=u's_{i+1}$.
Therefore, $|s_{i}|-|s_{i+1}|=p_i|v_i|$ is also a period of $s'$.

It remains to prove that $j=i$ implies that $|s_{i}|-|s_{i+1}|$ is not a period of~$s'$.
For a proof by contradiction suppose that both $s_i\in \Sig(uv)$ and $|s_{i}|-|s_{i+1}|=p_i |v_i|$ is a period of~$s'$.
Let us define $u'$ so that $s'=u's_i=u'v_i^{p_i}s_{i+1}$. 
As $p_i|v_i|$ is a period of $s'$ and $v_i^{p_i}$ contained in $s'$,
we conclude that $s'$ is a substring of $(v_i^{p_i})^\infty=v_i^\infty$, and consequently $|v_i|$ is also a period of $s'$ and hence a period of $u'$ as well.
However, by definition of the $\Sig(\cdot)$ set, $u'$ is a power of a Lyndon word whose length exceeds $|s_i|$ and thus also $|v_i|$.
This Lyndon word cannot have a proper border, and such a border is induced by period $|v_i|$, a contradiction.

Finally, observe that the second claim easily follows from $\Sig(uv)\sub \Sig(v)\cup\{s'\}$.
\end{proof}

We conclude with two combinatorial lemmas, useful to in determining $\RMaxSuf(u,v)$ for $|u|\le |v|$. 
The first of them is also applied later in \cref{sec:gmsq}.

\begin{lemma}\label{lem:lcp}
Let $v\in \Sigma^+$ and $w,w'\in \esigma^+$ be strings such that $w\prec w'$ and the longest common prefix of $w$
and $w'$ is not a proper substring of $v$.
Also, let $\Sig(v)=\{s_{\lambda},\ldots,s_{m-1}\}$.
If $\MinSuf(v,w)=s_i w$, then $\MinSuf(v,w')\in \{s_{i-1}w',s_{i}w'\}$.
\end{lemma}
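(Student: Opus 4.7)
The statement says that lengthening $w$ to some $w'$ whose common prefix with $w$ is not visible as a proper substring of $v$ can shift the minimizing significant suffix in $\MinSuf(v,\cdot)$ by at most one position in $\Sig(v)$. I plan to reduce this to counting elements of $X(v)$ in the open lex interval $(w,w')$, and then derive a contradiction with the substring hypothesis in the two-element case.

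Via \cref{lem:char}, let $r=\rank_{X(v)}(w)$; then the hypothesis $\MinSuf(v,w)=s_iw$ becomes $i=m+1-r$, and the conclusion $\MinSuf(v,w')\in\{s_iw',s_{i-1}w'\}$ becomes $\rank_{X(v)}(w')\in\{r,r+1\}$. The lower bound is immediate from $w\prec w'$, so it suffices to show that $(w,w')$ contains at most one element of $X(v)$. Suppose, for contradiction, that two distinct elements $x_a^\infty,x_b^\infty\in X(v)$ lie in $(w,w')$; after taking $a>b$, \cref{lem:char} and \cref{obs:simple} give $x_a^\infty\prec x_b^\infty$ and $|x_a^{p_a}|<|x_b^{p_b}|<|v|$ (the last inequality uses $b<a\le m$, hence $|s_{b+1}|>0$). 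A routine lex argument shows that every string strictly between $w$ and $w'$ begins with $L:=\lcp(w,w')$, so in particular $L$ is a prefix of both $x_a^\infty$ and $x_b^\infty$.

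The remaining goal is to show $L$ is a proper substring of $v$, contradicting the hypothesis. If $|L|\le|x_b^{p_b}|$, then $L$ and $x_b^{p_b}$ are both prefixes of $x_b^\infty$, so $L$ is a prefix of $x_b^{p_b}$ and thus occurs as a substring of $v$ of length $|L|\le|x_b^{p_b}|<|v|$, a proper substring. If $|L|>|x_b^{p_b}|$, then $L$ has both periods $|x_a|$ and $|x_b|$ (as a prefix of each of the periodic strings $x_a^\infty,x_b^\infty$). A Fine--Wilf-style inequality excludes $|L|\ge|x_a|+|x_b|-\gcd(|x_a|,|x_b|)$: otherwise $L$ would have period $\gcd(|x_a|,|x_b|)$, which combined with the primitivity of $x_a$ and $x_b$ (rotations of the distinct Lyndon factors $v_a,v_b$) would force $x_a^\infty=x_b^\infty$. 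Hence $|L|<|x_a|+|x_b|=|v_a|+|v_b|\le|v|$. Comparing the length-$|x_a^{p_a}|$ prefixes of $x_a^\infty$ and $x_b^\infty$ additionally reveals that $x_a^{p_a}$ is a proper border of $x_b^{p_b}$; invoking Fine--Wilf on $x_b^{p_b}$ together with primitivity of $x_a,x_b$ and the Lyndon structure of $v$ then rules out this border configuration, completing the contradiction.

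The main obstacle is the sub-case $|L|>|x_b^{p_b}|$: one has to handle the border $x_a^{p_a}$ of $x_b^{p_b}$ in two regimes --- when $|x_a^{p_a}|\ge|x_b|$, Fine--Wilf inside $x_b^{p_b}$ produces a period of $x_b$ that divides $|x_b|$, violating primitivity; when $|x_a^{p_a}|<|x_b|$, the border reduces to one of a single copy of the primitive word $x_b$, which must be excluded through a careful use of the Lyndon structure (in particular, that $x_a,x_b$ are rotations of distinct Lyndon factors of $v$). I expect this regime-split, rather than the \cref{lem:char} reduction, to be the technically delicate part of the proof.
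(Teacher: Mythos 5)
There is a genuine gap, and it is precisely the spot you flag as ``technically delicate'': the sub-case $|L| > |x_b^{p_b}|$ with $|x_a^{p_a}| < |x_b|$ is not actually handled. Here $x_a^{p_a}$ is merely a border of a single copy of the primitive word $x_b$, which is a perfectly ordinary situation (primitive strings can have borders), and the appeal to ``Fine--Wilf \ldots\ together with the Lyndon structure'' is not a proof --- you would have to say exactly which Lyndon property kills this configuration and why, and the plan gives no such argument. So, as written, the proof does not close.

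More importantly, the hard case does not occur at all, and the tool that shows this is already in \cref{lem:char}. For $a > b$ the chain gives $x_b^{p_b} \succeq y_b \succ x_{b+1}^\infty \succeq x_a^\infty$, hence $x_b^{p_b} \succ x_a^\infty$, and in particular $x_b^{p_b}$ is \emph{not} a prefix of $x_a^\infty$. But in your case $|L| > |x_b^{p_b}|$, the string $x_b^{p_b}$ is a prefix of $L$ (both are prefixes of $x_b^\infty$ and $|x_b^{p_b}| < |L|$), and $L$ is a prefix of $x_a^\infty$; so $x_b^{p_b}$ would be a prefix of $x_a^\infty$ --- contradiction. Thus $|L| \le |x_b^{p_b}|$ always, and only your ``easy case'' survives; the entire Fine--Wilf/primitivity detour is unnecessary. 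The paper bypasses all of this by ranking in $X'(v) = \{x_\lambda^{p_\lambda},\ldots,x_m^{p_m}\}$ rather than in $X(v)$: the elements of $X'(v)$ are proper substrings of $v$, so the hypothesis on $\lcp(w,w')$ gives $\rank_{X'(v)}(w) = \rank_{X'(v)}(w')$ outright, and the interleaving $x_\lambda^\infty \succ x_\lambda^{p_\lambda} \succ x_{\lambda+1}^\infty \succ \cdots$ from \cref{lem:char} yields $\rank_{X(v)}(w') \le \rank_{X'(v)}(w') = \rank_{X'(v)}(w) \le \rank_{X(v)}(w)+1$ in one line. Your ``easy case'' is essentially the first half of that argument; what you are missing is to replace the Fine--Wilf half with the interleaving inequality $x_b^{p_b} \succ x_a^\infty$.
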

\begin{proof}
Due to the characterization in \cref{lem:char}, we may equivalently prove that $\rank_{X(v)}(w')$ is $\rank_{X(v)}(w)$ or $\rank_{X(v)}(w)+1$.
Clearly, $\rank_{X(v)}(w) \le \rank_{X(v)}(w')$, so it suffices to show that $\rank_{X(v)}(w') \le \rank_{X(v)}(w)+1$.
This is clear if $|X(v)|=1$, so we assume $|X(v)|>1$. 

This assumption in particular yields that $X'(v)$ consists of proper substrings of $v$, and thus $\rank_{X'(v)}(w)=\rank_{X(v)}(w')$ by the 
condition on the longest common prefix of $w$ and~$w'$. 
However, the inequality in \cref{lem:char} implies 
$$\rank_{X(v)}(w') \le \rank_{X'(v)}(w') = \rank_{X'(v)}(w)\le \rank_{X(v)}(w)+1.$$
This concludes the proof.
\end{proof}

\begin{lemma}\label{lem:2ndleft}
Let $v\in \Sigma^+$, $v=v_1^{p_1}\cdots v_m^{p_m}$ be the Lyndon factorization of $v$, and let $\Sig(v)=\{s_{\lambda},\ldots,s_{m+1}\}$.
If for some $w\in \esigma^*$ and $\lambda< i \le m+1$ we have $\MinSuf(v,w)=s_{i} w$, then $v_{i-1}s_{i} w\preceq sw$ for every non-empty suffix $s$ of $v$ satisfying $|s|>|s_{i}|$.
\end{lemma}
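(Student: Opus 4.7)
My plan is to reformulate the problem. Write $s = \omega s_i$, where $\omega$ is the necessarily non-empty suffix of $v' := v_1^{p_1}\cdots v_{i-1}^{p_{i-1}}$ obtained by stripping $s_i$ off $s$; the desired inequality becomes $v_{i-1}s_iw \preceq \omega s_iw$ for every non-empty suffix $\omega$ of $v'$. I will split the analysis into four cases according to $|\omega|$ and to whether $v_{i-1}$ is a prefix of $\omega$.

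Three of the four cases should be routine. If $|\omega|<|v_{i-1}|$, then $\omega$ is a proper non-empty suffix of the Lyndon word $v_{i-1}$, so $\omega\succ v_{i-1}$ with the difference witnessed in the first $|\omega|$ positions, which propagates to $\omega s_iw\succ v_{i-1}s_iw$. If $|\omega|=|v_{i-1}|$, then forcibly $\omega=v_{i-1}$ and equality holds. If $|\omega|>|v_{i-1}|$ and $\omega = v_{i-1}\omega'$ with $\omega'\ne\eps$, then $\omega's_i$ is a non-empty suffix of $v$, so the hypothesis $\MinSuf(v,w) = s_iw$ yields $s_iw\preceq \omega's_iw$, whence $v_{i-1}s_iw\preceq v_{i-1}\omega's_iw = \omega s_iw$.

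The main obstacle is the remaining case: $|\omega|>|v_{i-1}|$ and $\omega$ does not begin with $v_{i-1}$. I plan to exploit the Lyndon factorization $\omega = u_1^{q_1}\cdots u_\ell^{q_\ell}$ of $\omega$. Every suffix of $\omega$ is also a suffix of $v'$, and $v_{i-1}$ is itself a suffix of $\omega$ because $|\omega|\ge|v_{i-1}|$; combined with the classical identity (noted in the Applications paragraph) that the minimum suffix of a word equals the last factor of its Lyndon factorization, so $\MinSuf(v')=v_{i-1}$, a sandwich argument gives $\MinSuf(\omega)=v_{i-1}$, hence $u_\ell = v_{i-1}$. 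The assumption that $\omega$ does not start with $v_{i-1}$ excludes $\ell = 1$ (else $\omega = v_{i-1}^{q_1}$ would begin with $v_{i-1}$), so $\ell\ge 2$ and $u_1\succ v_{i-1}$ strictly.

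It remains to turn $u_1\succ v_{i-1}$ into $\omega[1..|v_{i-1}|]\succ v_{i-1}$, which gives $\omega s_iw \succ v_{i-1}s_iw$. If $|u_1|\ge|v_{i-1}|$, then $v_{i-1}$ cannot be a proper prefix of $u_1$ (else $\omega$ would start with $v_{i-1}$), so the strict inequality $u_1\succ v_{i-1}$ is already witnessed within the first $|v_{i-1}|$ characters of $u_1$, which coincide with those of $\omega$. If $|u_1|<|v_{i-1}|$, then $u_1$ cannot be a prefix of $v_{i-1}$ (else $u_1\prec v_{i-1}$, contradicting $u_1\succ v_{i-1}$), so $u_1$ and $v_{i-1}$ differ within the first $|u_1|<|v_{i-1}|$ positions of $\omega$. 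Either sub-case yields the desired strict inequality, closing the proof.
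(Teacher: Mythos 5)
Your proof is correct, and it diverges from the paper's for the interesting case. The opening case $|\omega|<|v_{i-1}|$ (paper: $|s|<|v_{i-1}s_i|$) is the same in both: a proper non-empty suffix of the Lyndon word $v_{i-1}$ strictly exceeds $v_{i-1}$ and, crucially, is not a prefix of it, so the difference is realized within $|\omega|$ characters. For $|s|>|v_{i-1}s_i|$, however, the paper gives a single unified argument: it reduces the claim to $\MinSuf(v'',v_{i-1}s_iw)=v_{i-1}s_iw$ for $v''=v_1^{p_1}\cdots v_{i-1}^{p_{i-1}-1}$, which by \cref{lem:char} amounts to $v_{i-1}s_iw\prec v_{i-1}^\infty$ (resp.\ $v_{i-2}^\infty$), and verifies this by the chain $v_{i-1}s_iw\prec v_{i-1}s_ix_{i-1}^\infty=v_{i-1}(s_iy_{i-1})^\infty=v_{i-1}^\infty$ together with \cref{fct:3}. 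You instead split on whether $\omega$ begins with $v_{i-1}$. When it does, you get the bound for free from the hypothesis $\MinSuf(v,w)=s_iw$ applied to the shorter suffix $\omega's_i$ of $v$ and then prepend $v_{i-1}$; when it does not, you use the Lyndon factorization of $\omega$, a sandwich argument giving $\MinSuf(\omega)=v_{i-1}$ and hence $u_\ell=v_{i-1}$, $u_1\succ v_{i-1}$, and a careful prefix comparison to locate a strict inequality within the first $|v_{i-1}|$ positions. Your route is more elementary: it bypasses the $x_j^\infty$ machinery of \cref{lem:char} entirely and relies only on the basic facts that Lyndon words have no proper borders, that the Lyndon factorization is strictly decreasing, and that the last Lyndon factor is the minimal suffix. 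The paper's proof is shorter to write but leans harder on the auxiliary characterization; yours trades that dependence for a two-way case split and a direct combinatorial argument.
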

\begin{proof}
Let $s'$ be a string such that $s=s's_{i}$. 
First, suppose that $|s|<|v_{i-1}s_{i}|$.  In this case $s'$ is a proper suffix of a Lyndon word $v_{i-1}$,
and thus $s' \succ v_i$ and, moreover, $sw \succ s' \succ v_{i-1}s_{i}w$. 
Thus, we may assume that $|s|> |v_{i-1}s_{i}|$.

Let $w'=v_{i-1}s_{i} w$ and let $v'$ be a string such that $v=v'v_{i-1}s_{i}$.
Observe that it suffices to prove that $\MinSuf(v',w')=w'$, which implies that $sw \preceq w'$ for $|s|> |v_{i-1}s_{i}|$.
If $v'=\eps$ there is nothing to prove, so we shall assume $|v'|>0$.
Note that we have the Lyndon factorization $v'=v_1^{p_1}\cdots v_{i-1}^{p_{i-1}-1}$ with $i>1$ or $p_{i-1} > 1$.
By \cref{lem:char}, $\MinSuf(v,w)=s_{i}w$ implies $w\prec x_{i-1}^\infty$ and $\MinSuf(v',w')=w'$ is equivalent to
$w' \prec v_{i-1}^\infty$ (if $p_{i-1} > 1$) or $w' \prec v_{i-2}^\infty$ (if $p_{i-1}=1$).
We have $$w' = v_{i-1} s_{i} w \prec v_{i-1} s_{i} x_{i-1}^\infty = v_{i-1} s_{i}(y_{i-1}s_{i})^\infty=v_{i-1}(s_{i}y_{i-1})^\infty=v_{i-1} v_{i-1}^\infty = v_{i-1}^\infty$$ as claimed.
If $p_{i-1}>1$, this already concludes the proof, and thus we may assume that $p_{i-1}=1$.
By definition of the Lyndon factorization we have $v_{i-2}\succ v_{i-1}$, and by \cref{fct:3} this implies $v_{i-2}\succ v_{i-1}^\infty$.
Hence, $w' \prec v_{i-1}^\infty \prec v_{i-2} \prec v_{i-2}^\infty$, which concludes the proof.
\end{proof}

\section{Answering \MSQ}\label{sec:msq}
In this section we present our data structure for \MSQ. 
We proceed in three steps improving the query time from $\Oh(\log |v|)$ via $\Oh(\log^* |v|)$ to $\Oh(1)$.
The first solution is an immediate application of \cref{obs:minmax}(\ref{it:minmax:b}) and the notion of significant suffixes.
Efficient computation of these suffixes, also used in the construction of further versions of our data structure,
is based on \cref{lem:extend}, which yields a recursive procedure. 
The only ``new'' suffix needed at each step is determined using the following result,
which can be seen as a cleaner formulation of Lemma~14 in~\cite{Babenko2016}.
\begin{lemma}\label{lem:max}
Let $u=T[\ell..r]$ and $v=T[r+1..r']$ be fragments of $T$ such that $|u|\le |v|$.
Using the enhanced suffix array of $T$ we can compute  $\MaxSuf^R(u,v)$ in $\Oh(1)$ time.
\end{lemma}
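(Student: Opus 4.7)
The plan is to reduce via \cref{obs:minmax}(\ref{it:minmax:d}): since $\RMaxSuf(u,v)\$ = \MinSuf(u, v\$)$ and $uv\$ = T[\ell..r']\$$, it suffices to find the $\prec$-smallest among the strings $T[k..r']\$$ for $k \in [\ell, r+1]$. I would invoke the range-minimum-suffix primitive from \cref{thm:esa} to obtain $k_1 := \arg\min_{k \in [\ell, r+1]} T[k..n]$. A case analysis on whether the first mismatch between $T[k..n]$ and $T[k_1..n]$ lies within both of $T[k..r']$ and $T[k_1..r']$ or past the shorter of them shows that no $k > k_1$ beats $k_1$, and that a $k < k_1$ beats $k_1$ exactly when $X := T[k_1..r']$ occurs at position $k$ in $T$. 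A further periodicity argument shows that, among such winning $k$'s, $T[k..r']$ for smaller $k$ has $T[k'..r']$ for larger $k'$ as a prefix, so the leftmost satisfying $k$ minimizes $T[k..r']\$$ in $\prec$. Thus $k^{**}$, which delivers $\RMaxSuf(u,v) = T[k^{**}..r']$, is the leftmost occurrence of $X$ in $[\ell, k_1]$.

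The hypothesis $|u| \le |v|$ forces $|X| \ge k_1 - \ell$, so any two occurrences of $X$ inside $[\ell, k_1]$ overlap and, by the Fine--Wilf periodicity lemma, form (together with $k_1$) an arithmetic progression whose common difference $p$ is the smallest period of $X$ (unless only the trivial occurrence at $k_1$ exists). I would extract $p$ via a second range-minimum-suffix query applied to $[\ell, k_1 - 1]$: the returned position $k_2$ maximizes $\lcp(T[k_1..n], T[k..n])$ across the range, since in the suffix array the lex-closest entry realizes the largest LCP. If this LCP attains $|X|$, then $p := k_1 - k_2$; otherwise no non-trivial occurrence exists in $[\ell, k_1-1]$ and we return $k^{**} = k_1$.

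The remaining ingredient is the leftward extent $L$ of the $p$-periodic run in $T$ that starts at $k_1$, which together with $p$ yields $k^{**} = \max(\ell, k_1 - p \lfloor L/p \rfloor)$. Since the period $p$ extending leftward in $T$ corresponds to period $p$ extending rightward in $T^R$ from position $n - k_1 + 1$, I would compute $L$ via the $\lcp(x^\infty, y)$ primitive of \cref{thm:esa}(\ref{it:lcpex}) applied to $T^R$: choose $x$ as a length-$p$ fragment of $T^R$ whose infinite repetition matches the leftward-extended periodic pattern, and let $y$ be the suffix $T^R[n - k_1 + 1..n]$. The main obstacle is choosing $x$ correctly---reversal swaps the character order within each period, so $x$ cannot simply be the reverse of $T[k_1..k_1 + p - 1]$ but rather a specific cyclic rotation obtained from $T^R[n-k_1+1..n-k_1+p]$; once this alignment issue is settled, all queries take $\Oh(1)$ time and the procedure outputs $\RMaxSuf(u,v) = T[k^{**}..r']$ within the stated bound.
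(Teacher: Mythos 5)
Your reduction to $\MinSuf(u,v\$)$ via Observation~\ref{obs:minmax}(\ref{it:minmax:d}) and the two range--minimum--suffix queries coincide with the paper's, so the \emph{algorithm} you describe is essentially the same one; what differs is the correctness argument. The paper routes the argument through the significant--suffix machinery: Lemma~\ref{lem:lcp} immediately yields that the answer is one of $s_{i-1}$ or $s_i$, and Lemma~\ref{lem:2ndleft} identifies the second range--min position as exposing the Lyndon factor $u_{i-1}$, after which the ``largest power'' step is exactly your leftward--extent computation. You instead argue directly about occurrences of $X=T[k_1..r']$ in $[\ell,k_1]$, which is a legitimate alternative route, but it transfers onto you the burden of re-proving what the paper already has as Lemmas~\ref{lem:lcp}--\ref{lem:2ndleft}, and several of those steps are asserted rather than proved.

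Concretely, the following are genuine gaps in the proposal. (1) ``A further periodicity argument shows that, among such winning $k$'s, $T[k..r']$ for smaller $k$ has $T[k'..r']$ for larger $k'$ as a prefix'' is the crux of the whole correctness claim, and it is not at all immediate: it does not follow merely from $X$ occurring at both positions, but needs a Fine--Wilf argument on the \emph{three} occurrences $k<k'<k_1$ (both spacings $<|X|$) to show that $T[k..r']$ has period $\gcd(k'-k,\,k_1-k')$ and hence period $k'-k$. As stated it is circular, since you present the arithmetic--progression claim (which is what would deliver it cleanly) only afterwards, and only for computation. (2) The claim that the common difference $p$ of the progression ``is the smallest period of $X$'' is not quite right; what one gets is the smallest period of the periodic region $T[a_1..r']$, and that $X$ inherits $p$ as \emph{a} period. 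Fortunately your algorithm never uses this precise identity --- it only uses $p=k_1-k_2$ --- but then you owe an argument that $k_2$, being the lexicographically smallest suffix in $[\ell,k_1-1]$, is the occurrence \emph{nearest} to $k_1$; this again needs a small argument (via $T[r'+1..]\prec Z\,T[r'+1..]$ implying monotonicity of $Z^j\,T[r'+1..]$) that you do not supply. (3) The final formula $k^{**}=\max(\ell,\,k_1-p\lfloor L/p\rfloor)$ is wrong when the $p$-periodic run extends to the left of $\ell$ and $p\nmid(k_1-\ell)$: in that case it returns $\ell$, which is not an occurrence of $X$. The correct expression is $k_1-p\cdot\min\bigl(\lfloor L/p\rfloor,\lfloor(k_1-\ell)/p\rfloor\bigr)$. (4) You flag the choice of $x$ for the $\lcp(x^\infty,y)$ query on $T^R$ as ``the main obstacle'' and then propose $x=T^R[n-k_1+1..n-k_1+p]$; that choice is in fact correct, so this is a self-created issue rather than a real one, but the hedging reads as if the step were unresolved. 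In short: the algorithm matches the paper's, the high-level idea is sound, but the correctness proof as written leans on unproven and partly misstated periodicity facts that the paper sidesteps by invoking its own Lemmas~\ref{lem:lcp} and~\ref{lem:2ndleft}.
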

\begin{proof}
Let $sv = \MaxSuf^R(u,v)$ and note that, by \cref{obs:minmax}(\ref{it:minmax:d}), $sv\dol = \MinSuf(u,v\dol)$. 
Let us focus on determining the latter value.
The enhanced suffix array lets us compute a index $k$, $\ell \le k \le r$, which minimizes $T[k..]$.
Equivalently, we have $T[k..]=\MinSuf(u,T[r+1..])$. 
Consequently, $T[k..r] = s_{i}\in \Sig(u)$ for some $\lambda\le i \le m+1$. 
Since $|u|\le |v|$, $v$ is not a proper substring of $u$, and by \cref{lem:lcp}, we have $s\in \{s_{i-1},s_{i}\}$ (if $i=\lambda$, then $s=s_i$).

Thus, we shall generate a suffix of $s_{i-1}$ equal to $s_{i-1}$ if $i>\lambda$, and return the better of the two candidates for $\MinSuf(u,v\dol)$.
If $k=\ell$, we must have $i=\lambda$ and there is nothing to do. Hence, let us assume $k>\ell$.
By \cref{lem:2ndleft}, if we compute an index $k'$, $\ell \le k' < k$, which minimizes $T[k'..]$,
we shall have $T[k'..k-1]=u_{i-1}$ provided that $i>\lambda$. Now, $p_{i-1}$ can be generated as the largest integer such that $u_{i-1}^{p_{i-1}}$ is a suffix of $T[\ell..k-1]$, and we have $|s_{i-1}|=|s_{i}|+p_{i-1}|u_{i-1}|$, which lets us determine $s_{i-1}$.
\end{proof}

\begin{lemma}\label{lem:gensig}
Given a fragment $v$ of $T$, we can compute $\Sig(v)$ in $\Oh(\log |v|)$ time using the enhanced suffix array of $T$
\end{lemma}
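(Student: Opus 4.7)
The plan is to compute $\Sig(v)$ by iterative doubling from the right. For $v=T[\ell..r]$, I define $v_k$ to be the suffix of $v$ of length $\min(2^k,|v|)$ for $k=0,1,\ldots,K$ with $K=\lceil \log_2|v|\rceil$. Starting from $\Sig(v_0)=\{\eps,v_0\}$ (the single character together with the empty suffix), I maintain $\Sig(v_k)$ as an ordered list indexed from $\lambda$ (largest) down to $m+1$ (empty), and upgrade it to $\Sig(v_k)$ at each step.

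At step $k\ge 1$, write $v_k=u\cdot v_{k-1}$ with $|u|=|v_k|-|v_{k-1}|\le |v_{k-1}|$, so \cref{lem:extend} applies. I first invoke \cref{lem:max} to compute $s'=\RMaxSuf(u,v_{k-1})$ in $\Oh(1)$ time, and then execute the case analysis of \cref{lem:extend}. This requires (i) the $\rev$-comparison between $s'$ and $s_\lambda=\RMaxSuf(v_{k-1})$, which is the largest element of the current list by \cref{cor:char}(\ref{it:rmax}); (ii) locating the longest $s_i\in \Sig(v_{k-1})$ that is a prefix of $s'$; and (iii) testing whether $|s_i|-|s_{i+1}|$ is a period of $s'$. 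Each of these reduces to a fragment comparison or equality on $T$, and by \cref{thm:esa} costs $\Oh(1)$ time.

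The main obstacle is step (ii). I observe that $s'$ ends with $v_{k-1}$ (since $s'$ has the form $sv_{k-1}$ for some suffix $s$ of $u$) and every $s_j\in \Sig(v_{k-1})$ is a suffix of $v_{k-1}$; hence each $s_j$ is automatically a suffix of $s'$, so ``$s_j$ is a prefix of $s'$'' is equivalent to ``$s_j$ is a border of $s'$'', testable by a single LCP query. I scan $\Sig(v_{k-1})$ in order of decreasing length until the first $s_j$ passing this test is found. For the complexity, I use an amortization argument based on \cref{lem:extend}: either the list is unchanged (case~1, when $s'\preceq^R s_\lambda$) or $s'$ is inserted at the top and either the $i-\lambda+1$ or the $i-\lambda$ topmost elements are removed (cases 2 and 3), so the number of elements scanned at step $k$ is bounded by the number of removals at step $k$ plus $\Oh(1)$. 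Since each step inserts at most one element and there are $\Oh(K)$ steps, the total number of removals is $\Oh(K)$, making the total scan cost $\Oh(K)=\Oh(\log |v|)$. Combined with the $\Oh(1)$ work per step for \cref{lem:max} and the case analysis, the algorithm runs in $\Oh(\log|v|)$ total time.
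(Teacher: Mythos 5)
Your proof is correct and follows essentially the same approach as the paper: both build $\Sig(v)$ incrementally over $\Oh(\log|v|)$ suffixes of geometrically increasing length, compute the new candidate via \cref{lem:max}, update via the case analysis of \cref{lem:extend}, and amortize the per-step work against removals from the ordered list. The paper phrases it as a top-down recursion while you iterate bottom-up, and you spell out the scan for step~(ii) more explicitly, but the ideas coincide.
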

\begin{proof}
If $|v|=1$, we return $\Sig(v)=\{v,\eps\}$. Otherwise, we decompose $v=uv'$ so that $|v'|=\big\lceil\frac12|v|\big\rceil$.
We recursively generate $\Sig(v')$ and use \cref{lem:max} to compute $s=\MaxSuf^R(u,v')$.
Then, we apply the characterization of \cref{lem:extend} to determine $\Sig(v)=\Sig(uv')$, using the enhanced suffix array (\cref{thm:esa})
to lexicographically compare fragments of $T$.  
% We check whether $s \preceq^R \MaxSuf^R(v')$. If so, $\Sig(v)=\Sig(v')$ and there is nothing to do.
% Otherwise, we iterate through $\Sig(v')$ to find the longest $s_i\in \Sig(v')$ being a
% prefix of $s$. Then, in order to determine $\Sig(v)$, we just need to check whether $|s_{i}|-|s_{i+1}|$ is a period of $s$.

We store the lengths of the significant suffixes in an ordered list. 
This way we can implement a single phase (excluding the recursive calls) in time proportional to $\Oh(1)$ plus the number of suffixes removed from $\Sig(v')$ to obtain $\Sig(v)$.
Since this is amortized constant time, the total running time becomes $\Oh(\log |v|)$ as announced.
\end{proof}

\begin{corollary}\label{cor:logn}
\MSQ queries can be answered in $\Oh(\log|v|)$ time using the enhanced suffix array of $T$.
\end{corollary}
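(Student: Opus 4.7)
The plan is to reduce the query to the significant suffixes of a prefix of $v$, and then to locate the correct one via the characterization of \cref{lem:char}. Given $v = T[\ell..r]$, if $|v|=1$ we return $1$ immediately. Otherwise, decompose $v = v'c$ where $v' = T[\ell..r-1]$ and $c = T[r]$. By \cref{obs:minmax}(\ref{it:minmax:b}), $\MinSuf(v) = \MinSuf(v',c)$, so it suffices to compute the latter.

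First, I would invoke \cref{lem:gensig} to compute $\Sig(v') = \{s_\lambda,\ldots,s_{m+1}\}$ in $\Oh(\log|v|)$ time, retaining for each $s_i$ with $\lambda \le i \le m$ the associated Lyndon root $x_i$, represented by its length as a fragment of $T$ (both $s_i$ and $x_i^{p_i}$ are suffixes of $v'$, hence fragments of $T$, via \cref{obs:simple}). These representations are exactly what is produced in the recursive construction of \cref{lem:extend} underlying \cref{lem:gensig}.

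Next, by \cref{lem:char}, $\MinSuf(v',c) = s_{m+1-r}\,c$ where $r = \rank_{X(v')}(c)$. To compute $r$, I would scan the $\Oh(\log|v|)$ elements of $X(v')$ (already sorted, per the chain of inequalities in \cref{lem:char}) and compare each $x_j^\infty$ against the single-character fragment $c$ via query~(\ref{it:lcpex}) of \cref{thm:esa} in $\Oh(1)$ time each. This pinpoints the correct index $i$ in $\Oh(\log|v|)$ total time, and we return $|s_i|+1$ as the length of $\MinSuf(v)$. The total cost is $\Oh(\log|v|)$, dominated by the call to \cref{lem:gensig}; no substantive obstacle arises beyond the already-established combinatorial machinery, since the rank step only uses constant-time comparisons of an infinite power $x_j^\infty$ against a fragment of $T$, exactly what \cref{thm:esa}(\ref{it:lcpex}) is designed to support.
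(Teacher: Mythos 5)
Your proof is correct and follows essentially the same strategy as the paper: decompose $v=v'c$ via \cref{obs:minmax}(\ref{it:minmax:b}), compute $\Sig(v')$ with \cref{lem:gensig}, and select the winner among the $\Oh(\log|v|)$ significant suffixes. The only difference is cosmetic: the paper directly compares the candidate fragments $s_i c$ pairwise using \cref{thm:esa}(\ref{it:cmp}), whereas you compute $\rank_{X(v')}(c)$ via \cref{lem:char} using $\Oh(1)$-time comparisons of $x_j^\infty$ against $c$ from \cref{thm:esa}(\ref{it:lcpex}); both scans cost $\Oh(\log|v|)$, so the bound is the same.
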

\begin{proof}
Recall that \cref{obs:minmax}(\ref{it:minmax:b}) yields $\MinSuf(v)=\MinSuf(v[1..m-1],v[m])$ where $m=|v|$. Consequently, 
$\MinSuf(v)= sv[m]$ for some $s\in \Sig(v[1..m-1])$. We apply \cref{lem:gensig} to compute $\Sig(v[1..m-1])$
and determine the answer among $\Oh(\log |v|)$ candidates using lexicographic comparison of fragments,
provided by the enhanced suffix array (\cref{thm:esa}).
\end{proof}

\subsection{$\Oh(\log^* |v|)$-time \MSQ}\label{sec:logstar}
An alternative $\Oh(\log |v|)$-time algorithm could be developed based just on the second part of \cref{lem:extend}:
decompose $v=uv'$ so that $|v'|> |u|$ and return $\min(\RMaxSuf(u,v'),\MinSuf(v'))$.
The result is $\MinSuf(v)$ due to \cref{lem:extend} and \cref{obs:minmax}(\ref{it:minmax:b}). 
Here, the first candidate $\RMaxSuf(u,v')$ is determined via \cref{lem:max}, while the second one using a recursive call. 
A way to improve query time to $\Oh(1)$ at the price of $\Oh(n\log n)$-time preprocessing
is to precompute the answers for \emph{basic} fragments, i.e., fragments whose length is a power of two.
Then, in order to determine $\MinSuf(v)$, we perform just a single step of the aforementioned procedure, making sure that $v'$
is a basic fragment. 
Both these ideas are actually present in~\cite{Babenko2016}, along with a smooth trade-off between their preprocessing and query times.

Our $\Oh(\log^* |v|)$-time query algorithm combines recursion with preprocessing for certain \emph{distinguished} fragments.
More precisely, we say that $v=T[\ell..r]$ is distinguished if both $|v|=2^q$ and $f(2^q) \mid r$ for some positive integer $q$,
where $f(x)=2^{\floor{\log\log x}^2}$.
Note that the number of distinguished fragments of length $2^q$ is at most $\frac{n}{2^{\floor{\log q}^2}}=\Oh(\frac{n}{q^{\omega(1)}})$. 

The query algorithm is based on the following decomposition ($x > f(x)$ for $x> 2^{16}$):%
\begin{fact}\label{fct:dec}
 Given a fragment $v=T[\ell..r]$ such that $|v|> f(|v|)$, we can in constant time decompose $v=uv'v''$ such that $1\le |v''|\le f(|v|)$, $v'$ is distinguished, and $|u|\le |v'|$.
\end{fact}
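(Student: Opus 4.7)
The plan is to choose the endpoint $r'$ of $v'$ first and derive the exponent $q$ from $r'$, rather than fixing $q$ up front and hunting for a compatible position. Setting $L := |v|$, I would take $r' := f(L) \cdot \lfloor (r-1)/f(L) \rfloor$, which is the unique multiple of $f(L)$ in the window $[r - f(L), r-1]$, and then define $q := \lfloor \log(r' - \ell + 1)\rfloor$, $|v'| := 2^q$, and $\ell' := r' - 2^q + 1$. The decomposition $u := T[\ell..\ell'-1]$, $v' := T[\ell'..r']$, $v'' := T[r'+1..r]$ is computed in $\Oh(1)$ time on the word RAM, using only integer division and $\lfloor \log \cdot\rfloor$ via an MSB instruction.

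For the three required conditions: $|v''| = r - r' \in [1, f(L)]$ holds by construction; the definition of $q$ gives $2^q \le r' - \ell + 1 < 2^{q+1}$, so $|u| = r' - \ell + 1 - 2^q < 2^q = |v'|$ and also $2^q \le L - 1$; and being distinguished requires $f(2^q) \mid r'$, which I would derive by combining monotonicity of $f$ and $2^q < L$ to get $f(2^q) \le f(L)$, the fact that both sides are powers of two to upgrade this to $f(2^q) \mid f(L)$, and the choice of $r'$ to get $f(L) \mid r'$.

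One subtlety is that the definition of distinguished demands $q \ge 1$, equivalently $r' \ge \ell + 1$; this can fail only when $L = f(L) + 1$ and $f(L) \mid \ell$, and under the hypothesis $|v| > f(|v|) > 2^{16}$ this forces the lone value $L = 2^{16} + 1$. I would dispatch this boundary case by overriding $r' := \ell + 1$ and $q := 1$, so that $|v'| = 2$, $|u| = 0$, $|v''| = f(L) - 1 \in [1, f(L)]$, and the divisibility condition becomes vacuous since $f(2) = 1$. The step I expect to be the main obstacle is precisely this divisibility check, because naively committing to $q$ first would demand a case analysis over whether the narrow window $[r - f(L), r - 1]$ straddles a power-of-two boundary in $r' - \ell + 1$; choosing $r'$ first and exploiting that $f$ is a monotone, power-of-two-valued function sidesteps the case split uniformly.
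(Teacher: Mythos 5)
Your proposal matches the paper's proof step for step: take $r'$ to be the largest multiple of $f(|v|)$ strictly below $r$ (the paper phrases this as ``the largest integer strictly smaller than $r$ divisible by $2^{q'}=f(|v|)$''), set $v''=T[r'+1..r]$, and split $T[\ell..r']=uv'$ so that $|v'|$ is the largest power of two, which forces $|u|<|v'|$; the distinguished property follows from $f(|v'|)\mid f(|v|)\mid r'$ via monotonicity of $f$ on powers of two. So the approach is the same, not a different route.

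Where you go beyond the paper: you correctly flag a boundary case that the paper's argument silently skips. The paper only establishes $r'\ge\ell$, i.e., $|uv'|\ge 1$, whereas the definition of distinguished requires $|v'|=2^q$ with $q$ a \emph{positive} integer. If $|v|=f(|v|)+1$ and $f(|v|)\mid\ell$ then $r'=\ell$ and $|v'|=1=2^0$, which is not distinguished (indeed $f(1)$ is not even defined). You also correctly pin this down to the single value $|v|=2^{16}+1$, and your override ($r':=\ell+1$, $|v'|=2$, $|u|=0$, $|v''|=f(|v|)-1$) repairs it cleanly because $f(2)=1$ makes the divisibility condition vacuous. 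One small slip in the writeup: the chain ``$|v|>f(|v|)>2^{16}$'' should read ``$|v|>f(|v|)$, i.e.\ $|v|>2^{16}$''---for the boundary value $|v|=2^{16}+1$ we have $f(|v|)=2^{16}$ exactly, not $>2^{16}$.
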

\begin{proof}
Let $q=\floor{\log|v|}$ and $q'=\floor{\log q}^2$.
We determine $r'$ as the largest integer strictly smaller than $r$ divisible by $2^{q'}=f(|v|)$. 
By the assumption that $|v|>2^{q'}$, we conclude that $r'\ge  r-2^{q'}\ge \ell$.
We define $v''=T[r'+1..r]$ and partition $T[\ell..r']=uv'$ so that $|v'|$ is the largest possible power of two.
This guarantees $|u|\le |v'|$.
Moreover, $|v'|\le |v|$ assures that $f(|v'|)\mid f(|v|)$, so $f(|v|') \mid r'$, and therefore $v'$ is indeed distinguished.
\end{proof}

\cref{obs:minmax}(\ref{it:minmax:c}) implies that $\MinSuf(v)\in \{\MinSuf(uv',v''),\MinSuf(v'')\}$.
\cref{lem:extend} further yields $\MinSuf(v)\in \{\MaxSuf^R(u,v')v'',\MinSuf(v',v''),\MinSuf(v'')\}$.
In other words, it leaves us with three candidates for $\MinSuf(v)$. 
Our query algorithm obtains $\MaxSuf^R(u,v')$ using \cref{lem:max}, computes $\MinSuf(v'')$ recursively, 
and determines $\MinSuf(v',v'')$ through the characterization of \cref{lem:char}.
The latter step is performed using the following component 
based on a fusion tree, which we build for all distinguished fragments.

\begin{lemma}\label{lem:fus}
Let $v=T[\ell..r]$ be a fragment of $T$.
There exists a data structure of size $\Oh(\log |v|)$ which answers the following queries
in $\Oh(1)$ time: given a position $r'>r$ compute $\MinSuf(v,T[r+1..r'])$.
Moreover, this data structure can be constructed in $\Oh(\log |v|)$ time using the enhanced suffix array of $T$. 
\end{lemma}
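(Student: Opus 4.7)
The plan is to exploit the characterization from \cref{lem:char}, which writes $\MinSuf(v, w) = s_{m+1-r}\cdot w$ for $r = \rank_{X(v)}(w)$, reducing the query to an integer rank query on a small set that can be answered in constant time using a fusion tree.

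First, I would invoke \cref{lem:gensig} to compute $\Sig(v) = \{s_\lambda, \ldots, s_{m+1}\}$ together with the associated Lyndon factors $v_i$, exponents $p_i$, and hence the strings $x_i$ (each represented as a fragment of $T$) in $\Oh(\log|v|)$ time. By \cref{obs:simple}, $|X(v)| = \Oh(\log|v|)$. For each $i$ I would record the starting position $q_i = r - |s_i| + 1$ of $s_i$ in $T$; by construction $T[q_i..q_i + p_i|x_i| - 1] = x_i^{p_i}$.

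Second, I would assign each $x_i^\infty$ a $W$-bit integer key $K_i$ whose integer order matches the lex order on $X(v)$, and provide an $\Oh(1)$-computable key $K_w$ for every query $w = T[r+1..r']$; a natural candidate is $K_i = \ISA[q_i]$ and $K_w = \ISA[r+1]$. I would build a fusion tree (\cref{thm:fus}) on $\{K_\lambda, \ldots, K_m\}$: construction uses $\Oh(|X(v)|(1 + \log_W |X(v)|)) = \Oh(\log|v|)$ time and $\Oh(\log|v|)$ space, and since $|X(v)| = \Oh(\log n) \le W$, each rank query takes $\Oh(1)$ time. To answer a query I would compute $K_w$, obtain $r = \rank_{X(v)}(K_w)$ from the fusion tree, and return $|s_{m+1-r}| + (r' - r)$ as the length of $\MinSuf(v, w)$ prescribed by \cref{lem:char}.

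The main obstacle will be verifying that the integer encoding truly preserves order, i.e., that the lex comparison of any two strings in $X(v) \cup \{w\}$ is already decided within the prefix of length $p_i|x_i|$ shared by $x_i^\infty$ and the suffix $T[q_i..]$. I expect this to follow from the strict inequality chain $x_i^\infty \succ x_i^{p_i} \succeq y_i \succ x_{i+1}^\infty$ of \cref{lem:char} combined with the geometric shrinkage $|s_i| > 2|s_{i+1}|$ of \cref{obs:simple}; should the naive $\ISA$-based encoding prove insufficient, a refinement (for instance, replicating $x_i$ further within $T$-space before sampling $\ISA$, or breaking ties using $|x_i|$ as a secondary key) should complete the argument.
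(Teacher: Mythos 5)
Your high-level plan---invoking \cref{lem:char} to reduce the query to $\rank_{X(v)}(w)$ and answering that with a fusion tree---is exactly the paper's strategy, and the budget analysis ($\Oh(\log|v|)$ space and construction, $\Oh(1)$ query via \cref{thm:fus}) is sound. But the mechanism you propose for the rank query has a gap that your hedges do not address.

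The problem is that $\MinSuf(v, T[r+1..r'])$ depends on $r'$, while your proposed query key $K_w = \ISA[r+1]$ does not: it would yield the same rank for every $r' > r$. This is genuinely wrong, because there can exist $x_j^\infty \in X(v)$ with $x_j^\infty \succ T[r+1..r']$ yet $x_j^\infty \prec T[r+1..]$ (take $T[r+1..r']$ a proper prefix of $x_j^\infty$, with $T[r'+1]$ exceeding the corresponding character of $x_j^\infty$). Then $\rank_{X(v)}(T[r+1..r']) < \rank_{X(v)}(T[r+1..])$, and your fusion tree, keyed by $\ISA$, would overcount. Replicating $x_i$ further within $T$-space or tie-breaking by $|x_i|$ could only repair which fixed total order the keys encode; neither introduces the needed dependence on $r'$, so the hedge does not close the gap. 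The paper's crucial extra idea is to encode each $x_j^\infty$ (only when $x_j^\infty \prec T[r+1..]$) by the \emph{position} $r + \lcp(T[r+1..], x_j^\infty)$, and to observe that $x_j^\infty \prec T[r+1..r']$ if and only if that position is $< r'$; then $\rank_{X(v)}(T[r+1..r'])$ becomes the integer rank of $r'$ in this position set, which the fusion tree answers in $\Oh(1)$. In short, the string rank is translated into a predecessor query on $r'$ itself, and that translation is what you are missing.

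There is also a smaller factual error worth flagging: you claim $T[q_i..q_i + p_i|x_i| - 1] = x_i^{p_i}$ with $q_i = r - |s_i| + 1$, i.e.\ that $x_i^{p_i}$ is a \emph{prefix} of $s_i$. In fact $s_i = s_{i+1} x_i^{p_i}$, so $x_i^{p_i}$ is the \emph{suffix} of $v$ of length $|s_i| - |s_{i+1}|$ (cf.\ \cref{obs:simple}(b)), occupying $T[r - (|s_i|-|s_{i+1}|) + 1..r]$. The paper uses exactly this fact to compute $\lcp(T[r+1..], x_j^\infty) = \lcp(T[r+1..], (x_j^{p_j})^\infty)$ via the enhanced suffix array during construction.
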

\begin{proof}
By \cref{lem:char}, we have $\MinSuf(v,w)=s_{m+1-\rank_{X(v)}(w)}w$, so in order to determine $\MinSuf(v,T[r+1..r'])$,
it suffices to store $\Sig(v)$ and efficiently compute $\rank_{X(v)}(w)$ given $w=T[r+1..r']$.
We shall reduce these $\rank$ queries to $\rank$ queries in an integer set $R(v)$.

\begin{claim}
Denote $X(v)=\{x_{\lambda}^\infty,\ldots,x_{m}^{\infty}\}$ and let
$$R(v) = \{r + \lcp(T[r+1..],x_{j}^\infty) : x_{j}^\infty \in X(w) \wedge x_{j}^\infty \prec T[r+1..]\}.$$
For every index $r'$, $r < r' \le n$, we have $\rank_{X(v)}(T[r+1..r'])=\rank_{R(v)}(r').$
\end{claim}
\begin{proof}
We shall prove that  for each $j$, $\lambda \le j \le m$, we have
$$x_j^\infty \prec T[r+1..r'] \;\Longleftrightarrow\; \big(r+\lcp(T[r+1..],x_j^\infty)< r'  \;\wedge\; x_j^\infty \prec T[r+1..]\big).$$
First, if $x_j^\infty \succ T[r+1..]$, then clearly $x_j^\infty \succ T[r+1..r']$ and both sides of the equivalence are false.
Therefore, we may assume $x_j^\infty \prec T[r+1..]$. 
Observe that in this case $d:=\lcp(T[r+1..],x_j^\infty)$ is strictly less than $n-r$, and $T[r+1..r+d] \prec x_j^\infty \prec T[r+1..r+d+1]$. Hence, $x_j^\infty \prec T[r+1..r']$ if and only if $r+d< r'$, as claimed.
\end{proof}

We apply \cref{thm:fus} to build a fusion tree for $R(v)$, so that the ranks are can be obtained in $\Oh(1+\frac{\log |R(v)|}{\log W})$ time, 
which is $\Oh(1+\frac{\log \log |v|}{\log \log n})=\Oh(1)$ by \cref{obs:simple}.

The construction algorithm uses \cref{lem:gensig} to compute $\Sig(v)=\{s_{\lambda},\ldots,s_{m+1}\}$. 
Next, for each $j$, $\lambda\le j \le m$,
we need to determine $\lcp(T[r+1..],x_j^\infty)$. This is the same as $\lcp(T[r+1..],(x_j^{p_j})^\infty)$
and, by \cref{obs:simple}, $x_j^{p_j}$ can be retrieved as the suffix of $v$ of length $|s_i|-|s_{i+1}|$. 
Hence, the enhanced suffix array can be used to compute these longest common prefixes and therefore to construct
$R(v)$ in $\Oh(|\Sig(v)|)=\Oh(\log |v|)$ time.
\end{proof}

With this central component we are ready to give a full description of our data structure.

\begin{theorem}\label{thm:logstar}
For every text $T$ of length $n$ there exists a data structure of size $\Oh(n)$ which answers \MSQ in $\Oh(\log^* |v|)$ time
and can be constructed in $\Oh(n)$ time.
\end{theorem}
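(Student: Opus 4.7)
The plan is to combine the decomposition of \cref{fct:dec} with the per-fragment structure of \cref{lem:fus}, recursing only on the short tail $v''$. The data structure would store the enhanced suffix array of $T$ (\cref{thm:esa}) together with the fusion-tree gadget of \cref{lem:fus} for every distinguished fragment of $T$. The number of distinguished fragments of length $2^q$ is at most $n/2^{\floor{\log q}^2}$, and each gadget takes $\Oh(q)$ construction time and space, so the total cost is bounded by $\Oh(n) + n\sum_{q\ge 1} q/2^{\floor{\log q}^2}$; for $q\ge 4$ we have $2^{\floor{\log q}^2}\ge q^{\floor{\log q}}$, which dominates $q$ super-polynomially, so the series converges and the preprocessing uses $\Oh(n)$ time and space in total.

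To answer a query on $v = T[\ell..r]$, I would first test whether $|v| > f(|v|)$. When $|v| \le f(|v|)$, which forces $|v| \le 2^{16}$, I would invoke \cref{cor:logn}, answering the query in $\Oh(\log|v|) = \Oh(1)$ time. Otherwise I would apply \cref{fct:dec} to decompose $v = uv'v''$ with $v'$ distinguished, $|u|\le |v'|$, and $1\le |v''|\le f(|v|)$. \cref{obs:minmax}(\ref{it:minmax:c}) then gives $\MinSuf(v) = \min(\MinSuf(uv', v''), \MinSuf(v''))$, and \cref{lem:extend} applied to $u$, $v'$, and $w = v''$ further yields $\MinSuf(uv', v'') \in \{\RMaxSuf(u, v')\, v'',\ \MinSuf(v', v'')\}$. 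Hence $\MinSuf(v)$ is the lexicographic minimum of three explicit candidates: $\RMaxSuf(u, v')\, v''$, delivered in $\Oh(1)$ by \cref{lem:max}; $\MinSuf(v', v'')$, returned in $\Oh(1)$ by the \cref{lem:fus} structure stored for the distinguished $v'$ when queried at position $r$; and $\MinSuf(v'')$, obtained by a single recursive call on the strictly shorter $v''$. The three candidates can be compared lexicographically in $\Oh(1)$ time using the enhanced suffix array.

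The main technical obstacle is bounding the recursion depth by $\Oh(\log^*|v|)$. Each level replaces the current length $L$ by at most $f(L) = 2^{\floor{\log\log L}^2}$, so I would track how $\log^*$ behaves under this map. A direct calculation shows $\log^* f(L) = \log^* L - 1$ for all sufficiently large $L$: the outer $2^{(\cdot)}$ contributes one $\log$, squaring changes $\log^*$ by $\Oh(1)$ (indeed $\log^*(y^2) = 1 + \log^*(2\log y) = \log^* y$ for large $y$), and the inner double $\log$ saves two. Therefore after $\Oh(\log^*|v|)$ recursive levels the length drops below $2^{16}$ and the base case applies. Since every level performs $\Oh(1)$ work, the total query time is $\Oh(\log^*|v|)$, completing the proof.
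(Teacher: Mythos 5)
Your proposal is correct and follows essentially the same approach as the paper: enhanced suffix array plus the \cref{lem:fus} gadget for each distinguished fragment, the three-candidate decomposition from \cref{fct:dec}, and an $\Oh(\log^*)$ recursion on the short tail $v''$. The only cosmetic difference is the depth analysis -- you argue directly that $\log^* f(L)$ drops by one per level, whereas the paper shows $f(f(x))\le \log x$ so that two levels suffice to replace $L$ by $\log L$; both yield $\Oh(\log^*|v|)$ (note a small slip: $2^{\floor{\log q}^2}\le q^{\floor{\log q}}$, not $\ge$, though the series $\sum_q q/2^{\floor{\log q}^2}$ still converges since $2^{\floor{\log q}^2}\ge q^{\floor{\log q}-1}$).
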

\begin{proof}
Our data structure consists of the enhanced suffix array (\cref{thm:esa}) and the components
of \cref{lem:fus} for all distinguished fragments of $T$.
Each such fragment of length $2^q$ contributes $\Oh(q)$ to the space consumption and to the construction time,
which in total over all lengths sums up to $\Oh(\sum_{q}\frac{nq}{q^{\omega(1)}})=\Oh(\sum_{q}\frac{n}{q^{\omega(1)}})=\Oh(n)$.

Let us proceed to the query algorithm.
Assume we are to compute the minimal suffix of a fragment $v$.
If $|v|\le f(|v|)$ (i.e., if $|v| \le 2^{16}$), we use the logarithmic-time query algorithm given in \cref{cor:logn}. 
If $|v|>2^{q}$, we apply \cref{fct:dec} to determine a decomposition $v=uv'v''$, which gives us three candidates for $\MinSuf(v)$.
As already described, $\MinSuf(v'')$ is computed recursively, $\MinSuf(v',v'')$ using \cref{lem:fus},
and $\RMaxSuf(u,v')v''$ using \cref{lem:max}. The latter two both support constant-time queries, so the overall time complexity is proportional
to the depth of the recursion. We have $|v''|\le f(|v|)<|v|$, so it terminates.
Moreover,
$$f(f(x))=2^{\floor{\log(\log f(x))}^2}\le 2^{(\log(\log \log x)^2)^2}=2^{4(\log\log\log x)^2}=2^{o(\log\log x)}=o(\log x).$$
Thus, $f(f(x))\le \log x$ unless $x=\Oh(1)$. 
Consequently, unless $|v|=\Oh(1)$, when the algorithm clearly needs constant time, 
the length of the queried fragment is in two steps reduced from $|v|$ to at most $\log |v|$.
This concludes the proof that the query time is $\Oh(\log^* |v|)$.
\end{proof}

\subsection{$\Oh(1)$-time \MSQ}\label{sec:constant}
The $\Oh(\log^* |v|)$ time complexity of the query algorithm of \cref{thm:logstar} is only due to the recursion,
which in a single step reduces the length of the queried fragment from $|v|$ to $f(|v|)$ where $f(x)=2^{\floor{\log \log x}^2}$.
Since $f(f(x))=2^{o(\log\log x)}$, after just two steps the fragment length does not exceed $f(f(n)) = o(\frac{\log n}{\log \log n})$.
In this section we show that the minimal suffixes of such short fragments can precomputed in a certain sense,
and thus after reaching $\tau=f(f(n))$ we do not need to perform further recursive calls.

For constant alphabets, we could actually store all the answers for all $\Oh(\sigma^\tau)=n^{o(1)}$ strings of length up to $\tau$.
Nevertheless, in general all letters of $T$, and consequently all fragments of $T$, could even be distinct. 
However, the answers to \MSQ actually depend only on the relative order between letters, which is captured by 
order-isomorphism.

Two strings $x$ and $y$ are called \emph{order-isomorphic}~\cite{DBLP:journals/ipl/KubicaKRRW13,DBLP:journals/tcs/KimEFHIPPT14}, denoted as $x\approx y$,
if $|x|=|y|$ and for every two positions $i,j$ ($1\le i, j \le |x|$) we have $x[i] \prec x[j] \Longleftrightarrow y[i] \prec y[j].$
Note that the equivalence extends to arbitrary corresponding fragments of $x$ and $y$,
i.e., $x[i..j]\prec x[i'..j'] \Longleftrightarrow y[i..j]\prec y[i'..j']$.
Consequently, order-isomorphic strings cannot be distinguished using \MSQ or \GMSQ.

Moreover, note that every string of length $m$ is order-isomorphic to a string over an alphabet $\{1,\ldots,m\}$.
Consequently, order-isomorphism partitions strings of length up to $m$ into $\Oh(m^m)$ equivalence classes.
The following fact lets us compute canonical representations of strings whose length is bounded by $m=W^{\Oh(1)}$.

\begin{fact}\label{fct:oid}
For every fixed integer $m=W^{\Oh(1)}$, there exists a function $\oid$ mapping each string $w$ of length up to $m$ to a non-negative integer $\oid(w)$ with $\Oh(m\log m)$ bits, so that $w\approx w' \Longleftrightarrow \oid(w)=\oid(w')$. 
Moreover, the function can be evaluated in $\Oh(m)$ time.
\end{fact}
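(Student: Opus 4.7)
The plan is to define $\oid(w)$ as a canonical encoding of the ``rank sequence'' of $w$. For a string $w$ of length $n$, I would set $r_i = |\{c \in \{w[1], \ldots, w[n]\} : c \prec w[i]\}|$, i.e., the number of distinct characters of $w$ strictly less than $w[i]$. Then $\oid(w)$ is the integer obtained by packing $n$ followed by $r_1, \ldots, r_n$ into consecutive fields of $\lceil \log(m+1) \rceil$ bits each; since $n \le m$ and each $r_i < n$, the result has $(n+1)\lceil \log(m+1) \rceil = O(m \log m)$ bits, as required.

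The correctness of $w \approx w'$ iff $\oid(w) = \oid(w')$ follows from the key property $r_i < r_j \iff w[i] \prec w[j]$ and $r_i = r_j \iff w[i] = w[j]$. In the forward direction, order-isomorphism preserves equality and strict inequality of characters at every pair of positions, so the partition of positions into equal-value classes and their linear order match across $w$ and $w'$; hence $r(w) = r(w')$, and equal lengths are inherited from the definition of $\approx$. In the reverse direction, $\oid(w) = \oid(w')$ forces identical lengths (from the length field) and identical rank sequences, and then the property above lets me read off strict order and equality of characters at corresponding positions, giving $w \approx w'$.

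For the $O(m)$-time evaluation, I plan to maintain a dynamic fusion tree $F$ (\cref{thm:fus}) over the distinct characters encountered so far. Scanning $w$ left to right, for each $i$ I would issue a successor query to test whether $w[i] \in F$ and, if not, insert it; a second scan then computes $r_i = \rank_F(w[i])$. Finally, I assemble $(n, r_1, \ldots, r_n)$ into the integer $\oid(w)$ via $O(1)$ bit-shifts and ORs per field. Since $F$ holds at most $n \le m = W^{O(1)}$ elements throughout, each operation costs $O(1+\log_W |F|) = O(1)$, so the total cost is $O(n) = O(m)$. The main obstacle is precisely this linear-time ranking: a comparison-based approach would cost $\Omega(n \log n)$, and deterministic sorting of $W$-bit integers is not known to run in linear time in general. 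What rescues the approach is the shape of our parameter regime --- $|F| \le m = W^{O(1)}$ gives $\log_W |F| = O(1)$, which collapses every fusion-tree operation to constant time and reduces the whole computation to a single linear scan.
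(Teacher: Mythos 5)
Your proof is correct and uses essentially the same construction as the paper: replace each character with its rank among the distinct characters of $w$ (your $r_i$ is exactly this rank), pack these ranks together with the length into $O(\log m)$-bit fields, and use a fusion tree over at most $m = W^{O(1)}$ keys so that each insertion and rank query runs in $O(1)$ time, giving an $O(m)$ total evaluation. The only difference is that you spell out the two-pass dynamic-fusion-tree implementation and the order-isomorphism equivalence in more detail than the paper does, but the underlying idea is identical.
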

\begin{proof}
To compute $\oid(w)$, we first build a fusion tree storing all (distinct) letters which occur in $w$.
Next, we replace each character of $w$ with its rank among these letters. We allocate $\ceil{\log m}$
bits per character and prepend such a representation with $\ceil{\log m}$ bits encoding $|w|$.
This way $\oid(w)$ is a sequence of $(|w|+1)\ceil{\log m}=\Oh(m\log m)$ bits.
Using \cref{thm:fus} to build the fusion tree, we obtain an $\Oh(m)$-time evaluation algorithm.
\end{proof}

To answer queries for short fragments of $T$, we define overlapping \emph{blocks} of length $m=2\tau$:
for $0\le i \le \frac{n}{\tau}$ we create a block $T_i = T[1+i\tau..\min(n,(i+2)\tau)]$.
For each block we apply \cref{fct:oid} to compute the identifier $\oid(T_i)$. 
The total length of the blocks is bounded $2n$, so this takes $\Oh(n)$ time. 
The identifiers use $\Oh(\frac{n}{\tau}\tau \log \tau)=O(n\log \tau)$ bits of space.

Moreover, for each distinct identifier $\oid(T_i)$, we store the answers to all the \MSQ queries in $T_i$. 
This takes $\Oh(\log m)$ bits per answer, and $\Oh(2^{\Oh(m\log m)}m^2\log m)=2^{\Oh(\tau\log \tau)}$ 
in total.
Since $\tau = o(\frac{\log n}{\log \log n})$, this is $n^{o(1)}$.
The preprocessing time is also $n^{o(1)}$.

It is a matter of simple arithmetic to extend a given fragment $v$ of $T$, $|v|\le \tau$, to a block~$T_i$.
We use the precomputed answers stored for $\oid(T_i)$ to determine the minimal suffix of $v$.
We only need to translate the indices within $T_i$ to indices within $T$ before we return the answer.
The following theorem summarizes our contribution for short fragments:

\begin{theorem}\label{thm:small}
For every text $T$ of length $n$ and every parameter $\tau=o(\frac{\log n}{\log\log n})$ there exists a data structure of size $\Oh(\frac{n\log \tau}{\log n})$ which
can answer in $\Oh(1)$ time \MSQ for fragments of length not exceeding $\tau$.
Moreover, it can be constructed in $\Oh(n)$ time.
\end{theorem}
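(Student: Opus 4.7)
The plan is to precompute the answers to \MSQ for all sufficiently short fragments, but since the answers are invariant under order-isomorphism of the surrounding context, I would group fragments by order-isomorphism type to keep the precomputation manageable. Concretely, I would cover $T$ by overlapping \emph{blocks} of length $m=2\tau$: for $0 \le i \le n/\tau$, set $T_i = T[1+i\tau..\min(n,(i+2)\tau)]$. The key combinatorial point is that every fragment $v=T[\ell..r]$ with $|v|\le\tau$ is entirely contained in some block $T_i$, and both the block index and the relative offset within the block can be computed in $\Oh(1)$ arithmetic from $\ell$ and $r$.

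Next, I would apply \cref{fct:oid} with parameter $m = 2\tau$ to compute $\oid(T_i)$ for every block. Each computation takes $\Oh(\tau)$ time, and the total length of the blocks is $\Oh(n)$, giving $\Oh(n)$ preprocessing time for all identifiers. Each identifier occupies $\Oh(\tau \log \tau)$ bits, so storing all of them takes $\Oh(\frac{n}{\tau} \cdot \tau \log \tau) = \Oh(n \log \tau)$ bits, i.e., $\Oh(\frac{n \log \tau}{\log n})$ machine words, matching the claimed bound.

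For each distinct value $\oid(T_i)$ encountered, I would allocate a lookup table storing, for every pair of positions $(\ell',r')$ inside the block, the length of the minimal suffix of the corresponding fragment. A single answer occupies $\Oh(\log \tau)$ bits, and there are $\Oh(\tau^2)$ fragments per block. Since \cref{fct:oid} ensures at most $2^{\Oh(m \log m)} = 2^{\Oh(\tau \log \tau)}$ distinct identifiers, the total size of these tables is $2^{\Oh(\tau \log \tau)} \cdot \tau^2 \log \tau$ bits. Filling each table can be done in time proportional to its size using (for instance) \cref{cor:logn} applied to a canonical representative of each equivalence class. The bound $\tau = o(\frac{\log n}{\log \log n})$ makes all of this $n^{o(1)}$, negligible compared to the identifier storage and well within the $\Oh(n)$ preprocessing budget.

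Answering a query on $v = T[\ell..r]$ with $|v| \le \tau$ is then immediate: compute the appropriate block index $i$ and offsets $(\ell', r')$, fetch $\oid(T_i)$, index into the corresponding lookup table, and translate the returned length back to a position in $T$. Correctness follows because the order-isomorphism relation preserves all lexicographic comparisons between corresponding fragments, and hence the identity and length of the minimal suffix. The only real obstacle is balancing the two space contributions: the identifier array, which grows linearly in $\frac{n\log\tau}{\log n}$, and the per-class lookup tables, which are exponential in $\tau \log \tau$; the hypothesis $\tau = o(\log n / \log \log n)$ is exactly what makes the second term $n^{o(1)}$, so that the first one dominates and the stated size bound holds.
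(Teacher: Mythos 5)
Your proposal is correct and follows essentially the same approach as the paper: the same overlapping-block decomposition with $m = 2\tau$, the same application of \cref{fct:oid} to reduce to order-isomorphism classes, the same per-class lookup tables of answers, and the same space accounting showing the identifier array dominates and is $\Oh(\frac{n\log\tau}{\log n})$ words while the tables are $n^{o(1)}$.
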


As noted at the beginning, this can be used to speed up queries for arbitrary fragments:
\begin{theorem}\label{thm:main}
For every text $T$ of length $n$ there exists a data structure of size $\Oh(n)$ which 
can be constructed in $\Oh(n)$ time and answers \MSQ in $\Oh(1)$ time.
\end{theorem}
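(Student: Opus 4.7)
The plan is to combine the ingredients already assembled: the $\Oh(\log^* |v|)$-time query algorithm of \cref{thm:logstar} with the short-fragment table of \cref{thm:small}, chopping off the recursion after a bounded number of steps. Concretely, I set $\tau = f(f(n))$, where $f(x) = 2^{\floor{\log\log x}^2}$ is the function driving the recursion in \cref{thm:logstar}. The calculation $f(f(x)) = 2^{o(\log\log x)}$ already performed inside the proof of \cref{thm:logstar} gives $\tau = o(\log n/\log\log n)$, so $\tau$ lies in the parameter range required by \cref{thm:small}.

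For preprocessing, I would first build, in $\Oh(n)$ time and $\Oh(n)$ space, exactly the structure of \cref{thm:logstar}: the enhanced suffix array of $T$ (\cref{thm:esa}) and the fusion-tree component of \cref{lem:fus} for every distinguished fragment. In addition, invoking \cref{thm:small} with the chosen value of $\tau$, I would construct in $\Oh(n)$ further time a table of size $\Oh(n\log\tau/\log n) = o(n)$ that answers \MSQ on any fragment of length at most $\tau$ in $\Oh(1)$ time. The combined data structure still occupies $\Oh(n)$ space and is built in $\Oh(n)$ time.

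The query algorithm is the one from \cref{thm:logstar} with the base case enlarged: on input $v$, if $|v|\le\tau$ I answer directly from the table of \cref{thm:small}; otherwise I apply \cref{fct:dec} to write $v = uv'v''$ with $v'$ distinguished, $|u|\le |v'|$, and $|v''|\le f(|v|)$, and return the best of the three candidates $\RMaxSuf(u,v')v''$, $\MinSuf(v',v'')$, and $\MinSuf(v'')$, as justified by \cref{obs:minmax}(\ref{it:minmax:c}) and \cref{lem:extend}. The first candidate is computed in $\Oh(1)$ by \cref{lem:max}, the second in $\Oh(1)$ via the stored \cref{lem:fus}-component for $v'$, and the third by recursion.

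The key observation that makes the query constant-time is that each recursive call strictly shrinks the fragment length from $|v|$ to $|v''|\le f(|v|)$, so after at most two calls the length is at most $f(f(n))=\tau$ and the recursion bottoms out in the \cref{thm:small} table. Thus the recursion depth is at most two and the overall query time is $\Oh(1)$. The only delicate point is the interplay between the two structures: one must verify that $f(f(n))$ is within the range $o(\log n/\log\log n)$ demanded by \cref{thm:small}, but this is exactly the estimate $f(f(x))=2^{o(\log\log x)}$ already used to bound the $\log^*$ recursion depth in \cref{thm:logstar}, so no new combinatorial work is required.
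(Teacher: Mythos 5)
Your proposal is correct and follows exactly the route the paper takes: it sets $\tau=f(f(n))$, builds the structure of \cref{thm:logstar} plus the short-fragment table of \cref{thm:small}, and replaces the logarithmic-time base case of the recursion with a table lookup, giving recursion depth at most two and hence constant query time.
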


\section{Answering \GMSQ}\label{sec:gmsq}
In this section we develop our data structure for \GMSQ. 
We start with preliminary definitions and then we describe the counterparts of the three data structures presented in \cref{sec:msq}.
Their query times are $\Oh(k^2\log |v|)$, $\Oh(k^2\log ^*|v|)$, and $\Oh(k^2)$, respectively, i.e., there is an $\Oh(k^2)$ overhead compared
to \MSQ.

We define a \emph{$k$-fragment} of a text $T$ as a concatenation $T[\ell_1..r_1]\cdots T[\ell_k..r_k]$ of $k$ fragments of the text $T$. 
Observe that a $k$-fragment can be stored in $\Oh(k)$ space as a sequence of pairs $(\ell_i,r_i)$.
If a string $w$ admits such a decomposition using $k'$ ($k'\le k$) substrings, we call it a \emph{$k$-substring} of $T$.
Every $k'$-fragment (with $k'\le k$) whose value is equal to $w$ is called an \emph{occurrence} of $w$ as a $k$-substring of $T$.
Observe that a substring of a $k$-substring $w$ of $T$ is itself a $k$-substring of $T$.
Moreover, given an occurrence of $w$, one can canonically assign each fragment of $w$ to a $k'$-fragment of $T$ ($k'\le k$).
This can be implemented in $\Oh(k)$ time and referring to $w[\ell..r]$ in our algorithms, we assume that such an operation
is performed.

Basic queries regarding $k$-fragments easily reduce to their counterparts for 1-fragments:
\begin{observation}\label{obs:esa}
The enhanced suffix array can answer queries (\ref{it:cmp}), (\ref{it:lcps}), and (\ref{it:lcpex}), in $\Oh(k)$ time if $x$ and $y$ are $k$-fragments of $T$.
\end{observation}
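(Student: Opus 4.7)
The plan is to reduce each of queries (a)--(c) on $k$-fragments to $\Oh(k)$ invocations of the 1-fragment primitives supplied by \cref{thm:esa}. Write $x=x_1\cdots x_k$ and $y=y_1\cdots y_k$ as concatenations of 1-fragments of $T$, noting that any suffix or prefix of a $k$-fragment is itself representable as a $k$-fragment, so later I can freely feed pieces such as $y[|x|+1..]$ back into the $k$-fragment routines.

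For (a) and (b), I would use a straightforward \emph{fragment-walking} procedure. Maintain a pointer $(i,o)$ into $x$ (meaning position $o+1$ inside $x_i$) and a pointer $(j,o')$ into $y$, both initialized at the left end. At each step, compute $d=\lcp(x_i[o+1..],\,y_j[o'+1..])$ in $\Oh(1)$ via \cref{thm:esa}(\ref{it:lcps}). If $d$ is strictly smaller than both remaining lengths, the mismatch has been located: its position yields $\lcp(x,y)$, and comparing the two characters immediately after (or invoking \cref{thm:esa}(\ref{it:cmp}) on the two singleton fragments) settles~(a). Otherwise, advance whichever pointer was exhausted and iterate. Since every step either terminates or strictly advances one of the $2k$ fragment pointers, $\Oh(k)$ primitive queries suffice. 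The $\lcs$ half of (b) is handled by the symmetric right-to-left walk, which is available because the enhanced suffix array is built on both $T$ and $T^R$.

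For (c), matching $y$ naively against $x^\infty$ fragment-by-fragment may cycle through $x$ many times and appears to cost $\Omega(k^2)$; the plan is to bypass this using periodicity. First, compute $d=\lcp(x,y)$ by the $k$-fragment procedure just described. If $d<\min(|x|,|y|)$ the mismatch lies strictly inside the first copy of $x$ and $L:=\lcp(x^\infty,y)=d$. If $d=|y|\le|x|$ then $y$ is already a prefix of $x$ and $L=|y|$. In the remaining case $|x|\le|y|$ and $x$ is a prefix of $y$; since any prefix of $x^\infty$ of length at least $|x|$ is exactly characterized as a prefix of $y$ that starts with $x$ and has $|x|$ as a period, we obtain $L=|x|+e$ where $e=\lcp(y,\,y[|x|+1..])$. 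Both arguments are $k$-fragments, so this second LCP again costs $\Oh(k)$ via (b). To resolve the three-way comparison it suffices to look up $x^\infty[L+1]=x[(L\bmod|x|)+1]$ inside the decomposition of $x$ (an $\Oh(k)$-time operation) and compare it with $y[L+1]$; the boundary case $L=|y|$ gives $y\prec x^\infty$ directly.

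The only conceptual obstacle is the periodic blow-up in (c); once it is absorbed by the single $k$-fragment LCP of $y$ against its own $|x|$-shift, the remaining work amounts to the same bookkeeping over $\Oh(k)$ fragment pointers used in (a) and (b).
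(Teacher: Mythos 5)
Your proposal is correct and is exactly the elaboration the paper has in mind; the paper states this observation with no proof, prefacing it only with ``Basic queries regarding $k$-fragments easily reduce to their counterparts for 1-fragments.'' The fragment-walking argument for (a) and (b) is the standard reduction, and you correctly identify the one genuine subtlety — that naively matching $y$ against $x^\infty$ could cycle $\Theta(|y|/|x|)$ times — and resolve it with the right periodicity observation: once $x$ is a prefix of $y$, $\lcp(x^\infty,y)=|x|+\lcp(y,y[|x|+1..])$ because a prefix of $y$ extending past $|x|$ agrees with $x^\infty$ if and only if it has period $|x|$, and the latter is a single $k$-fragment LCP. Reading off $x^\infty[L+1]$ by locating position $(L\bmod|x|)+1$ inside the decomposition of $x$ is the right way to settle the three-way comparison in $\Oh(k)$.
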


\noindent
\GMSQ can be reduced to the following auxiliary queries:
\begin{problem}{\AMSQ}
Given a fragment $v$ of $T$ and a $k$-fragment $w$ of $T$, compute $\MinSuf(v,w)$ (represented as a $(k+1)$-fragment of $T$).
\end{problem}

\begin{lemma}\label{lem:red}
For every text $T$, the minimal suffix of a $k$-fragment $v$ can be determined by $k$ \AMSQ (with $k'< k$)
and additional $\Oh(k^2)$-time processing using the enhanced suffix array of $T$.
\end{lemma}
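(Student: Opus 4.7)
The plan is to reduce $\MinSuf$ of a $k$-fragment to $k$ \AMSQ queries via iterated use of \cref{obs:minmax}(\ref{it:minmax:c}). Write $v=v_1v_2\cdots v_k$ with $v_i=T[\ell_i..r_i]$, and adopt the convention $v_{k+1}\cdots v_k=\eps$. A reverse induction on $j\in\{k,k-1,\ldots,1\}$ (the base case $j=k$ being trivial, and each inductive step a single application of \cref{obs:minmax}(\ref{it:minmax:c}) with $w=v_{j+1}\cdots v_k\ne\eps$) will yield
$$\MinSuf(v_jv_{j+1}\cdots v_k)\;=\;\min_{j\le i\le k}\,\MinSuf\bigl(v_i,\,v_{i+1}\cdots v_k\bigr).$$
Specialising to $j=1$ produces $k$ candidates $c_i=\MinSuf(v_i,\,v_{i+1}\cdots v_k)$ whose minimum equals $\MinSuf(v)$.

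Each $c_i$ is precisely what a single \AMSQ on the fragment $v_i$ of $T$ and the $(k-i)$-fragment $v_{i+1}\cdots v_k$ returns, so every invocation uses parameter $k'=k-i<k$, giving exactly $k$ queries in total (for $i=k$ the argument $w$ is empty, which is admissible with $k'=0<k$). By the \AMSQ contract, $c_i$ is returned as at most a $(k-i+1)$-fragment of $T$, hence as at most a $k$-fragment.

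To finish, I would compute $\min_i c_i$ by $k-1$ sequential pairwise lexicographic comparisons. By \cref{obs:esa}, each comparison of $k$-fragments costs $\Oh(k)$ on the enhanced suffix array of $T$, so the minimisation contributes $\Oh(k^2)$ processing time in total, matching the bound claimed in the statement; the index achieving the minimum tells us which $c_i$ to output, and the length of $v$ determines its representation as a suffix of $v$.

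I do not anticipate any real obstacle: the reduction is essentially syntactic, with the only careful bit being the book-keeping that tracks which \AMSQ produced the winning candidate and re-assembles it as a proper suffix representation of $v$ on output. The genuinely interesting algorithmic work is deferred to the \AMSQ data structures developed in the remainder of \cref{sec:gmsq}.
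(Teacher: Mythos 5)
Your reduction has the right shape, but the base case of your induction is false, and the error propagates into your candidate set. You assert $\MinSuf(v_k)=\MinSuf(v_k,v_{k+1}\cdots v_k)=\MinSuf(v_k,\eps)$ and treat this as a valid \AMSQ with $k'=0$. However, by the definition in \cref{sec:comb}, $\MinSuf(v,w)$ is the lexicographically smallest $sw$ over all \emph{possibly empty} suffixes $s$ of $v$; with $w=\eps$ the empty suffix $s=\eps$ wins the minimization, so $\MinSuf(v_k,\eps)=\eps$, which is not $\MinSuf(v_k)$ (always non-empty). Your candidate $c_k$ is therefore $\eps$ and your tournament $\min_i c_i$ would always return $\eps$. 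Note the deliberate asymmetry in \cref{obs:minmax}: its part~(a) gives $\MaxSuf(v,\eps)=\MaxSuf(v)$, but the analogous identity for $\MinSuf$ is precisely what fails.

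The paper sidesteps this by invoking \cref{obs:minmax}(\ref{it:minmax:b}): $\MinSuf(v_k)=\MinSuf\bigl(v_k[1..|v_k|-1],\,v_k[|v_k|]\bigr)$, an \AMSQ whose second argument is the single character $v_k[|v_k|]$, a $1$-fragment (so $k'=1<k$). Once you replace $c_k=\MinSuf(v_k,\eps)$ by this query, you still make exactly $k$ auxiliary queries, all with $k'<k$, and the remainder of your argument --- the reverse induction for $j<k$, where $v_{j+1}\cdots v_k\ne\eps$ and \cref{obs:minmax}(\ref{it:minmax:c}) legitimately applies, and the $\Oh(k^2)$-time minimisation by $k-1$ pairwise comparisons via \cref{thm:esa,obs:esa} --- agrees with the paper's proof.
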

\begin{proof}
Let $v=v_1\cdots v_k$. 
By \cref{obs:minmax}(\ref{it:minmax:c}), $\MinSuf(v)=\MinSuf(v_k)$ or for some $i$, $1\le i < k$, we have $\MinSuf(v)=\MinSuf(v_i,v_{i+1}\cdots v_k)$. Hence, we apply \AMSQ to determine $\MinSuf(v_i,v_{i+1}\cdots v_k)$ for each $1\le i < k$.
\cref{obs:minmax}(\ref{it:minmax:b}) lets reduce computing $\MinSuf(v_k)$ to another auxiliary query.
Having obtained $k$ candidates for $\MinSuf(v)$, we use the enhanced suffix array to return the smallest among them using $k-1$ comparisons, each performed in $\Oh(k)$ time; see \cref{thm:esa,obs:esa}.
\end{proof}

\begin{fact}\label{fct:gensimple}
\AMSQ can be answered in $\Oh(k\log|v|)$ time using the enhanced suffix array of $T$.
\end{fact}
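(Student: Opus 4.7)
The plan is to combine the structural information provided by \cref{lem:gensig} with the ranking characterization of $\MinSuf(v,w)$ given in \cref{lem:char}. I would first invoke \cref{lem:gensig} on $v$ to compute $\Sig(v) = \{s_\lambda, \ldots, s_{m+1}\}$ in $\Oh(\log|v|)$ time. By \cref{obs:simple}, this set has size $\Oh(\log|v|)$, and for each $j$ the string $x_j^{p_j}$ is available as the suffix of $v$ of length $|s_j|-|s_{j+1}|$; in particular, its prefix $x_j$ (of length $|v_j|$) is representable as a $1$-fragment of $T$, and $x_j^\infty = (x_j)^\infty$ is accessible to the LCP-infinite query of \cref{thm:esa}(\ref{it:lcpex}).

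Next, by \cref{lem:char}, we have $\MinSuf(v,w) = s_{m+1-r}\,w$ where $r = \rank_{X(v)}(w)$, and the elements of $X(v) = \{x_\lambda^\infty, \ldots, x_m^\infty\}$ are strictly decreasing in $j$. To determine $r$ I would simply scan $j$ from $\lambda$ to $m$ and at each step lexicographically compare $x_j^\infty$ against $w$. Each such comparison is a query of type (\ref{it:lcpex}) from \cref{thm:esa}, but performed against the $k$-fragment $w$ rather than a single fragment of $T$; by \cref{obs:esa}, it is still answered in $\Oh(k)$ time. With $\Oh(\log|v|)$ comparisons, the total cost is $\Oh(k\log|v|)$, which absorbs the $\Oh(\log|v|)$ preprocessing for $\Sig(v)$.

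Finally, the output $s_{m+1-r}\,w$ is directly expressible as a $(k+1)$-fragment of $T$: the string $s_{m+1-r}$ is a suffix of $v$, hence a $1$-fragment of $T$, while $w$ is a $k$-fragment, so their concatenation uses at most $k+1$ pieces. The main subtlety is handling comparisons between the infinite periodic string $x_j^\infty$ and the $k$-fragment $w$; this is exactly the setting addressed by \cref{obs:esa}, which lifts \cref{thm:esa}(\ref{it:lcpex}) to $k$-fragments at the price of an $\Oh(k)$ factor, so no further machinery is needed to obtain the claimed $\Oh(k\log|v|)$ bound.
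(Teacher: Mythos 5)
Your argument is correct and achieves the stated bound, but it takes a slightly different intermediate route than the paper. The paper's proof is more direct: after computing $\Sig(v)$ via \cref{lem:gensig}, it simply observes that $\MinSuf(v,w)$ must be one of the $\Oh(\log|v|)$ strings $\{sw : s\in\Sig(v)\}$, each a $(k+1)$-fragment of $T$, and picks the smallest by $\Oh(\log|v|)$ pairwise comparisons, each costing $\Oh(k)$ by \cref{obs:esa}. You instead invoke the finer characterization of \cref{lem:char} and first determine the rank $r = \rank_{X(v)}(w)$ by scanning the decreasing sequence $x_\lambda^\infty \succ \cdots \succ x_m^\infty$ and comparing each $x_j^\infty$ against $w$ via query~(\ref{it:lcpex}), lifted to $k$-fragments by \cref{obs:esa}; then you read off $s_{m+1-r}w$. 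Both approaches amount to $\Oh(\log|v|)$ comparisons at $\Oh(k)$ time each, and yours is sound — in particular $x_j$ is indeed a $1$-fragment of $T$ (it is a prefix of the suffix $x_j^{p_j}$ of $v$ by \cref{obs:simple}), so the $\lcp(x_j^\infty,\cdot)$ queries are available. What the paper's route buys is economy: it needs only the bare definition of significant suffixes plus \cref{lem:gensig}, whereas your route additionally imports the full inequality chain of \cref{lem:char}. What your route buys is that it foreshadows the organization of the faster variants (\cref{lem:fus,lem:genfus}), which do compute ranks in $X(v)$ explicitly; as a standalone proof of the $\Oh(k\log|v|)$ bound, though, it is slightly heavier than necessary.
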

\begin{proof}
We apply \cref{lem:gensig} to determine $\Sig(v)$, and then we compute the smallest string among $\{sw : s\in \Sig(v)\}$.
These strings are $(k+1)$-fragments of $T$ and thus a single comparison takes $\Oh(k)$ time using the enhanced suffix array.
\end{proof}

\begin{corollary}
\GMSQ can be answered in $\Oh(k^2\log|v|)$ time using the enhanced suffix array of $T$.
\end{corollary}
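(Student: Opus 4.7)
The plan is simply to compose the two preceding results. Given the input $k$-fragment $v=v_1\cdots v_k$, I would first apply \cref{lem:red}, which reduces \GMSQ on $v$ to $k$ \AMSQ subqueries of the form $\MinSuf(v_i,v_{i+1}\cdots v_k)$, each whose second argument is a $k'$-fragment with $k'<k$, together with $\Oh(k^2)$ additional processing on the enhanced suffix array (for pairwise comparison of the resulting $(k+1)$-fragment candidates via \cref{thm:esa} and \cref{obs:esa}).

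Next I would answer each of the $k$ subqueries by invoking \cref{fct:gensimple}. Since each subquery has $k'<k$ and since $|v_i|\le|v|$, its running time is bounded by $\Oh(k\log|v|)$. Summing over the $k$ calls yields $\Oh(k^2\log|v|)$ total, which absorbs the $\Oh(k^2)$ post-processing and gives the claimed bound. Representing the final answer as the length of a suffix of the concatenation $v$ (rather than as a sub-$k$-fragment of $T$) is a constant-time bookkeeping step, since the decomposition of $v$ into $v_1,\ldots,v_k$ is known.

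I do not foresee any real obstacle: the corollary is essentially the arithmetic product of \cref{lem:red} and \cref{fct:gensimple}. The only care needed is to uniformly bound each $\log|v_i|$ by $\log|v|$ and to recognize that the comparisons carried out in \cref{lem:red} are already charged to its $\Oh(k^2)$ overhead, so no additional cost is incurred beyond what is stated.
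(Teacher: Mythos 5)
Your proposal is correct and is precisely the argument the paper intends: the corollary is stated immediately after Lemma~\ref{lem:red} and Fact~\ref{fct:gensimple} as their direct composition, with the $k$ auxiliary calls each costing $\Oh(k\log|v|)$ and the $\Oh(k^2)$ post-processing absorbed. The bounding of each $\log|v_i|$ by $\log|v|$ and the observation that the final comparisons are already charged in Lemma~\ref{lem:red} are exactly the right bookkeeping points.
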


\subsection{$\Oh(k\log^*|v|)$-time \AMSQ}
Our data structure closely follows its counterpart described in \cref{sec:logstar}.
We define distinguished fragments in the same manner and provide a recursive algorithm based on \cref{fct:dec}.
However, for each distinguished fragment instead of applying \cref{lem:fus}, we build the following much stronger data structure.
Its implementation is provided in Section~\ref{sec:genfus}.
\begin{restatable}{lem}{lemgenfus}\label{lem:genfus}
Let $v$ be a fragment of $T$. There exists a data structure of size $\Oh(\log^2 |v|)$ which answers the following queries in $\Oh(k)$ time:
determine $\MinSuf(v,w)$ for a given $k$-fragment $w$ of $T$.
The data structure can be constructed in $\Oh(\log^2 |v|)$ time; it assumes the access
to the enhanced suffix array of $T$.
\end{restatable}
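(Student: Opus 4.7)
By \cref{lem:char}, determining $\MinSuf(v,w)$ reduces to locating $w$ within the $\Oh(\log|v|)$-element sorted set $X(v)=\{x_\lambda^\infty\succ\cdots\succ x_m^\infty\}$: once the rank $r=\rank_{X(v)}(w)$ is known, the answer $s_{m+1-r}w$ is read off in $\Oh(1)$ from a stored copy of $\Sig(v)$. The new difficulty compared with \cref{lem:fus} is that $w$ is a $k$-fragment, so there is no single anchor in $T$ that can be hard-wired into the data structure at preprocessing time; the plan is instead to build a compacted binary trie $\T$ over $X(v)$ and process the $k$ pieces of $w$ one by one, using fusion-tree shortcuts installed at every node of $\T$.

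For preprocessing, I would run \cref{lem:gensig} to obtain $\Sig(v)$ and record each generator $x_j$ as an $\Oh(1)$-space fragment of $T$, so that comparisons against $x_j^\infty$ are supported by \cref{thm:esa}(\ref{it:lcpex}) in $\Oh(1)$ time. The trie $\T$ has $\Oh(\log|v|)$ nodes and edges, with every edge label representable by a constant number of $T$-positions together with a rotation offset modulo $|x_j|$ to encode the periodic tail correctly. At each internal node $u$ at matched depth $P_u$ I would install a fusion tree analogous to the one in \cref{lem:fus}: its $\Oh(\log|v|)$ keys are integer encodings of the $\lcp$ break-points of the shifted strings $x_j^\infty[P_u+1..]$ that survive beneath $u$, parameterised so that a rank query with the anchor of a query piece of $w$ becomes an ordinary integer query on word RAM. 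Summed over all $\Oh(\log|v|)$ nodes, the size and build time are $\Oh(\log^2|v|)$, as required.

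At query time, given $w=w_1w_2\cdots w_k$ with $w_i=T[\alpha_i..\beta_i]$, I would walk down $\T$ starting at the root, maintaining the current node $u$ and the accumulated matched length $P$. For each piece $w_i$ in order, I would issue a single fusion-tree query at $u$, feeding it $\alpha_i$ and $\beta_i-\alpha_i+1$ in the role played by $r$ and $r'$ in \cref{lem:fus}; this $\Oh(1)$-time query either advances the walk down to a deeper node $u'$ (consuming all of $w_i$ and updating $P$ by $|w_i|$) or pinpoints the first position where $w_i$ disagrees with every surviving $x_j^\infty$, which immediately yields $r$. After at most $k$ such steps the rank $r$ is determined, and the final output $s_{m+1-r}w$ is assembled in $\Oh(1)$. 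The total query time is thus $\Oh(k)$.

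The main obstacle I anticipate is designing the per-node fusion-tree keys so that one integer query correctly accounts for both the periodic shift $P_u$ inside each $x_j^\infty$ and the floating anchor $\alpha_i$ of the current piece, i.e.\ the two freedoms that \cref{lem:fus} did not have to contend with. I plan to handle this by combining the anchor-shift reduction from the proof of \cref{lem:fus} with the constant-time $\lcp(x^\infty,y)$ primitive of \cref{thm:esa}(\ref{it:lcpex}) applied to the rotated period of $x_j$ (which is itself a fragment of $T$ whenever $p_j\ge 2$, and which is treated directly from $x_j$ otherwise), exploiting that $|X(v)|=\Oh(\log|v|)$ keeps every fusion-tree operation constant-time on a word RAM with $W=\Omega(\log n)$.
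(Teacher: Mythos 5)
Your route is genuinely different from the paper's, and it has a real gap that you yourself flag as an ``anticipated obstacle'' without resolving.

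The paper does not build a trie over the infinite strings $X(v)=\{x_\lambda^\infty,\dots,x_m^\infty\}$ at all. Its key move is to observe (via \cref{lem:char} and \cref{obs:simple}, specifically the interleaving $x_j^\infty \succ x_j^{p_j} \succ x_{j+1}^\infty$) that $\rank_{X(v)}(w)$ equals $\rank_{X'(v)}(w)$ or $\rank_{X'(v)}(w)-1$, where $X'(v)=\{x_\lambda^{p_\lambda},\dots,x_m^{p_m}\}$ consists of \emph{ordinary finite suffixes of $v$}, hence fragments of $T$. It then invokes, as a black box, a separate lemma (\cref{lem:k}) that ranks a $k$-fragment among a collection of $T$-fragments via a compressed trie with fusion-tree and nearest-suffix (``$\mathrm{ISA}$ predecessor'') shortcuts at each explicit node; finally it disambiguates the two remaining candidates by one $\Oh(k)$-time comparison $s_iw$ vs.\ $s_{i+1}w$. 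The $\Oh(\log^2|v|)$ bounds come precisely from instantiating \cref{lem:k} with $|A|=|\Sig(v)|=\Oh(\log|v|)$.

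You instead propose a trie over the infinite strings $x_j^\infty$ directly, with periodic edges encoded by ``rotation offsets,'' and per-node fusion trees keyed on shifted $\lcp$ break-points. The trie-walk step, where a single $\Oh(1)$-time query at the current node must both handle an arbitrary anchor $\alpha_i$ of the next piece $w_i$ \emph{and} a residue-class shift inside each surviving periodic string, is exactly where the real work lies, and your paragraph acknowledges you have only a ``plan'' for it. In particular, the nearest-suffix tool (\cref{fct:one}) that powers the trie walk in \cref{lem:k} relies on $\mathrm{SA}/\mathrm{ISA}$ over suffixes of $T$; there is no analogous off-the-shelf primitive for ranking among the $x_j^\infty$ under both a depth shift and a floating anchor, so this step is not a routine adaptation. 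It may be possible to make such a structure work, but as written the proposal leaves the hardest part open. The reduction to $X'(v)$ is precisely what lets the paper sidestep all of this and keep the proof short; it is also the piece your write-up is missing.
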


If $f(|v|)\ge |v|$ ($|v|\le 2^{16}$), we use \cref{fct:gensimple} to compute $\MinSuf(v,w)$ in $\Oh(k\log |v|)=\Oh(k)$ time.
Otherwise, we apply \cref{fct:dec} to decompose $v=uv'v''$ so that $v'$ is distinguished, $|u|\le |v'|$, and $|v''|\le f(|v|)$, where $f(x)=2^{\floor{\log \log x}^2}$.
The characterization of \cref{obs:minmax,lem:extend} again gives three candidates for $\MinSuf(v,w)$:
$\MaxSuf^R(u,v')v''w$, $\MinSuf(v',v''w)$, and $\MinSuf(v'',w)$.
We determine the first using \cref{lem:max}, the second using \cref{lem:genfus}, while the third is computed recursively.
The application of \cref{lem:genfus} takes $\Oh(k+1)$ time, since $v''w$ is a $(k+1)$-fragment of $T$.
We return the best of the three candidates using the enhanced suffix array to choose it in $\Oh(k)$ time.
Since $f(f(x))=o(\log x)$, the depth of the recursion is $\Oh(\log^* |v|)$.
This concludes the proof of the following result:

\begin{theorem}\label{thm:genlogstar}
For every text $T$ of length $n$ there exists a data structure of size $\Oh(n)$ which answers \AMSQ
in $\Oh(k\log^* |v|)$ time and \GMSQ in $\Oh(k^2\log^* |v|)$ time. The data structure 
can be constructed in $\Oh(n)$ time.
\end{theorem}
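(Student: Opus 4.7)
The plan is to mirror closely the proof of \cref{thm:logstar}, replacing the role of \cref{lem:fus} by the stronger \cref{lem:genfus} so that the second argument of the auxiliary minimal-suffix query may itself be a $k$-fragment of $T$. Concretely, the data structure will consist of the enhanced suffix array of $T$ (\cref{thm:esa}) together with one copy of the component of \cref{lem:genfus} attached to each distinguished fragment of $T$, defined exactly as in \cref{sec:logstar}. The query algorithm for \AMSQ proceeds recursively: if $|v| \le 2^{16}$ I fall back on \cref{fct:gensimple}, which needs only $\Oh(k \log |v|) = \Oh(k)$ time; otherwise I invoke \cref{fct:dec} to write $v = uv'v''$ with $v'$ distinguished, $|u| \le |v'|$, and $|v''| \le f(|v|)$, then combine three candidates for $\MinSuf(v,w)$---namely $\MaxSuf^R(u, v')\, v''\, w$ from \cref{lem:max}, $\MinSuf(v', v''w)$ from \cref{lem:genfus} applied to the $(k+1)$-fragment $v''w$, and $\MinSuf(v'', w)$ from a recursive call---returning the smallest using the enhanced suffix array (\cref{obs:esa}) for the $\Oh(k)$-time comparisons.

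Correctness will follow from \cref{obs:minmax,lem:extend}: splitting suffixes of $v$ at the boundary between $uv'$ and $v''$ gives $\MinSuf(v,w) = \min(\MinSuf(v'',w),\, \MinSuf(uv', v''w))$, and \cref{lem:extend} applied to the pair $(u,v')$---valid thanks to the guarantee $|u| \le |v'|$ from \cref{fct:dec}---reduces the second term to one of $\MaxSuf^R(u,v')v''w$ and $\MinSuf(v',v''w)$. For the time bound, each recursion level costs $\Oh(k)$ and shrinks the queried fragment from $|v|$ to at most $f(|v|)$; since $f(f(x)) = 2^{o(\log\log x)} = o(\log x)$, two levels suffice to drop the length below $\log n$, and the recursion bottoms out after $\Oh(\log^* |v|)$ levels, giving the desired $\Oh(k \log^* |v|)$ bound for \AMSQ. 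The \GMSQ bound $\Oh(k^2 \log^* |v|)$ then follows immediately from \cref{lem:red}, which reduces a $k$-fragment query to $k$ \AMSQ calls plus $\Oh(k^2)$ bookkeeping.

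The main obstacle I anticipate is verifying the $\Oh(n)$ bound on preprocessing and space, because each component of \cref{lem:genfus} now weighs $\Oh(\log^2 |v|)$ rather than the $\Oh(\log |v|)$ of \cref{lem:fus}. The number of distinguished fragments of length $2^q$ is at most $n/2^{\floor{\log q}^2}$ and each contributes $\Oh(q^2)$, so I would bound the total by
$$\sum_{q=1}^{\lceil \log n \rceil} \frac{n\, q^2}{2^{\floor{\log q}^2}}.$$
For the finitely many values $q \le 16$ each summand is already $\Oh(n)$, and for $q \ge 16$ the denominator $2^{\floor{\log q}^2}$ grows super-polynomially in $q$, so the tail sum converges and the whole expression stays $\Oh(n)$. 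Once this summation is settled, the rest of the argument is a transparent adaptation of \cref{thm:logstar} with only minor bookkeeping for $k$-fragment concatenations.
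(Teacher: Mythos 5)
Your proposal is correct and follows essentially the same route as the paper: swap \cref{lem:fus} for \cref{lem:genfus} at each distinguished fragment, reuse \cref{fct:dec} for the three-way decomposition, and reduce \GMSQ to \AMSQ via \cref{lem:red}. The paper leaves the $\Oh(n)$ space and construction bound implicit by analogy with \cref{thm:logstar}, while your explicit check that $\sum_q n q^2 / 2^{\floor{\log q}^2} = \Oh(n)$ is exactly the right thing to verify given the jump from $\Oh(\log|v|)$ to $\Oh(\log^2|v|)$ per distinguished fragment, and it is sound.
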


\subsubsection{Rank queries in a collection of fragments}
The crucial tool we use in the proof of \cref{lem:genfus} is a data structure constructed for a collection $A$ of $W^{\Oh(1)}$ fragments of $T$ to support $\rank_{A}(w)$ queries for arbitrary $k$-fragments $w$ of $T$. 
Since it heavily relies on the compressed trie of fragments in $A$, we start by recalling several related concepts.

A trie is a rooted tree whose nodes correspond to prefixes of strings in a given family of strings $A$.
If $\nu$ is a node, the corresponding prefix $v$ is called the \emph{value} of the node.
The node whose value is $v$ is called the \emph{locus} of $v$.

The parent-child relation in the trie is defined so that the root is the locus of $\eps$,
while the parent $\nu'$ of a node $\nu$ is the locus of the value of $\nu$ with the last character removed.
This character is the \emph{label} of the edge from $\nu'$ and $\nu$.
In general, if $\nu'$ is a ancestor of $\nu$, then label of the path from $\nu'$ to $\nu$ is the concatenation
of edge labels on the path.

A node is \emph{branching} if it has at least two children and \emph{terminal} if its value belongs to $A$.
A \emph{compressed trie} is obtained from the underlying trie by dissolving all nodes except
the root, branching nodes, and terminal nodes. Note that this way we compress paths of vertices with single children,
and thus the number of remaining nodes becomes bounded by $2|A|$.
In general, we refer to all preserved nodes of the trie as \emph{explicit} (since they are stored explicitly)
and to the dissolved ones as \emph{implicit}.
Edges of a compressed trie correspond to paths in the underlying tree
and thus their labels are strings in $\Sigma^+$. 
Typically, these labels are stored as references to fragments of the strings in $A$.

Before we proceed with ranking a $k$-fragment in a collection of fragments, 
let us prove that fusion trees make it relatively easy to rank a suffix in a collection of $W^{\Oh(1)}$ suffixes. 
\begin{fact}\label{fct:one}
Let $A$ be a set of $W^{\Oh(1)}$ suffixes of $T$.
There exists a data structure of size $\Oh(|A|)$,
which answers the following queries in $\Oh(1)$ time: given a suffix $v$ of $T$, find a suffix
$u\in A$ maximizing $\lcp(u,v)$.
The data structure can be constructed in $\Oh(|A|)$ time; it assumes the access to the enhanced suffix array of $T$.
\end{fact}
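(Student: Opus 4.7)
The plan is to reduce the query to predecessor/successor computations on suffix-array ranks. For any suffix $u$ of $T$ starting at position $p_u$, write $r_u = \ISA[p_u]$ for its rank in the suffix array of $T$. As preprocessing, I build a fusion tree (\cref{thm:fus}) over the set $R = \{r_u : u \in A\}$. Since $|R| = |A| = W^{\Oh(1)}$, we have $\log_W |A| = \Oh(1)$, so both the construction time $\Oh(|A| + |A|\log_W|A|)$ and any $\pred_R$/$\suc_R$ query time $\Oh(1 + \log_W|A|)$ collapse to $\Oh(|A|)$ and $\Oh(1)$, respectively. Alongside, I store an auxiliary array mapping each $r_u \in R$ back to the corresponding suffix $u \in A$ (equivalently, its starting position $\SA[r_u]$).

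To answer a query with suffix $v = T[p_v..n]$, I compute $r_v = \ISA[p_v]$, retrieve $r^- = \pred_R(r_v)$ and $r^+ = \suc_R(r_v)$ from the fusion tree, and translate them to the two candidate suffixes $u^-, u^+ \in A$. Using \cref{thm:esa}(\ref{it:lcps}), I compute $\lcp(u^-, v)$ and $\lcp(u^+, v)$ in constant time each and return the candidate with the larger value (if only one of $r^-, r^+$ exists, the other is simply skipped; if $r_v \in R$ the candidate $u^+$ equals $v$ itself, with $\lcp(v,v) = |v|$). Correctness rests on the standard suffix-array fact that for any two suffixes $u, u'$ of $T$, $\lcp(u, u')$ equals the minimum of the $\LCP$ table over the range between $r_u$ and $r_{u'}$; consequently, as $u$ ranges over $A$, $\lcp(u, v)$ is unimodal in $r_u$ and is maximized by whichever element of $R$ lies closest to $r_v$ from the left or from the right.

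I do not foresee a genuine obstacle. The only substantive observation is that suffix-array neighborhood is exactly the right notion of ``closeness'' for maximizing $\lcp$, which follows directly from the $\LCP$-range-minimum characterization. Beyond that, the construction is a straightforward assembly of the fusion-tree primitive and the enhanced suffix array, with the hypothesis $|A| = W^{\Oh(1)}$ ensuring that every logarithmic-in-$|A|$ factor disappears so that all bounds match those in the statement.
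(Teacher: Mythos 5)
Your proposal is correct and matches the paper's proof essentially verbatim: both build a fusion tree over the $\ISA$-ranks of the suffixes in $A$, locate predecessor and successor of the query's rank, and compare the two $\lcp$ values via the enhanced suffix array. The only difference is cosmetic — you spell out the $\LCP$-range-minimum justification for why the rank-nearest neighbors are the only candidates, which the paper leaves implicit.
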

\begin{proof}
Let $A=\{T[\ell_1..],\ldots,T[\ell_m..]\}$. We build a fusion tree storing $\{ISA[\ell_i] : 1 \le i \le m\}$
and during a query for $v=T[\ell..]$, we determine the predecessor and the successor of $ISA[\ell]$.
We use the $SA$ table to translate these integers into indices $\ell_{i_p}$ and $\ell_{i_s}$.
Since the order of $ISA[\ell_i]$ coincides with the lexicographic order of suffixes $T[\ell_i..]$,
the suffixes $T[\ell_{i_p}..]$ and $T[\ell_{i_s}..]$ are the predecessor $\pred_{A}(v)$ and the successor $\suc_{A}(v)$, respectively.
These are the two candidates for $u\in A$ maximizing $\lcp(u,v)$.
We perform two longest common prefix queries and return the candidate for which the obtained value is larger.
\end{proof}

\begin{lemma}\label{lem:k}
Let $A$ be a set of $W^{\Oh(1)}$ fragments of $T$.
There exists a data structure of size $\Oh(|A|^2)$,
which answers the following queries in $\Oh(k)$ time: given a $k$-fragment $v$ of $T$, determine $\rank_{A}(v)$.
The data structure can be constructed in $\Oh(|A|^2)$ time; it assumes the access
to the enhanced suffix array of $T$.
\end{lemma}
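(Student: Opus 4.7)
The plan is to organize $A$ as a compressed trie and descend through it to rank the query $k$-fragment $v$, spending $\Oh(1)$ time per part of $v$. For preprocessing, lex-sort $A$ via the $\Oh(1)$-time comparisons from \cref{thm:esa} and build the compressed trie of $A$ with at most $2|A|$ explicit nodes; each edge label is a fragment of $T$, and each root-to-leaf path spells out a fragment of $A$. For every explicit node $\nu$ at depth $d_\nu$, form the suffix set $S_\nu = \{T[\ell_t+d_\nu..] : t \text{ is a terminal descendant of } \nu\}$ and build the data structure of \cref{fct:one} over $S_\nu$, augmented so that when several terminals share the same starting position (hence the same suffix in $S_\nu$) the one with the largest $|a_t|$ is recorded. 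For every terminal $t$ also build a fusion tree (\cref{thm:fus}) holding the depths of the explicit ancestors of $t$; with $\Oh(|A|)=W^{\Oh(1)}$ entries, each predecessor query answers in $\Oh(1)$. Finally, precompute for every explicit node the range of ranks it spans in sorted $A$. Total preprocessing runs in $\Oh(|A|^2)$ time and space.

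For the query, I maintain a trie pointer that is either an explicit node $\nu$ or a pair ``edge from $\nu$ to some child $\nu_c$ at offset $o$'', and process the parts $v_1,\ldots,v_k$ in order. For each part $v_i = T[\ell..r]$: if the pointer is on an edge, use \cref{thm:esa} to extend the match by the lcp of the remaining edge label with $T[\ell..r]$ in $\Oh(1)$, with three outcomes---staying on the edge, reaching the lower explicit node, or detecting a mismatch (which determines the rank). Whenever the pointer sits at an explicit node $\nu$ with $R \ge 1$ of the part still to match starting at a position $\ell'$ in $T$, query \cref{fct:one} at $\nu$ on $T[\ell'..]$ to obtain the best-matching terminal $t^*$ and the lcp $D$; advance $\delta = \min(D,\, |a_{t^*}|-d_\nu,\, R)$ characters along the root-to-$t^*$ path, and use the fusion tree of $t^*$'s ancestral depths to locate in $\Oh(1)$ either the explicit node at depth $d_\nu+\delta$ or the edge on which we stop. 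The minimizing term dictates whether the descent ended at a mismatch, was cut off by $R$ (continue with $v_{i+1}$), or reached $t^*$. Once all $k$ parts are consumed, the stored subtree-rank ranges convert the final pointer into $\rank_A(v)$ in $\Oh(1)$.

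The main obstacle is that a single part may naively cause the descent to pass through many explicit trie nodes, yielding an $\Omega(|A|)$ per-part cost. The way around this is the combination of \cref{fct:one}, which in one operation identifies the deepest matching terminal descendant of the current node, with the per-terminal fusion tree on ancestral depths, which jumps directly to the precise stopping point on the chosen root-to-$t^*$ path. The tie-breaking augmentation is essential: it guarantees that when several terminal descendants share the same suffix at $\nu$ (the nested case), \cref{fct:one} returns the deepest of them, so the algorithm never has to re-descend through a nested chain of terminals one at a time.
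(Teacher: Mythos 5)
Your construction follows the paper's skeleton (compressed trie of $A$, a per-node \cref{fct:one} structure, and fusion trees on root-to-node paths for fast jumping), but you index the ``jumping'' fusion tree differently than the paper does, and this is a genuine gap. You build it per terminal $t$, holding the depths of $t$'s explicit \emph{ancestors}, and you cap the advance at $\delta \le |a_{t^*}|-d_\nu$. The paper instead builds it per \emph{starting position} $\ell_i$, holding all explicit nodes whose values are prefixes of the unbounded suffix $T[\ell_i..]$ --- a path that can extend \emph{past} the locus of $T[\ell_i..r_i]$. This difference matters because \cref{fct:one} at $\nu$ maximizes $\lcp(T[\ell'..],T[\ell_t+d_\nu..])$, an lcp against the full suffix $T[\ell_t..]$, not against the fragment $a_t$. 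Once $a_{t^*}$ is a prefix of the remaining query but $T[r_{t^*}+1..]$ happens to continue the match, \cref{fct:one} can report a large $D$ together with a \emph{shallow} $t^*$, and your $\delta$-cap then strands the pointer at $t^*$'s locus with characters of $v_i$ still unread.

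Concretely, take $T=\mathtt{abcabcd}$ and $A=\{T[1..2]=\mathtt{ab},\,T[4..7]=\mathtt{abcd}\}$, and query a part whose remaining characters read $\mathtt{abcab}$. At the root, \cref{fct:one} picks $T[1..]$ (lcp $7$ beats lcp $3$ for $T[4..]$), so $t^*$ is the $\mathtt{ab}$-terminal, $\delta=\min(7,2,5)=2$, and you land at that terminal with three characters to go, forcing a second \cref{fct:one} call. If $S_\nu$ contains $\nu$'s own suffix (as in the paper's $L_\nu$), the second call returns $\nu$ itself and $\delta=0$, so the pointer never advances. If $S_\nu$ contains only proper descendants, you do make progress, but nothing in your argument bounds the number of repeated \cref{fct:one} calls per part by $\Oh(1)$: a nested chain of fragments whose members have distinct starting positions can keep returning a shallow terminal at each level, forcing up to $\Theta(|A|)$ iterations for one $v_i$ and an overall $\Oh(k|A|)$ query time rather than $\Oh(k)$. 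Your tie-breaking augmentation helps only when $\ell_{t_1}=\ell_{t_2}$, which is a strictly narrower case than nesting, so the claim ``the algorithm never has to re-descend through a nested chain of terminals one at a time'' is not justified. The paper's fix is precisely to replace your per-terminal fusion tree by one keyed by $\ell_{t^*}$ and containing every explicit node that is a prefix of $T[\ell_{t^*}..]$: since the locus $\nu_i$ of the longest matching prefix of $p_i$ is itself a prefix of $T[\ell_{t^*}..]$ once $\ell_{t^*}$ maximizes the lcp, a single predecessor query on $D$ in that tree lands directly on the nearest explicit ancestor of $\nu_i$, with no cap at the terminal and no re-iteration. The rest of your write-up (edge-offset handling, final rank read-off from subtree ranges, the $\Oh(|A|^2)$ space and construction bounds) matches the paper.
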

\begin{proof}
Let $A=\{T[\ell_1..r_1],\ldots,T[\ell_m..r_m]\}$ and let $\T$ be the compressed trie of fragments in~$A$.
Note that $\T$ can be easily constructed in $\Oh(m\log m)$ time using the enhanced suffix array.
For each edge we store a fragment of $T$ representing its label and for each terminal node its rank in $A$.
Moreover, for each explicit node $\nu$ of $\T$ we store pointers to the first and last (in pre-order) terminal nodes in its subtree
as well as the following two components:
a fusion tree containing the children of $\nu$ indexed by the first character of the corresponding edge label,
and a data structure of \cref{fct:one} for $\{T[\ell_i+d_\nu..]: \ell_i \in L_\nu\}$,
where $d_\nu$ is the (weighted) depth of $\nu$ and $L_\nu$ contains $\ell_i$ whenever the locus of $T[\ell_i..r_i]$ is in the subtree of $\nu$.
Finally, for each $\ell_i$ we store a fusion tree containing (pointers to) all explicit nodes of $\T$ which represent prefixes of $T[\ell_i..]$, indexed by their (weighted) node depths.
All these components can be constructed in $\Oh(m^2)$ time, with \cref{thm:fus} applied to build fusion trees.

Let us proceed to the description of a query algorithm. Let $v=v_1\cdots v_k$ be the decomposition of the given $k$-fragment
into $1$-fragments, and let $p_i = v_1\cdots v_i$ for $0\le i \le k$.
We shall scan all $v_i$ consecutively and after processing $v_i$, store a pointer to the (possibly implicit) node $\nu_i$
defined as the locus of the longest prefix of $p_i$ present in~$\T$.
We start with $p_0=\eps$ whose locus is the root of $\T$. Therefore, it suffices to describe how to determine $\nu_i$
provided that we know $\nu_{i-1}$. 

If $\nu_{i-1}$ is at depth smaller than $|p_{i-1}|$, there is nothing to do, since $\nu_i = \nu_{i-1}$.
Otherwise, we proceed as follows:
Let $\nu$ be the nearest explicit descendant of $\nu_{i-1}$ ($\nu=\nu_{i-1}$ if $\nu_{i-1}$ is explicit), and let $u$ be a fragment of $T$ representing the label from $\nu_{i-1}$ to $\nu$. 
First, we check if $u$ is a proper prefix of $v_i$. 
If not, $\nu_{i}$ is on the same edge of $\T$ and its depth $|p_{i-1}|+|\lcp(u,v_i)|$. 
Thus, we may assume that $u$ is a proper prefix of $v_i$. Let $v_i = uT[\ell..r]$.  
We make a query to the data structure of \cref{fct:one} built for $\nu$ with $T[\ell..]$ as the query suffix. 
This lets us determine an index $\ell_j \in L_{\nu}$ such that $\lcp(T[\ell..], T[\ell_j+d_\nu..])$ is largest possible. 
This is also an index $\ell_j\in L_{\nu}$ which maximizes $D := \lcp(p_i, T[\ell_j..])=d_\nu + \lcp(T[\ell..], T[\ell_j+d_\nu..])$.
Consequently, $\nu_i$ represents a prefix of $T[\ell_j..]$ and the depth of $\nu_i$ does not exceed $D$.
Thus, the nearest explicit ancestor of $\nu_i$ can be retrieved from the fusion tree built for $\ell_j$ as the node whose depth $D'$ is the predecessor of $D$. If $D' < |p_i|$, we check if that explicit node has an outgoing edge whose label starts with $p_i[D'+1]=T[\ell+D'+1-d_{\nu}]$.
If not, $\nu_{i}$ is equal to the explicit node.
Otherwise, $\nu_{i}$ is an implicit node on the found edge and its depth can be determined using a single longest common prefix query.

After processing the whole $k$-fragment $v$ we are left with $\nu_k$ which is the locus of the longest prefix $p$ of $v$ present in $\T$.
First, suppose that $p\ne v$ and let $c=v[|p|+1]$. Note that by definition of $\nu_k$, this node does not have an outgoing edge labeled with $c$.
If $\nu_k$ has no outgoing edge labeled with a character smaller then $c$,
then the first terminal node of the subtree rooted at the leftmost child of $\nu_k$ represents the successor of $v$ in $A$.
We return its rank as the rank of $v$. Otherwise, we determine the edge going from $\nu_k$ to some node $\nu$ so that the edge label is smaller
than $c$ and largest possible. If $\nu_k$ is explicit, we use the fusion tree to determine $\nu$. We observe that
the predecessor of $v$ in $A$ is the rightmost terminal node in the subtree of $\nu$ and thus we return the rank stored at that node plus one.
Thus, it remains to consider the case when $p=v$. In this case the leftmost terminal node in the subtree of $\nu_k$ is the successor of $v$ in $A$, and thus we return the rank of that node.
\end{proof}

\subsubsection{Proof of \cref{lem:genfus}}\label{sec:genfus}

Having developed the key component, we are ready to generalize \cref{lem:fus}.

\lemgenfus*
\begin{proof}
We use \cref{lem:gensig} to compute $\Sig(v)$ in $\Oh(\log |v|)$ time.
By \cref{lem:char}, in order to find $\MinSuf(v,w)$, it suffices determine $\rank_{X(v)}(w)$.
Moreover, by \cref{lem:char,obs:simple}, $\rank_{X(v)}(w)$ is equal to $\rank_{X'(v)}(w)$ or $\rank_{X'(v)}(w)-1$,
where $X'(v)=\{x_{\lambda}^{p_\lambda},\ldots,x_m^{p_m}\}$ can be determined in $\Oh(\log|v|)$ time from $\Sig(v)$.
We build the data structure of \cref{lem:k} for $A=X'(v)$ so that we can determine $\rank_{X'(v)}(w)$ in $\Oh(k)$ time. 
This leaves two possibilities for $\rank_{X(v)}(w)$, i.e., for $\MinSuf(v,w)$. 
We simply need to compare $s_{i}w$, $s_{i+1}w$ for these two candidates suffixes $s_{i},s_{i+1}\in \Sig(v)$. 
Using the enhanced suffix array, this takes $\Oh(k)$ time.
Consequently, the query algorithm takes $\Oh(k)$ time in total.
In the preprocessing we need to compute $\Sig(v)$ and the data structure of \cref{lem:k} for $A=X'(v)$,
which takes $\Oh(\log |v| + |\Sig(v)|^2)=\Oh(\log^2 |v|)$ time. The space consumption is also $\Oh(\log ^2 |v|)$.
\end{proof}

\subsection{$\Oh(k)$-time \AMSQ}
Like in \cref{sec:constant}, in order to improve the query time in the data structure of \cref{thm:genlogstar}, 
we simply add a component responsible for computing $\MinSuf(v,w)$ for $|v|\le \tau$ where $\tau=f(f(n))=o(\frac{\log n}{\log \log n})$.

Again, we partition $T$ into $\frac{n}{\tau}$ overlapping blocks $T_i$ of length $m=\Oh(\tau)$,
 so that the number of blocks is much larger than the number of order-isomorphism classes of strings of length $\le m$.
Next, we precompute some data for each equivalence class and we reduce a query in $T$ to a query in one of the blocks $T_i$.

While this approach was easy to apply for computing $\MinSuf(v)$ for a fragment $v$ (with $|v|\le \tau$), it is much more difficult for $\MinSuf(v,w)$ for a fragment $v$ ($|v|\le \tau$) and a $k$-fragment $w$. 
That is because $w$ might be composed of fragments $w_j$ starting in different blocks.
As a workaround, we shall replace $w$ by a similar (in a certain sense) $k'$-fragment of $T_i\dol$ ($k'\le k+1$)
where $T_i$ is a block containing $v$.

For $0 \le i < \frac{n}{\tau}$, we define $T_i := T[i\tau+1..\min(n,(i+3)\tau)]$.
We determine $\oid(T_i\$)$ for each block using \cref{fct:oid}.
For each valid identifier we build the enhanced suffix array 
and for all fragment $v$  we construct the set $\Sig(v)$ along with the data structure of \cref{lem:genfus}. 
In total, this data takes $\Oh(2^{\Oh(m\log m)}m^{\Oh(1)})=n^{o(1)}$ space and time to construct.

Now, suppose that we are to compute $\MinSuf(v,w)$ where $|v|\le \tau$ and $w$ is a $k$-fragment of $T$.
We determine the last block $T_i$ containing $v$.
Next, we shall try to represent $w$ as a $k$-fragment of $T_i$.
We will either succeed, or obtain a $k'$-fragment $w'$ of $T_i$ ($k'\le k$)
and a character $c\in \Sigma$ such that $w'c$ is a prefix of $w$ but not a substring of $v$.
In this case \cref{lem:lcp} states that $\MinSuf(v,w'c)$ suffices to determine two candidates for $\MinSuf(v,w)$.

We decompose $w=w_1\cdots w_k$ into fragments and process them iteratively. 
Given a fragment $w_j$ we shall either find an equal fragment of $T_i$
or determine a fragment $w'_j$ of $T_i$ and a character $c\in\Sigma$ such that $w'_j c$ is a prefix of $w_j$ but not a substring of $v$. 
Clearly, if we proceed to $w_{j+1}$ in the first case and terminate in the second,
at the end we successfully represent $w$ or we find a $k'$-fragment $w'=w_1\ldots w_{k'-1}w'_{k'}$ satisfying the desired condition.
Note that since $v$ is a substring of $T[i\tau+1..\min(n,(i+2)\tau)]$, any substring of $v$,
must occur in $T$ at one of the positions in $R_i=\{i\tau+1,\ldots,\min(n,(i+2)\tau)\}$.
Hence, for each block we build a data structure of \cref{fct:one} for suffixes starting in $R_i$.
Given $w_j$ this lets us determine a position $\ell\in R_i$ such that $d_j = \lcp(T[\ell..],w_j)$ is largest possible.
If $d_j=|w_j|$ and $d_j \le \tau$, we have found $w_j$ occurring as a substring of $T_i$.
Otherwise, we set $w'_j = w[1..\min(d_j,\tau)]$, which is a substring of $T_i$, and $c=w[|w'_j|+1]$. 
Clearly, $w'_jc$ is a prefix of $w_j$, so we shall only prove that it is not a substring of $v$.
If $d_j\ge \tau$, then simply $|w'_jc|>\tau\ge |v|$.
Otherwise, by the choice of $\ell$ maximizing $d_j=\lcp(T[\ell..],w_j)$ among $\ell\in R_i$,
the string $w'_jc$ cannot occur at any position in $R_i$ and in particular it cannot be a substring of $v$.

If the described procedure succeeds in finding a $k$-fragment of $T_i$ equal to $w$, we simply apply the data structure of \cref{lem:genfus}
built for $v$ to determine $\MinSuf(v,w)$ in $\Oh(k)$ time. 
Thus, we may assume that this is not the case and it returns a $k'$-fragment $w'$ and a character~$c$.
As already mentioned, having computed $\MinSuf(v,w'c)$, we can determine $\MinSuf(v,w)$ just by
comparing the two candidates with the enhanced suffix array.
If $c$ occurs in $T_i$, then $w'c$ is a $(k'+1)$-fragment of $T_i$ and we may use \cref{lem:genfus} to compute $\MinSuf(v,w'c)$.
Otherwise, we replace $c$ by its successor among letters occurring in $T_i\dol$.
The successor can be computed in constant time provided that for each block we store a fusion
tree of all characters occurring in $T_i\dol$  (mapping each character to a sample position).
To see that replacing $c$ by its successor $c'$ does not change the answer, it is enough
to note that \cref{lem:char} expresses  $\MinSuf(v,w'c)$ in terms of $\rank_{X(v)}(w'c)$,
where $X(v)$ consists of infinite strings composed of characters of $v$ (which are automatically present in $T_i$).

\begin{theorem}\label{thm:gensmall}
For every text $T$ of length $n$ and every parameter $\tau=o(\frac{\log n}{\log\log n})$ there exists a data structure of size $\Oh(n)$ which answers \AMSQ
in $\Oh(k)$ time if $|v|\le \tau$. The data structure can be constructed in $\Oh(n)$ time.
\end{theorem}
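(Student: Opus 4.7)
The plan is to mirror the strategy used for \MSQ in \cref{thm:small}: reduce the case $|v|\le \tau$ to a query inside a short block, where the answer can be retrieved from tables indexed by order-isomorphism classes. Concretely, I would partition $T$ into overlapping blocks $T_i = T[i\tau+1..\min(n,(i+3)\tau)]$ of length $m=\Oh(\tau)$ for $0\le i < n/\tau$, compute $\oid(T_i\$)$ for each block via \cref{fct:oid}, and then, for every realized identifier, precompute the enhanced suffix array of the corresponding abstract block together with $\Sig(v')$ and the data structure of \cref{lem:genfus} for every fragment $v'$ of that block. Since $\tau=o(\tfrac{\log n}{\log\log n})$, the number of identifiers is $2^{\Oh(m\log m)}=n^{o(1)}$ and each class carries only $m^{\Oh(1)}$ data, so the precomputation takes $n^{o(1)}$ time and space overall. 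Per-block information (the identifier, a pointer to the class-wide data, a \cref{fct:one} structure for suffixes starting in $R_i=\{i\tau+1,\dots,\min(n,(i+2)\tau)\}$, and a fusion tree over characters occurring in $T_i\$$) adds $\Oh(\tau)$ words per block, summing to $\Oh(n)$.

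For a query $\MinSuf(v,w)$ with $|v|\le\tau$, I would locate the last block $T_i$ that contains $v$ and then attempt to rewrite the $k$-fragment $w=w_1\cdots w_k$ as a $k$-fragment of $T_i$: scanning $w_j$ left to right, I invoke \cref{fct:one} on $R_i$ to find $\ell\in R_i$ maximizing $d_j:=\lcp(T[\ell..],w_j)$; if $d_j\ge|w_j|$ and $d_j\le\tau$ I set $w_j'$ to the corresponding fragment of $T_i$ and continue, otherwise I truncate to $w_j'=w_j[1..\min(d_j,\tau)]$ and the next character $c$ and stop. If the scan succeeds, the rewritten $w$ is a $k$-fragment of $T_i$ and \cref{lem:genfus} (already built for the abstract class) answers in $\Oh(k)$ time. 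If the scan fails, I obtain a $k'$-fragment $w'$ of $T_i$ with $k'\le k$ and a character $c$ such that $w'c$ is a prefix of $w$ that is not a substring of $v$; by \cref{lem:lcp}, $\MinSuf(v,w'c)$ pins $\MinSuf(v,w)$ down to at most two candidates, which I compare with the enhanced suffix array in $\Oh(k)$ time.

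The main obstacle, and the reason this is more delicate than \cref{thm:small}, is that $w$ need not live in any single block, so a direct table lookup is impossible. The two tricks that resolve this are, first, the observation that any substring of $v$ must occur at a position in $R_i$, which certifies that the truncated prefix $w'c$ is genuinely outside of $v$'s substrings and hence eligible for \cref{lem:lcp}; and second, the handling of characters $c\notin T_i$, where $c$ may be replaced by its successor in $T_i\$$ (retrieved in $\Oh(1)$ from the block's character fusion tree) without affecting the value of $\MinSuf(v,w'c)$, since \cref{lem:char} determines this value via $\rank_{X(v)}(w'c)$ and the elements of $X(v)$ use only characters appearing in $v\subseteq T_i$. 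Combining the two cases, each query reduces in $\Oh(k)$ time to a single invocation of \cref{lem:genfus} inside the precomputed abstract class, yielding the claimed bound.
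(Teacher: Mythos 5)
Your proposal reproduces the paper's argument essentially verbatim: the same overlapping block decomposition $T_i = T[i\tau+1..\min(n,(i+3)\tau)]$, the same precomputation of enhanced suffix arrays, $\Sig(\cdot)$ sets, and \cref{lem:genfus} structures indexed by $\oid(T_i\$)$, the same left-to-right rewriting of $w$ via \cref{fct:one} built over $R_i$ to either relocate $w$ inside $T_i$ or extract a prefix $w'c$ that cannot be a substring of $v$, the same reduction to two candidates through \cref{lem:lcp}, and the same replacement of a missing character $c$ by its successor in $T_i\$$ justified via \cref{lem:char} and the fact that $X(v)$ uses only characters of $v$. Both key observations you single out as resolving the obstacle are precisely the ones the paper's proof relies on.
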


This was the last missing ingredient needed to obtain the main result of this paper.

\begin{theorem}\label{thm:genmain}
For every text $T$ of length $n$ there exists a data structure of size $\Oh(n)$ which 
answers \AMSQ in $\Oh(k)$ time and \GMSQ in $\Oh(k^2)$ time. The data structure can be constructed in $\Oh(n)$ time.
\end{theorem}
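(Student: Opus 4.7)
The plan is to combine the $\Oh(k\log^*|v|)$-time \AMSQ data structure of \cref{thm:genlogstar} with the short-fragment \AMSQ data structure of \cref{thm:gensmall}, in direct analogy with how \cref{thm:main} was obtained from \cref{thm:logstar,thm:small}. Both underlying structures fit in $\Oh(n)$ space and admit $\Oh(n)$-time construction on top of the enhanced suffix array of $T$, so the combined data structure inherits these bounds.

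For an \AMSQ query on a fragment $v$ and a $k$-fragment $w$, I follow the recursive scheme from the proof of \cref{thm:genlogstar}. If $|v|\le \tau$ for $\tau = f(f(n)) = o(\frac{\log n}{\log\log n})$, I use \cref{thm:gensmall} directly to answer in $\Oh(k)$ time. Otherwise, I invoke \cref{fct:dec} to decompose $v=uv'v''$ with $v'$ distinguished, $|u|\le |v'|$, and $|v''|\le f(|v|)$, compute $\MaxSuf^R(u,v')v''w$ via \cref{lem:max}, compute $\MinSuf(v',v''w)$ in $\Oh(k+1)=\Oh(k)$ time via \cref{lem:genfus} stored for the distinguished fragment $v'$, recurse on the pair $(v'',w)$ to obtain $\MinSuf(v'',w)$, and return the best of these three candidates using the enhanced suffix array in $\Oh(k)$ time (justified by \cref{obs:minmax}(\ref{it:minmax:c}) combined with \cref{lem:extend}). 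Since $f(f(x)) = 2^{o(\log\log x)}$, after at most two recursive calls the fragment length drops below $\tau$ and the short-fragment structure takes over. Hence the recursion has constant depth and the total query time is $\Oh(k)$.

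For \GMSQ on $v_1,\ldots,v_k$, I apply \cref{lem:red}, which reduces the query to $k$ \AMSQ queries whose second argument is a $k'$-fragment with $k' < k$, plus $\Oh(k^2)$ time for comparisons using the enhanced suffix array (via \cref{obs:esa}). Each \AMSQ query costs $\Oh(k)$ by the argument above, giving $\Oh(k^2)$ total time.

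The main subtlety I expect is at the moment of switching to the base case: the short-fragment structure of \cref{thm:gensmall} must handle a query in which $|v''|\le \tau$ but $w$ is a $k$-fragment whose component fragments may lie far outside the block surrounding $v''$. This is exactly the difficulty that \cref{thm:gensmall} already resolves, by rewriting $w$ into a $k'$-fragment $w'$ internal to a single block $T_i$ containing $v''$ (possibly losing characters that are longer than any substring of $v''$ could match) and then invoking \cref{lem:lcp} to bound the candidates for $\MinSuf(v'',w)$. So the combined query algorithm inherits correctness from the combination of \cref{thm:genlogstar,thm:gensmall}, and the overall bounds follow.
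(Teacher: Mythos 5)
Your proposal is correct and matches the paper's intended proof: the paper introduces \cref{thm:gensmall} precisely as ``the last missing ingredient'' to be plugged into the recursive scheme of \cref{thm:genlogstar}, exactly as you do, with the reduction from \GMSQ to \AMSQ handled by \cref{lem:red}. The subtlety you flag about $w$ possibly lying outside the block of $v''$ is indeed the crux of \cref{thm:gensmall}, and you correctly note that it is already resolved there.
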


\section{Applications}\label{sec:app}
As already noted in \cite{Babenko2016}, \MSQ queries can be used to compute Lyndon factorization.
For fragments of $T$, and in general $k=\Oh(1)$, we obtain an optimal solution:

\begin{corollary}\label{cor:lyndon}
For every text $T$ of length $n$ there exists a data structure of size $\Oh(n)$ which
given a $k$-fragment $v$ of $T$ determines the Lyndon factorization $v=v_1^{q_1}\ldots v_m^{q_m}$
in $\Oh(k^2m)$ time. The data structure takes $\Oh(n)$ time to construct.
\end{corollary}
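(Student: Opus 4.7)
The plan is to iteratively peel Lyndon factors off the right end of $v$ using the \GMSQ data structure of \cref{thm:genmain}. The identity exploited is the classical observation that for any non-empty string $u$ with Lyndon factorization $u_1^{q_1} \cdots u_{m'}^{q_{m'}}$, one has $\MinSuf(u) = u_{m'}$ --- the last Lyndon word itself, not raised to its power: $u_{m'}$ is a suffix of $u$, its proper suffixes are strictly larger by the Lyndon property, and any longer suffix either extends $u_{m'}$ (hence is lexicographically larger) or starts within some earlier factor $u_i$ ($i<m'$), where the Lyndon property of $u_i$ together with $u_i \succ u_{m'}$ yields the same conclusion. A single \GMSQ call therefore exposes $v_m$, and after determining its exponent $p_m$ we strip $v_m^{p_m}$ off the right end and recurse.

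Concretely, the preprocessing is exactly the data structure of \cref{thm:genmain}, occupying $\Oh(n)$ space and built in $\Oh(n)$ time. For a query on a $k$-fragment $v$, I maintain a current $k'$-fragment $u$ (initially $u := v$, $k' \le k$) and repeat while $u \ne \eps$:
(i)~compute $s := \MinSuf(u)$ via \GMSQ in $\Oh(k^2)$ time, noting that $s$ is a suffix of $u$ and hence itself a $k'$-fragment of $T$;
(ii)~compute $p := \lfloor \lcs(u, s^\infty)/|s| \rfloor$ using query~(c) of \cref{thm:esa} applied to the reverse text $T^R$ and generalized to $k$-fragments by \cref{obs:esa}, all in $\Oh(k)$ time;
(iii)~emit the factor $(s,p)$ and replace $u$ by its prefix of length $|u|-p|s|$, still a $k''$-fragment with $k'' \le k$. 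Each iteration peels exactly one Lyndon factor, so the loop runs $m$ times for a total query cost of $\Oh(k^2 m)$.

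The main point to verify is the formula in step~(ii). Writing $u = u' \cdot s^p$ with $u'$ not ending in $s$ (so $p$ is the true exponent), the longest suffix of $u$ matching $s^\infty$ has length $p|s| + r$ for some $0 \le r < |s|$: if $r \ge |s|$ then $s$ would appear once more as a suffix of $u'$, contradicting maximality of $p$. Hence $\lfloor \lcs(u, s^\infty)/|s| \rfloor = p$ recovers the exponent exactly. The remaining bookkeeping --- tracking the $k$-fragment structure of $u$ as it is progressively shortened, and concatenating the iterations into the full factorization $v_1^{p_1}\cdots v_m^{p_m}$ --- is routine.
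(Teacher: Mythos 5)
Your proposal is correct and matches the paper's intended approach: the paper (following Babenko et al.) observes that the last Lyndon factor equals the minimal suffix, and computes the factorization by $\Theta(m)$ minimal-suffix queries that peel factors off the right end one at a time, with each step costing $\Oh(k^2)$ for the \GMSQ call plus $\Oh(k)$ bookkeeping. Your exponent formula via a longest-common-suffix query against the periodic extension (equivalently, an $\lcp(\,\cdot^\infty,\cdot\,)$ query in the reverse text, generalized to $k$-fragments via \cref{obs:esa}) is a clean way to make explicit what the paper leaves implicit.
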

Our main motivation of introducing \GMSQ, however, was to answer \MRQ,
for which we obtain constant query time after linear-time preprocessing.
This is achieved using the following observation; see~\cite{AlgorithmsOnStrings}:
\begin{observation}
The minimal cyclic rotation of $v$ is the prefix of $\MinSuf(v,v)$ of length~$|v|$.
\end{observation}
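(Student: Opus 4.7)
The plan is to establish a surjective correspondence between (possibly empty) suffixes of $v$ and rotations of $v$, and then use it to transfer the minimization from $\MinSuf(v,v)$ to the minimal rotation. The first step is the observation that for any suffix $s$ of $v$ with $v = ts$, the concatenation $sv = sts$ has as its length-$|v|$ prefix the string $st$, which is precisely the rotation of $v$ obtained by moving $t$ to the end. Conversely, every rotation arises in this way: the rotation $qw$ corresponding to $v = wq$ is obtained by taking $s = q$. Thus the map $s \mapsto r_s := st$ surjects onto the set of rotations of $v$.

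Having set up this correspondence, I would then transfer the minimization as follows. Let $s^{*}$ be a suffix for which $\MinSuf(v, v) = s^{*} v$, and set $r^{*} := r_{s^{*}}$. The key identity is $sv = r_s \cdot s$ for every suffix $s$, so for any other suffix $s$ the defining minimality gives $r^{*} s^{*} \preceq r_s \cdot s$. Since $|r^{*}| = |r_s| = |v|$, if we had $r_s \prec r^{*}$ in the lexicographic order then $r_s \cdot s \prec r^{*} \cdot s^{*}$, contradicting the choice of $s^{*}$. Hence $r^{*} \preceq r_s$ for all $s$, and by surjectivity $r^{*}$ is lexicographically minimal among all rotations of $v$, which is exactly what the observation asserts.

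The only subtlety worth flagging is that the map $s \mapsto r_s$ need not be injective: when $v$ has a nontrivial period, distinct suffixes can yield the same rotation. However, surjectivity is all that the argument requires, so this causes no real obstacle. Beyond this, the proof is a direct manipulation of concatenations and the definitions of $\MinSuf(v,w)$ and of cyclic rotation.
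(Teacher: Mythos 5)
Your proof is correct and self-contained. The decisive step you make explicit---that because all rotations have the same length $|v|$, a strict comparison $r_s \prec r^*$ is already resolved within the first $|v|$ characters and therefore persists after appending the trailing suffixes, so $r_s\,s \prec r^*\,s^*$---is exactly what transfers minimality of $s^* v$ over all suffixes $s$ to minimality of $r^*$ over all rotations, and the identity $sv = r_s\,s$ together with surjectivity of $s \mapsto r_s$ makes the argument go through without needing injectivity. The paper does not actually supply its own proof of this observation (it is stated as folklore with a pointer to a textbook), so there is no in-paper argument to compare against; your derivation from the definition of $\MinSuf(v,w)$ and the rotation decomposition $v = ts$ is a clean and complete one, and it also directly shows that the length-$|v|$ prefix of $\MinSuf(v,v) = s^* v$ is precisely $s^* t^* = r^*$, so the claimed equality of strings holds, not merely an inequality among rotations.
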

\begin{theorem}\label{thm:rot}
For every text $T$ of length $n$ there exists a data structure of size $\Oh(n)$ which
given a $k$-fragment $v$ of $T$ determines the lexicographically smallest cyclic rotation of $v$
in $\Oh(k^2)$ time. The data structure takes $\Oh(n)$ time to construct.
\end{theorem}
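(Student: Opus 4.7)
The plan is to invoke the observation immediately preceding the theorem: the minimal cyclic rotation of $v$ is the length-$|v|$ prefix of $\MinSuf(v,v)$. It therefore suffices to identify the suffix $s$ of $v$ realizing $\MinSuf(v,v)=sv$; the rotation is then encoded by the shift $|v|-|s|$. The supporting data structure is precisely the one provided by \cref{thm:genmain}, which is built in $\Oh(n)$ time and occupies $\Oh(n)$ space, so no further preprocessing is required.

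To evaluate $\MinSuf(v,v)$ for a $k$-fragment $v=v_1v_2\cdots v_k$, we enumerate the candidate suffixes of $v$ according to the fragment they begin in. Every (possibly empty) suffix of $v$ has the form $s'\,v_{i+1}\cdots v_k$ for some $i\in\{1,\ldots,k\}$ and a (possibly empty) suffix $s'$ of $v_i$, with the empty suffix of $v$ itself falling into the case $i=k$, $s'=\eps$. For each $i$ we issue an \AMSQ on the $1$-fragment $v_i$ and the $(2k-i)$-fragment $v_{i+1}\cdots v_k v_1\cdots v_k$, obtaining $\MinSuf(v_i,v_{i+1}\cdots v_k v)=s'v_{i+1}\cdots v_k v$. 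By \cref{thm:genmain} this call runs in $\Oh(2k-i)$ time, so the $k$ calls together take $\Oh(k^2)$ time.

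Having produced $k$ candidate strings, each a $k'$-fragment of $T$ with $k'\le 2k$, we select the lexicographically smallest one by $k-1$ pairwise comparisons using the enhanced suffix array of $T$ extended to $k$-fragments via \cref{obs:esa}, at $\Oh(k)$ per comparison and $\Oh(k^2)$ overall. From the winning candidate we recover the corresponding suffix length $|s|=|s'|+|v_{i+1}\cdots v_k|$ in constant time and return the shift $|v|-|s|$. The total query time is $\Oh(k^2)$, as claimed.

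There is no substantive obstacle beyond careful bookkeeping: once the observation converts rotations to minimal suffix queries with a repeated second argument, the only point requiring attention is to confirm that the trivial rotation (i.e., the empty suffix $s=\eps$, which yields $sv=v$) is among the candidates, which it is because \AMSQ admits empty $s'$ by definition and the choice $i=k$, $s'=\eps$ produces $v$ itself. Note also that invoking \GMSQ on $vv$ directly would compute $\MinSuf(vv)$, which can have length strictly less than $|v|$ and thus generally differs from $\MinSuf(v,v)$; this is why the explicit per-fragment enumeration above is needed.
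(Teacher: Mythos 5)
Your proof is correct and matches the paper's (implicit) intent: the theorem follows from the stated observation together with the \AMSQ machinery of \cref{thm:genmain}, and your per-fragment decomposition $\MinSuf(v,v)=\min_{1\le i\le k}\MinSuf(v_i,v_{i+1}\cdots v_k v)$ is exactly the reduction pattern already used in \cref{lem:red}, yielding the claimed $\Oh(k^2)$ bound. Your closing remark that a direct \GMSQ on $vv$ would compute $\MinSuf(vv)=\min(\MinSuf(v,v),\MinSuf(v))$ rather than $\MinSuf(v,v)$ is a valid and worthwhile clarification, since $\MinSuf(vv)$ can indeed be a suffix shorter than $|v|$ whose starting position does not indicate the minimal rotation.
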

Using \MRQ, we can compute the Karp-Rabin fingerprint~\cite{DBLP:journals/ibmrd/KarpR87} of the minimal rotations of a given fragment $v$ of $T$
(or in general, of a $k$-fragment). This can be interpreted as a computing fingerprints up to cyclic equivalence,
i.e., evaluating a function $h$ such that $h(\ell,r)=h(\ell',r')$ if and only if $T[\ell..r]$ and $T[\ell'..r']$ are cyclically equivalent.

Consequently, we are able, for example, to count distinct substrings of $T$ with a given exponent $1+1/\alpha$.
They occur within runs or $\alpha$-gapped repeats,
which can be generated in time $\Oh(n\alpha)$ \cite{B14bis,DBLP:journals/corr/CrochemoreKK15,DBLP:journals/corr/GawrychowskiIIK15} and classified using \MRQ according to the cyclic equivalence class of their period.
For a fixed equivalence class the set of substrings generated by a single repeat can be represented as a 
cyclic interval, and the cardinality of a union of intervals is simple to determine;
see also \cite{Extracting_TCS}, where this approach was used to count and list squares and, in general, substrings with a given exponent 2 or more.

\subparagraph*{Acknowledgements}
I would like to thank the remaining co-authors of \cite{Babenko2016}, 
collaboration with whom on earlier results about minimal and maximal suffixes sparked some of my ideas used in this paper. 
Special acknowledgments to Paweł Gawrychowski and Tatiana Starikovskaya for numerous discussions on this subject.

\bibliographystyle{plain}
\bibliography{minsuf}

\newpage

\end{document}